\DeclareFontFamily{U}{mathx}{}
\DeclareFontShape{U}{mathx}{m}{n}{<-> mathx10}{}
\DeclareSymbolFont{mathx}{U}{mathx}{m}{n}
\DeclareMathAccent{\widecheck}{0}{mathx}{"71}
\theoremstyle{plain}
\newtheorem{lem}{Lemma}
\newtheorem{cor}[lem]{Corollary}
\newtheorem{prop}[lem]{Proposition}
\newtheorem{thm}[lem]{Theorem}
\theoremstyle{definition}
\newtheorem{remark}[lem]{Remark}
\newtheorem{defn}[lem]{Definition}
\newtheorem{assum}[lem]{Assumption}
\newcommand{\R}{\mathbb{R}}
\renewcommand{\P}{\mathbb{P}}
\newcommand{\E}{\mathbb{E}}
\newcommand{\F}{\mathcal{F}}
\newcommand{\N}{\mathbb{N}}
\numberwithin{equation}{section}
\numberwithin{lem}{section}
\renewcommand{\S}{{\mathbb S}}
\newcommand{\Bcal}{{\mathcal B}}
\newcommand{\Ccal}{{\mathcal C}}
\newcommand{\Dcal}{{\mathcal D}}
\newcommand{\Ecal}{{\mathcal E}}
\newcommand{\Fcal}{{\mathcal F}}
\newcommand{\Wcal}{{\mathcal W}}
\DeclareMathOperator{\Tr}{\mathop{tr}}
\let\div\relax
\DeclareMathOperator{\div}{\mathop{div}}
\renewenvironment{proof}[1][\proofname] {\par\pushQED{\qed}\normalfont\topsep6\p@\@plus6\p@\relax\trivlist\item[\hskip\labelsep\bfseries#1\@addpunct{.}]\ignorespaces}{\popQED\endtrivlist\@endpefalse}
\begin{document}
	
	\title{Ergodic robust maximization of asymptotic growth with stochastic factor processes\footnote{We would like to thank two anonymous referees for valuable comments, in particular for suggesting the problem formulation of Section~\ref{sec:main_extended} involving a pre-specified covariation structure between the price process and factor process.
}}
	\author{David Itkin\footnote{Department of Statistics, London School of Economics, \url{d.itkin@lse.ac.uk}}\and Benedikt Koch\footnote{Department of Statistics, Harvard University, \url{ benedikt_koch@g.harvard.edu}} \and Martin Larsson\footnote{Department of Mathematical Sciences, Carnegie Mellon University, \url{larsson@cmu.edu}. This work has been partially supported by the National Science Foundation under grant
NSF DMS-2206062} \and Josef Teichmann\footnote{Department of Mathematics, ETH Z\"urich, \url{josef.teichmann@math.ethz.ch}}}
	
		\maketitle

	\begin{abstract}
	We consider a robust asymptotic growth problem under model uncertainty in the presence of stochastic factors. We fix two inputs representing the instantaneous covariance for the asset price process $X$, which depends on an additional stochastic factor process $Y$, as well as the invariant density of $X$ together with $Y$. The stochastic factor process $Y$ has continuous trajectories but is not even required to be a semimartingale. Our setup allows for drift uncertainty in $X$ and model uncertainty for the local dynamics of $Y$. This work builds upon a recent paper of Kardaras \& Robertson \cite{kardaras2021ergodic}, where the authors consider an analogous problem, however, without the additional stochastic factor process. Under suitable, quite weak assumptions we are able to characterize the robust optimal trading strategy and the robust optimal growth rate. The optimal strategy is shown to be functionally generated and, remarkably, does not depend on the factor process $Y$. Our result provides a comprehensive answer to a question proposed by Fernholz in 2002. We also show that the optimal strategy remains optimal even in the more restricted case where $Y$ is a semimartingale and the joint covariation structure of $X$ and $Y$ is prescribed as a function of $X$ and $Y$. Our results are obtained using a combination of techniques from partial differential equations, calculus of variations, and generalized Dirichlet forms.
	\end{abstract}
	
	\paragraph{Keywords:} Robust Finance, Stochastic Portfolio Theory, Functionally Generated Portfolios, Stochastic Factors, Invariant Measure, Ergodic Process, Generalized Dirichlet Forms.
	
	\paragraph{MSC 2020 Classification:} Primary 60G44, 60J60, 91G10; Secondary 60J46.  
	
	\section{Introduction}
	In this paper we consider an investor's asymptotic growth maximization problem under model uncertainty, in the presence of stability (i.e.~ergodicity of the market dynamics) and stochastic factors affecting, e.g., volatilities and covariances. We study the quantity
	\[\sup_{\theta \in \Theta} \inf_{\P \in \Pi} g(\theta;\P),\]
	where $\Theta$ is the set of all admissible trading strategies (adapted to some filtration with respect to which $X$ is adapted), $\Pi$ is the set of admissible models and $g(\theta;\P)$ is the investor's asymptotic growth rate when they employ the strategy $\theta$ and the market dynamics is governed by the law $\P \in \Pi$. The market consists of $d$ risky assets with discounted price process $X = (X^1,\ldots,X^d)$ taking values in a set $E \subset \R^d$, as well as a risk-free asset which serves as numeraire (we shall never use it and therefore do not need to introduce it). In addition, there is a possibly observed factor process $Y = (Y^1,\ldots,Y^m)$ with values in a set $D \subset \R^m$, which can influence the dynamics of $X$ but is \emph{not} itself traded. Strategies $\theta$ do in general depend on past values of $X$ and possibly on the past values of $Y$ depending on our informational assumptions, and a model $\P$ is a joint law on path space of $(X,Y)$. Precise definitions and assumptions are given in Section~\ref{sec:setup}. Our work builds on that of \cite{kardaras2021ergodic}, which considers a similar robust growth problem in the presence of stability, but does not allow for stochastic factors.
	
	The main motivation for this line of research comes from the empirically observed stability of the \emph{ranked market weights} in US equity markets, which can be related to the existence of an invariant measure; see \cite[Chapter~5]{fernholz2002stochastic}. In this setting the numeraire is the market's total capitalization. Stochastic Portfolio Theory (SPT) is a framework designed to study this stability and exploit it for investment problems. In our setting this corresponds to $E$ being the $d$-dimensional open simplex and viewing $X$ as the market weight process. Many researchers have proposed models that produce stable market weight processes, investigated approaches to statistically estimating the capital distribution curve and studied the performance of trading strategies under this observed stability \cite{itkin2021open,ichiba2011hybrid,banner2005atlas,fernholz2005relative,campbell2021functional,cuchiero2019polynomial,ruf2019general,  cuchiero2019cover,ruf2020impact}. 
	
	Fernholz, in his celebrated book on SPT, proposes the research question of developing a theory of portfolio optimization that depends only on observable quantities \cite[Problems~3.1.7 \& 3.1.8]{fernholz2002stochastic}. Observable quantities are prices, maybe instantaneous covariance along the trajectory as well as some observable factors, which influence it, and some invariant laws. Neither instantaneous drifts (nor jump structures, which we do not consider here) are pathwise observable due to the level of noise typical of financial markets. One class of portfolios solely depending on those observables, for which performance guarantees can be deduced under model uncertainty, are \emph{functionally generated portfolios}. Indeed, an investor who trades using a functionally generated portfolio can compute their prescribed holdings by only using current market prices.  Moreover, performance guarantees for functionally generated portfolios can be obtained assuming some knowledge on instantaneous covariance and stability of the asset process \cite{fernholz2002stochastic,karatzas2017trading,itkin2020robust,larsson2021relative,fernholz2005relative}.
	
	The work of Kardaras \& Robertson \cite{kardaras2021ergodic} makes important progress towards solving the question introduced by Fernholz, in particular what is the particular role of functionally generated portfolios. They take two inputs:
	\begin{enumerate}[noitemsep]
	\item a matrix valued function $\bar c_X(x)$, which serves as the instantaneous covariance matrix for $X$,
	\item a positive function $\bar p(x)$ with $\int_E \bar p = 1$ serving as the invariant density for $X$, encoding stability.
	\end{enumerate}
	They consider the class of admissible models $\Pi$ to consist of all laws under which
	\begin{enumerate}[noitemsep]
	    \item \label{item:KR1} $X$ is a continuous semimartingale with $\bar c_X(X_t)$ as its instantaneous covariance matrix,
	    \item \label{item:KR2} $\lim_{T \to \infty} \frac{1}{T} \int_0^T h(X_t)\, dt = \int_E h\bar p$ for any $h \in L^1(\bar p)$,
	    \item \label{item:KR3} the family of laws of $X_t$, $t \geq 0$, is tight. 
	\end{enumerate} This class includes all measures with the specified covariance structure and specified stable behaviour, but the setup allows of course for some drift uncertainty. The more assets we have the richer is the class of admissible models $\Pi$. The authors are then able to show that the robust optimal strategy is functionally generated and obtain a partial differential equation (PDE) characterization for the optimal generating function. They do this in a two-step procedure. First, the asymptotic growth rate is optimized over a class of functionally generated portfolios. For this class of strategies performance guarantees can be derived under any admissible measure. Secondly, an admissible worst-case measure $\widehat \P$ is constructed under which the functionally generated portfolio from the first step is growth-optimal over \emph{all} portfolios. With those two ingredients one can fully solve the min-max problem. This result is surprising in two ways: first general utility optimization problems have optimal strategies with at least some path dependence via the wealth process (which is a state variable), and, second, it is a priori not at all clear that there is a worst case model in our class $\Pi$, where the functionally generated portfolio is overall optimal. 
	
    One striking limitation of the general setup in \cite{kardaras2021ergodic} is, however, that the instantaneous covariance matrix $c_X$ is in feedback form, i.e.~does not depend on further factors. Indeed, it is widely accepted that the volatility of asset price processes is influenced by market forces other than the current price level (see e.g.\ \cite{engle2007good}). To address this issue, we model the instantaneous covariance of $X$ via $c_X(X_t,Y_t)$, where $Y$ is a stochastic factor process as mentioned above and, similarly to \cite{kardaras2021ergodic}, $c_X(x,y)$ is a fixed matrix valued function taken as input. One can interpret $Y$ as a factor with econometric interpretation driving stochastic covariance, or as a factor modelling uncertainty of the choice of $x \mapsto c_X(x,.)$. In the main part of the paper we do not prescribe the dynamics of $Y$, but we do impose the restriction that $(X,Y)$ is jointly ergodic with a given unique limiting density $p(x,y)$. In particular, $Y$ is not required to be a semimartingale, but its ergodicity and limiting density (jointly with $X$) are known, a common assumption in econometrics. We are again able to show that the robust optimal portfolio is functionally generated, obtain a PDE characterization for it and, remarkably, show that its portfolio weights do not depend on the factor process $Y$ but only on the current prices $X$. This result provides a more complete answer to Fernholz's questions \cite[Problems~3.1.7 \& 3.1.8]{fernholz2002stochastic} by greatly generalizing the admissible class of measures to allow for stochastic covariance. Theorem~\ref{thm:main} in Section~\ref{sec:main_result} below contains the precise statement of this result.
    
    In Section~\ref{sec:main_extended} we consider the analogous problem where $Y$ is additionally assumed to be a semimartingale and the entire joint covariation matrix $c(x,y)$ of $X$ and $Y$ is additionally prescribed as an input. Remarkably, the same strategy remains robust growth-optimal in this setting. In particular, the strategy is again functionally generated, does not depend on $Y$ and additionally does not depend on the form of the off-diagonal and lower-right blocks $c_{XY}(x,y)$ and $c_Y(x,y)$ of the given matrix $c(x,y)$, which determine the behavior of $\langle X, Y \rangle$ and $\langle Y \rangle$. This lack of dependence may appear counter-intuitive or at least surprising; the reasons for it are discussed in Section~\ref{sec:discussion_Y}.
    
    From a mathematical point of view, the introduction of a stochastic covariance process introduces a delicate technical challenge. Indeed, while the construction of the worst-case measure $\widehat \P$ and verification of the ergodic property in \cite{kardaras2021ergodic} were done using the classical theory of positive harmonic functions (see \cite{pinsky1995positive} for a reference), in our case these results are no longer applicable due to a loss in regularity of the coefficients induced by the inclusion of the stochastic covariance factor process.
    
    To overcome this difficulty we modify the input matrix $c_X$ on an arbitrarily small neighbourhood near the boundary of $E$ (although in the absence of a factor process, $c_X$ is actually left unchanged). Then using a combination of ideas from elliptic PDE theory, calculus of variations and recent developments in the theory of generalized Dirichlet forms \cite{lee2020analyt,gim2018recurrence} we are able to construct an admissible worst-case measure $\widehat \P$ (see Theorem~\ref{thm:worst_case_measure_existence} below). Importantly, the robust optimal strategy and robust optimal growth rate only depend on the original inputs $c_X(x,y)$ and $p(x,y)$, not on the modified inputs. Additionally, we show in Corollary~\ref{cor:original} that in certain special cases the modifications are unnecessary. As such, we view these modifications as purely technical in nature. This point is further discussed in Section~\ref{sec:K-mod}.
    
    The layout of the paper is as follows. Section~\ref{sec:setup} introduces the mathematical framework and formulates the robust asymptotic growth problem we study. Section~\ref{sec:heuristics} contains a heuristic discussion of the main ideas to tackle the problem. Technical assumptions are introduced in Section~\ref{sec:assumptions}. Section~\ref{sec:results} contains the precise statements of all our results with Section~\ref{sec:main_result} considering the  problem with inputs $c_X$ and $p$, while Section~\ref{sec:main_extended} considers the extended problem where the entire joint covariation matrix $c$ is an input. Section~\ref{sec:discussion} contains a discussion of the results emphasizing the financial interpretation and relationship to the results in \cite{kardaras2021ergodic}. We consider several pertinent examples in Section~\ref{sec:examples}. Appendix~\ref{sec:PDE} contains results for a class of degenerate elliptic PDEs corresponding to certain variational problems, and Appendix~\ref{S_measurability} contains a measurability result for parameter dependent versions of such problems. We were unable to find a standard reference for these results and so have developed them in a general framework in the appendix. Lastly, Appendix~\ref{sec:proofs} contains the proofs of most of the results stated in Section~\ref{sec:results}. In particular, the results developed in Appendix~\ref{sec:PDE} are crucially applied in Appendix~\ref{sec:proofs} to construct the worst-case measure $\widehat \P$ described above.

\section{Setup} \label{sec:setup}

We fix integers $d,m \geq 1$ and non-empty open sets $E \subset \R^d$, $D \subset \R^m$. Set $F = E \times D$. We generically denote elements of $E$ by $x$, elements of $D$ by $y$, and write $z =(x,y)$ for elements of $F$. For a function $u$ we write $\nabla_xu$ for $(\partial_{x_1} u,\dots,\partial_{x_d}u)$, $\div_x u$ for $\sum_{i=1}^d \partial_{x_i} u$, and use $\nabla_yu$ and $\div_y u$ analogously.
	
The set $E$ serves as the state space for a $d$-dimensional asset price process $X$, while $D$ serves as the state space for an $m$-dimensional non-traded factor process $Y$. Consequently the joint $(d+m)$-dimensional process $Z =(X,Y)$ has state space $F$. We realize $Z$ as the coordinate process on the canonical space $(\Omega,\Fcal,\mathbb F)$, where $\Omega$ is the space of all $F$-valued continuous trajectories with the topology of locally uniform convergence, $\Fcal$ is the Borel $\sigma$-algebra, and $\mathbb{F} = (\F_t)_{t \geq 0}$ is the right-continuous filtration generated by $Z$. Since $X$ serves as the asset price process it will be a continuous semimartingale under all laws on $\Omega$ considered below. The factor process $Y$, however, will not necessarily be a semimartingale.

Trading strategies are modeled by $d$-dimensional predictable processes $\theta$. The investor's wealth process, assuming unit initial capital, is
\[V^\theta = \Ecal\left(\int_0^\cdot \theta_t^\top dX_t\right),\]
where $\Ecal(\cdot)$ denotes the Dol\'eans-Dade exponential. This is well-defined up to nullsets under any law $\P$ such that $X$ is a semimartingale and $\theta$ is $X$-integrable. Notice that the nullsets will depend on $\P$, that it is by no means clear that $\Pi$ is dominated, and that we are only considering strategies $\theta$ which are integrable with respect to $X$ for \emph{every} $\P$. The goal is to maximize the investor's asymptotic growth rate, which is defined as follows.

\begin{defn}[Asymptotic growth rate]
For any law $\P$ on $\Omega$ such that $X$ is a semimartingale, and any predictable $X$-integrable process $\theta$, the \emph{asymptotic growth rate} is
\[
g(\theta;\P) = \sup\left\{\gamma \in \R: \lim_{T\to \infty} \P(T^{-1}\log V^\theta_T \geq \gamma) = 1 \right\}.
\]
For a collection $\Pi$ of such laws, the \emph{robust asymptotic growth rate} of a trading strategy $\theta$, required to be $X$-integrable under every $\P \in \Pi$, is
\[
\inf_{\P \in \Pi} g(\theta;\P).
\]
\end{defn}
The robust asymptotic growth rate of $\theta$ is the worst-case rate achieved by $\theta$ across all market models in $\Pi$. The robust asymptotic growth problem is to maximize this worst-case rate. Thus the goal, akin to the one in \cite{kardaras2021ergodic}, is to study the quantity
\begin{equation} \label{eqn:lambda_def}
	\lambda_\Pi := \sup_{\theta \in \Theta}\inf_{\P \in \Pi} g(\theta;\P),
\end{equation}
where $\Theta$ is the set of all $d$-dimensional predictable processes that are $X$-integrable under every $\P \in \Pi$. Here the $\Pi$-dependence of $\Theta$ is suppressed from the notation; note however that regardless of $\Pi$, $\Theta$ always contains all predictable and locally bounded strategies.

The solution to the robust asymptotic growth problem depends on the choice of $\Pi$ in general. We now describe the principal choices of $\Pi$ appearing in our work. We take as input two functions
 \begin{equation} \label{eqn:input_functions}
 c_X:F \to \mathbb{S}^d_{++} \qquad \text{and} \qquad p:F \to (0,\infty),
 \end{equation}
 where $\mathbb{S}^d_{++}$ is the set of symmetric positive definite $d \times d$ matrices. Below, $c_X$ serves as the instantaneous covariance matrix for $X$ (which is a function of both $X$ and $Y$), and $p$ as the joint invariant density of $X$ and $Y$. We impose the following regularity assumptions on these inputs.

 \begin{assum}[Regularity assumptions]  \label{ass:regularity} 
		 For a fixed $\gamma \in (0,1]$,
		 \begin{enumerate}[noitemsep]
			\item  $D$ is a bounded convex set,
			\item \label{item:regular_2} $c_X \in W^{1,\infty}_{\mathrm{loc}}(F)$ and for every $y \in D$, $c_X(\cdot,y) \in C^{2,\gamma}(E)$,
			\item \label{item:regular_3} $p \in W^{2,\infty}_{\mathrm{loc}}(F)$ and for every $y \in D$, $p(\cdot,y) \in C^{2,\gamma}(E)$. Additionally, $\int_F p  = 1$.
		\end{enumerate}
\end{assum}

Here $W^{k,p}_{\mathrm{loc}}(F)$ is the Sobolev space of $k$-times weakly differentiable functions on $F$ whose weak derivatives up to order $k$ (including the function itself) belong to $L^p(U)$ for every set $U$ with compact closure in $F$. By Sobolev embedding, any element of $W^{k,\infty}_{\mathrm{loc}}(F)$ has a continuous version for $k \geq 1$. We always use this version, and note that continuity need not hold up to the boundary of $F$. $C^{2,\gamma}(F)$ is the set of twice differentiable functions whose second derivatives are $\gamma$-H\"older continuous. In each case the co-domain is understood from the context.

We may now define our first class of models.

\begin{defn}[First admissible class of measures] \label{def:upsilon}
	Let Assumption~\ref{ass:regularity} be satisfied. We denote by $\Pi_0$ the set of all laws $\P$ on $\Omega$ such that the following conditions hold:
	\begin{enumerate}
		\item \label{item:qv-0} $X$ is a $\P$-semimartingale with covariation process $\langle X \rangle = \int_0^\cdot c_X(Z_t)\, dt$,
		\item  \label{item:ergodic} for any locally bounded $h \in L^1(F,\mu)$, where $ \mu(dz):=p(z)dz $,
		\[\lim_{T \to \infty} \frac{1}{T}\int_0^T h(Z_t)\, dt = \int_F hp; \quad \P\text{-a.s.}, \]
		\item \label{item:tight} the family of laws under $\P$ of $X_t$, $t \geq 0$, is tight. 
	\end{enumerate}
\end{defn}

In Section~\ref{sec:results} we solve the robust asymptotic growth problem for the collection $\Pi_0$ under certain fairly implicit additional assumptions on the inputs $c_X$ and $p$; see Corollary~\ref{cor:original}. To allow for weaker and more direct assumptions we consider slightly larger collections $\Pi_K$, which we now introduce.

Given $c_X$ and $p$ as in Assumption~\ref{ass:regularity}, define the averaged instantaneous covariance function $A:E \to \mathbb{S}^d_{++}$ whose components are
\begin{equation} \label{eqn:A_def}
    A^{ij}(x) = \int_D c_X^{ij}(x,y)p(x,y)\, dy; \qquad i,j=1,\dots,d, \quad x \in E.
\end{equation}
We then consider those covariance functions $\tilde c_X : F \to \S^d_{++}$ which, like $c_X$, have $A$ as their average, and additionally coincide with $c_X$ on a compact set. More precisely, for $c_X$ and $p$ as in Assumption~\ref{ass:regularity} and a compact set $K \subset E$ (possibly empty), we define
\[
\Ccal_K = \left\{ \tilde c_X \in W^{1,\infty}_{\mathrm{loc}}(F): \tilde c_X = c_X \text{ on } K \times D \text{ and } \int_D \tilde c_X(x,y)p(x,y)\, dy = A(x) \text{ for } x \in E \right\}.
\]
We refer to an element $\tilde c_X$ of $\Ccal_K$ as a \emph{$K$-modification} of $c_X$. 

\begin{remark}\label{rem:K-empty}
As mentioned above, $K=\emptyset$ is allowed. The first condition in the definition of $\Ccal_\emptyset$ is then vacuous, and the only remaining requirement is that $\tilde c_X$ averages to $A$.
\end{remark}

\begin{defn}[Second admissible class of measures] \label{def:upsilon-2}
	Let Assumption~\ref{ass:regularity} be satisfied. For a compact set $K \subset E$ we denote by $\Pi_K$ the set of all laws $\P$ on $\Omega$ such that \ref{item:ergodic} and \ref{item:tight} of Definition~\ref{def:upsilon} hold, along with the modified condition
	\begin{enumerate}[label={\rm(\roman{enumi}')}]
		\item \label{item:qv} $X$ is a $\P$-semimartingale with covariation process $\langle X \rangle = \int_0^\cdot \tilde c_X(Z_t)\, dt$ for some $\tilde c_X \in \Ccal_K$.
	\end{enumerate}
\end{defn}

Note that $\Pi_0 \subset \Pi_K$ for any compact set $K \subset E$, and that $\Pi_0 = \bigcap_K \Pi_K$ where the intersection extends over all such compact sets. Additionally, if $K_1 \subset K_2$ then $\Pi_{K_2} \subset \Pi_{K_1}$ so that $\Pi_\emptyset$ is the largest class of measures we consider in this paper. Our main result, Theorem~\ref{thm:main}, solves the optimal robust growth problem for the enlarged collections $\Pi_K$. Remarkably, it turns out that the solution does not depend on the set $K$, and coincides with the aforementioned solution (obtained under more stringent assumptions) to the robust problem for $\Pi_0$ that is the content of Corollary~\ref{cor:original}. Furthermore, we argue in Section~\ref{sec:K-mod} that $\Pi_0$ and $\Pi_K$ become statistically indistinguishable when $K$ is chosen sufficiently large. 

\section{Heuristic argument} \label{sec:heuristics}
We first observe that there is a natural class of strategies $\theta$ that posses the \emph{growth rate invariance} property that
\[
g(\theta;\P)  \text{ is independent of } \P \in \Pi_\emptyset.
\]
Indeed, this is the case for the class
\[
\Theta_0 = \{\nabla\phi(X): \phi \in \Dcal\},
\]
where
\begin{equation} \label{eqn:Dcal} \Dcal = \left\{\phi \in C^2(E): \int_E \left|\frac{\Tr(A\, \nabla^2 e^{\phi})}{e^{\phi}}\right| \, < \infty \right\}.
\end{equation}
The strategies $\theta \in  \Theta_0$ are of the gradient form $\theta = \nabla\phi(X)$ and are known as \emph{functionally generated portfolios}. To see that any such strategy has the growth rate invariance property, apply It\^o's formula under any measure $\P \in \Pi_\emptyset$ to get
\[\log V_T^{\theta} = \phi(X_T) - \phi(X_0) - \frac{1}{2}\int_0^T \frac{\Tr(\tilde c_X(X_t,Y_t)\nabla^2 e^{\phi(X_t)})}{e^{\phi(X_t)}}\, dt,\]
where $\tilde c_X \in \Ccal_\emptyset$ is the covariance matrix of $X$ under $\P$; see Definition~\ref{def:upsilon-2}.
By tightness of the laws of $X_T$, $T\geq 0$, we have that $\phi(X_T)/T \to 0$ in probability as $T \to \infty$. Hence, by the ergodic property and the definition of $\tilde c_X \in \Ccal_\emptyset$ (see also Remark~\ref{rem:K-empty}) it follows that
\begin{equation} \label{eqn:c2_growth_rate}
    g(\theta;\P) = -\frac{1}{2}\int_F \frac{\Tr(\tilde c_X(x,y)\nabla^2 e^{\phi(x)})}{e^{\phi(x)}}p(x,y) \, dx\, dy = - \frac{1}{2}\int_E \frac{\Tr(A(x)\nabla^2 e^{\phi(x)})}{e^{\phi(x)}}\, dx.
\end{equation}

We maximize the right hand side of \eqref{eqn:c2_growth_rate} over functions $\phi \in \Dcal$. This is a non-trivial procedure due to poor compactness properties of $\Dcal$, but as in \cite{kardaras2021ergodic} we are able to show that a maximizer $\hat \phi$ exists. Let $\hat \theta  = \nabla \hat \phi(X)$ be the associated trading strategy. By the growth rate invariance property and optimality of $\hat \phi$ this yields a \emph{lower bound} on the robust growth rate $\lambda_{\Pi_K}$, as defined in \eqref{eqn:lambda_def}, associated with $\Pi_K$ for any compact set $K \subset E$. Indeed,
\begin{equation} \label{eqn:lambda_lower_bound}
    \lambda_{\Pi_K} \geq \lambda_{\Pi_\emptyset} \ge \sup_{\theta \in \Theta_0}\inf_{\P \in \Pi_\emptyset} g(\theta;\P) = \sup_{\theta \in \Theta_0} g(\theta; \P^0) = g(\hat \theta;\P^0),
\end{equation}
where $\P^0$ is an arbitrary measure in $\Pi_K$.

To obtain an \emph{upper bound} we construct a measure $\widehat \P^K \in \Pi_K$ under which $\hat \theta$ is growth optimal among \emph{all} strategies $\theta \in \Theta$. The requirement that $\hat \theta$ be growth optimal, along with Definition~\ref{def:upsilon-2}\ref{item:qv}, pins down the required dynamics of $X$ under $\widehat \P^K$. Namely,
\begin{equation} \label{eqn:X_dynamics}
dX_t = \tilde c_X(X_t,Y_t)\nabla\hat \phi(X_t)\, dt + \tilde c_X^{1/2}(X_t,Y_t)\, dW_t^X,
\end{equation}
where $\tilde c_X \in \Ccal_K$,
$\tilde c_X^{1/2}$ is a matrix square root of $\tilde c_X$, and $W^X$ is a standard $d$-dimensional Brownian motion. The dynamics of the stochastic factor $Y$ are, at the moment, unspecified, but suppose for the time being that we have specified them in such a way that $\widehat \P^K \in \Pi_K$. Then we obtain 
\begin{equation} \label{eqn:lambda_upper_bound}
    \lambda_{\Pi_K} \leq \sup_{\theta \in \Theta} g(\theta; \widehat \P^K) = g(\hat \theta;\widehat \P^K).
\end{equation}
Thus, by taking $\P^0 = \widehat\P^K$ in \eqref{eqn:lambda_lower_bound}, it follows that equality must hold in \eqref{eqn:lambda_lower_bound} and in \eqref{eqn:lambda_upper_bound}. This characterizes $\lambda_{\Pi_K}$ and establishes $\hat \theta = \nabla \hat\phi(X)$ as the robust growth-optimal strategy. Notably, neither $\hat \theta$ nor $\lambda :=\lambda_{\Pi_K}$ depends on the choice of $K$.

We view $\widehat \P^K$ as a worst-case measure. This is because one cannot outperform the robust growth-optimal strategy $\hat \theta$ under this measure. The main difficulty in constructing $\widehat \P^K$ is to specify the dynamics of $Y$ such that the ergodic property, Definition~\ref{def:upsilon}\ref{item:ergodic}, holds. Our construction of $\widehat \P^K$ requires additional assumptions on the inputs $c_X$ and $p$, which are stated as Assumption~\ref{ass:nondegeneracy} in the next section. We note that our method ensures the existence of \emph{a} worst-case measure (i.e.\ a measure in $\Pi_K$ under which $X$ has dynamics \eqref{eqn:X_dynamics}), but the worst-case measure is not unique in general. Indeed, Example~\ref{ex:beta} exhibits a situation with uncountably many worst-case measures.

\section{Nondegeneracy assumptions} \label{sec:assumptions}

To carry out the program laid out in Section~\ref{sec:heuristics} we need to construct \emph{a} worst-case measure $\widehat \P^K \in \Pi_K$. Our construction is such that under $\widehat \P^K$
\begin{itemize}[noitemsep]
    \item $X$ has dynamics given by \eqref{eqn:X_dynamics},
    \item $Y$ will be a continuous semimartingale,
    \item $X$ will have zero covariation with $Y$; that is, $d\langle X,Y\rangle_t = 0$ for every $t$. 
\end{itemize}
We stress that $(X,Y)$ only possesses these very special properties under the worst-case measure $\widehat \P^K$ we construct.  These properties are not assumed to hold for arbitrary measures in the class $\Pi_K$, compare also Subsection \ref{sec:main_extended} where additional constraints are imposed. In particular, laws where $X$ and $Y$ have nontrivial covariation are permitted and, as previously mentioned, laws where $Y$ is not a semimartingale are permissible in $\Pi_K$. Our main result in this setting is given in Section~\ref{sec:main_result}. In Section~\ref{sec:main_extended} we discuss the related problem where the joint covariation structure of $X$ and $Y$ is prescribed and we explain how a compatible worst-case measure can be constructed. 

To carry out our construction, and to even establish that the class $\Pi_K$ is nonempty, we impose the following additional assumption. 

\begin{assum} [Nondegeneracy assumptions] \label{ass:nondegeneracy}
	We assume that there exists $c_Y:F \to \mathbb{S}^m_{++}$ such that the conditions below hold. To simplify the notation set $c = \text{diag}(c_X,c_Y):F \to \mathbb{S}^{d+m}_{++}$ and define $\ell= \frac{1}{2}c^{-1}\text{div}\,c + \frac{1}{2}\nabla \log p$ where $\text{div}\, c^i = \sum_{j=1}^{d+m} \partial_{j} c^{ij}$ for $i=1,\dots,d+m$. We canonically decompose $\ell$ as $\ell = (\ell_X,\ell_Y)$. We assume that
	\begin{enumerate}
		\item \label{item:c_y_sobolev} $c_Y\in W^{2,\infty}_{\mathrm{loc}}(F)$ and $c_Y(x,\cdot)p(x,\cdot)$ is locally Lipschitz continuous for every $x \in E$.
		\item \label{item:assum_new_4} For every fixed $x \in E$, there exists a constant $k_x \geq 0$ and a concave function $\rho_x:D \to (0,\infty)$ such that $\lambda_{\mathrm{min}}(c_Y(x,y))p(x,y) = \rho_x(y)^{k_x}$.
		\item  \label{item:assum_new_3}  For every fixed $x \in E$ and every $C \in \R$, $b \in \R^d$, $M \in \R^{d \times d}$ we have 
		\begin{equation} \label{eqn:integrability_assum}
		    \int_D\frac{ \div_x(c_X\ell_X)^2  + (\ell_X^\top c_X (\nabla_x \log p + b))^2  + \Tr(c_XM)^2 + C}{\lambda_{\mathrm{min}}(c_Y)}p < \infty.
		\end{equation}
		\item \label{item:technical_1} $\int_{F} \ell^\top c \ell p < \infty$.
		\item \label{item:div_int} $\int_F \div_x(c_X\ell_Xp)\, < \infty$.
\item \label{item:test_func}
There exist functions $\varphi_n \in C_c^\infty(E)$ and $\psi_n \in C_c^\infty(D)$ satisfying $0 \leq \varphi_n, \psi_n \leq 1$, $\lim_{n \to \infty} \varphi_n = \lim_{n \to \infty} \psi_n = 1$ and 
\begin{equation} \label{eqn:test_functions}
    \lim_{n \to \infty} \int_E \nabla \varphi_n^\top A \nabla \varphi_n(x)\, dx = \lim_{n \to \infty} \int_D \nabla \psi_n^\top B\, \nabla \psi_n(y)\, dy = 0,
\end{equation}
where $A$ is given by \eqref{eqn:A_def} and $B(y) = \int_E c_Y(x,y)p(x,y)\, dx$ is assumed to be finite for almost every $y \in D$.
	\end{enumerate} 	
\end{assum}

\begin{remark}
	A candidate choice is $c_Y(x,y) = h(x)\rho^k_{\partial D}(y)/ p(x,y)I_m$ for some $k \geq 0$, where $I_m$ is the $m \times m$ identity matrix, $h$ is a positive continuous function and $\rho_{\partial D}$ is a \emph{regularized distance to the boundary} of the convex domain $D$. The latter means that $\rho_{\partial D}$ is $C^2$, concave and such that there exists a universal constant $C > 0$ with $\frac{1}{C}\mathrm{dist}(y,\partial D) \leq \rho_{\partial D}(y) \leq C\, \mathrm{dist}(y,\partial D)$ for every $y \in D$; see \cite[Theorem~1.4]{lieberman1985regular} for a construction of $\rho_{\partial D}$. In this case Assumption~\ref{ass:nondegeneracy}\ref{item:c_y_sobolev} is satisfied. Additionally,  $\lambda_{\min}(c_Y(x,y))p(x,y) = h(x)\rho^k_{\partial D}(y)$ so that Assumption~\ref{ass:nondegeneracy}\ref{item:assum_new_4} is also satisfied thanks to the concavity of $\rho_{\partial D}$. Hence, under this choice it just remains to check the integrability conditions Assumption~\ref{ass:nondegeneracy}\ref{item:assum_new_3}-\ref{item:test_func}. 
\end{remark}

Assumption~\ref{ass:nondegeneracy}\ref{item:assum_new_4} ensures that a certain weighted Poincar\'e inequality holds, which is crucial for our construction of a worst case measure $\widehat \P^K$. Condition \ref{item:assum_new_3} is used to verify that $\widehat \P^K$ satisfies the ergodic property. Conditions \ref{item:technical_1}-\ref{item:test_func} are analogues of \cite[Assumption~1.4]{kardaras2021ergodic} in our setting. The integrability bounds \ref{item:technical_1} and \ref{item:div_int} ensure that the robust optimal growth rate is finite and achieved by a strategy $\hat \theta \in \Theta$. Assumption~\ref{ass:nondegeneracy}\ref{item:test_func} is needed for well-posedness of the problem.  Indeed, in \cite{kardaras2021ergodic} it was shown that, in the one dimensional case, if \cite[Assumption~1.5(iii)]{kardaras2021ergodic} fails then the corresponding robust optimal growth problem of that paper is degenerate in the sense that either the admissible class of probability measures is empty or the robust growth rate is infinite. \cite[Assumption~1.5(iii)]{kardaras2021ergodic} can be equivalently rephrased in terms of test function conditions, similar to \eqref{eqn:test_functions} (see \cite[Section~1.6]{Fukushima2011Dirichlet}). Consequently, \ref{item:test_func} is the analogous condition in our setting. 

We will now show that Assumption~\ref{ass:nondegeneracy}\ref{item:test_func} implies that $\Pi_0$ is nonempty. The proof makes use of the theory of Dirichlet forms; see \cite{Fukushima2011Dirichlet} for an exposition of this theory. Let
\begin{equation} \label{eqn:mu_def}
d\mu(z) = p(z)\, dz
\end{equation}
and define the symmetric Dirichlet form $(\Ecal^0,D(\Ecal^0))$ as the closure on $L^2(F,\mu)$ of 
\begin{equation} \label{eqn:symmetric_dirichlet_form_def}
\Ecal^0 (u,v) := \int_F \nabla u^\top c \nabla v\, p; \qquad u,v \in C_c^\infty(F).
 \end{equation}
 By our regularity assumptions on $c$ and $p$ it follows from \cite[Theorem~1.12]{baur2013construction} that the corresponding semigroup $(T_t^0)_{t > 0}$ is strong Feller and that there exists a (a priori possibly explosive) solution $\P_z^0$ to the martingale problem corresponding to 
		\begin{equation} \label{eqn:generator_reversible}
			L^0 := \frac{1}{2}\div (c\nabla) + \frac{1}{2}\nabla \log p^\top c \nabla =  \frac{1}{2}\Tr(c\nabla^2) + \ell^\top c \nabla
		\end{equation}
for every starting point $z \in F$. From the form of the generator $L^0$ we see that $\P^0_z$ is the law of a weak solution to the SDE
\begin{equation} \label{eqn:tilde_p_dynamics}
	dZ_t = c(Z_t)\ell(Z_t)\, dt + c^{1/2}(Z_t)\, dW_t
\end{equation}
with initial condition $Z_0 = z$, where $W$ is a $(d+m)$-dimensional Brownian motion.

We will now establish ergodicity of the Dirichlet form and corresponding process. In particular, this excludes that the process explodes. First note that by \cite[Example~4.6.1]{Fukushima2011Dirichlet}, $(\Ecal^0, D(\Ecal^0))$ is irreducible, so to prove ergodicity we just have to establish recurrence. To this end define $\chi_n(x,y) = \varphi_n(x)\psi_n(y)$, where $\varphi_n,\psi_n$ are given in Assumption~\ref{ass:nondegeneracy}\ref{item:test_func}. Then $\chi_n \in C_c^\infty(F)$, $0 \leq \chi_n \leq 1$, $\lim_{n \to \infty} \chi_n = 1$ and 
\begin{equation} \label{eqn:recurrence}
\begin{split} \Ecal^0(\chi_n,\chi_n) &= \int_F \nabla \varphi_n(x)^\top c_X(x,y) \nabla \varphi_n(x) \psi_n^2(y) p(x,y)\, dx\, dy \\
& \quad + \int_F \nabla \psi_n(y)^\top c_Y(x,y) \nabla \psi_n(y) \varphi_n^2(x)p(x,y)\, dx\, dy \\
& \leq \int_E \nabla \varphi_n(x)^\top A(x) \nabla \varphi_n(x)\, dx + \int_D \nabla \psi_n(y)^\top B(y)\nabla \psi_n(y)\, dy.
\end{split} 
\end{equation} 
Hence, Assumption~\ref{ass:nondegeneracy}\ref{item:test_func} implies that $\lim_{n \to \infty} \Ecal^0(\chi_n,\chi_n) = 0$, from which we deduce that $\Ecal^0$ is recurrent, courtesy of \cite[Theorem~1.6.3]{Fukushima2011Dirichlet}. Ergodicity of the form and corresponding process now follows from \cite[Theorem 1.6.5(iii)]{Fukushima2011Dirichlet}.  From \eqref{eqn:tilde_p_dynamics} we see that $X$ has the correct volatility structure so it follows that $\P_z^0 \in \Pi_0$ for every $z \in F$.

\begin{remark} \label{rem:nonempty}
The above construction does not rely on the block diagonal form of $c = \mathrm{diag}(c_X,c_Y)$. Indeed, for a more general covariance matrix with possibly nonzero off-diagonal blocks, we can analogously define the Dirichlet form $\Ecal^0$ as in \eqref{eqn:symmetric_dirichlet_form_def} and obtain recurrence as in \eqref{eqn:recurrence} using the same test functions $\chi_n$. Indeed by positive-definiteness we have for any such matrix $c$ that 
\begin{align*} 
\frac12 \begin{pmatrix} \nabla \varphi_n(x) \psi_n(y) \\  \varphi_n(x) \nabla\psi_n(y) \end{pmatrix}^\top &   c(x,y) \begin{pmatrix} \nabla \varphi_n(x) \psi_n(y) \\  \varphi_n(x) \nabla\psi_n(y) \end{pmatrix} \\
& \leq \nabla \varphi_n(x)^\top c_X(x,y) \nabla \varphi_n(x) \psi_n^2(y) + \nabla \psi_n(y)^\top c_Y(x,y) \nabla \psi_n(y) \varphi_n^2(x)
\end{align*}
from which we obtain \eqref{eqn:recurrence} and then the admissibility of the measure corresponding to \eqref{eqn:tilde_p_dynamics}.
\end{remark}

\section{Results} \label{sec:results}
\subsection{Main result} \label{sec:main_result}
We are now ready to state our main results.  Recall $A$, $\lambda_{\Pi_K}$ and $\Dcal$ defined in \eqref{eqn:A_def},  \eqref{eqn:lambda_def} and \eqref{eqn:Dcal} respectively.
\begin{thm}[Main result] \label{thm:main}
	Let Assumptions~\ref{ass:regularity} and \ref{ass:nondegeneracy} be satisfied. Then there exists a unique (up to additive constant) $\hat \phi$ satisfying 
	\begin{equation} \label{eqn:argmin}  \hat \phi = \underset{\phi \in \Dcal}{\arg\min}\, \frac{1}{2}\int_E \frac{\Tr(A(x)\nabla^2 e^{\phi(x)})}{e^{\phi(x)}} \, dx.
	\end{equation} 
Define
	\begin{equation} \label{eqn:lambda_int}
	    \lambda = \frac{1}{2}\int_E \nabla \hat \phi^\top A\nabla \hat \phi 
	\end{equation} and the strategy
	\begin{equation} \label{eqn:hat_theta}
	    \hat \theta_t := \nabla \hat \phi(X_t); \quad t \geq 0.
	\end{equation}
Then for every compact set $K \subset E$ we have that $\lambda_{\Pi_K} = \lambda$. Moreover, $g(\hat \theta;\P) = \lambda$ for every $\P\in \Pi_K$, so that $\hat \theta$ is robust growth-optimal.
\end{thm} 
As discussed in Section~\ref{sec:heuristics}, proving Theorem~\ref{thm:main} includes two main parts: (i) establishing the existence of $\hat \phi$ and (ii) constructing a worst case measure $\widehat \P^K \in \Pi_K$. To establish the existence of $\hat \phi$ note that whenever $\phi \in \Dcal$ is compactly supported, integration by parts gives the identity
\[ \frac{1}{2}\int_E \frac{\Tr(A\nabla^2 e^{\phi})}{e^{\phi}} = \frac{1}{2}\int_E (\frac12A^{-1}\div A-\nabla \phi)^\top A( \frac12A^{-1} \div A - \nabla \phi) - \frac{1}{8} \int_E \div A^\top A^{-1}\div A. \]
The expression on the right hand side is more amenable to analysis than the one on the left hand side. Hence, as in \cite{kardaras2021ergodic}, we minimize the expression on the right hand side and show that the optimizer $\hat \phi$ satisfies \eqref{eqn:argmin} as well.

\begin{prop}[Existence of optimizer]
\label{prop:hat_phi}
	There exists a unique (up to additive constant) minimizer $\hat \phi$ to the variational problem \begin{equation} \label{eqn:unconstrained_min_problem}
		\inf_{\phi \in W^{1,2}_{\mathrm{loc}}(E)}\int_E ( \frac12A^{-1}\div A - \nabla \phi)^\top A( \frac12A^{-1} \div A - \nabla \phi).
	\end{equation}
	Moreover $\hat \phi$ belongs to $C^{2,\gamma'}(E)$ for some $0 < \gamma' \leq \gamma$ and satisfies the Euler--Lagrange equation
	\begin{equation} \label{eqn:phi_pde}
		\div(A(x)\nabla \hat \phi(x) - \frac12\div A(x)) = 0; \qquad x \in E.
	\end{equation}
 Additionally, $\hat \phi$ belongs to $\Dcal$ and satisfies \eqref{eqn:argmin}.
\end{prop}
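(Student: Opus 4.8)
Throughout write $v:=\tfrac12 A^{-1}\div A$ and regard \eqref{eqn:unconstrained_min_problem} as an orthogonal projection in a weighted $L^2$ space of vector fields. Equip $L^2_A:=\{g:E\to\R^d \text{ measurable}: \int_E g^\top A g<\infty\}$ with the inner product $\langle g,h\rangle_A:=\int_E g^\top A h\,dx$; since $A$ is $\mathbb{S}^d_{++}$-valued and continuous it is locally uniformly elliptic, so $L^2_A$ is a Hilbert space and $L^2_A$-convergence implies $L^2_{\mathrm{loc}}$-convergence. The first step is to check $v\in L^2_A$. Because $c=\mathrm{diag}(c_X,c_Y)$ we have $\ell_X=\tfrac12 c_X^{-1}\div_x c_X+\tfrac12\nabla_x\log p$, hence $(\div A)^i=\sum_j\partial_{x_j}\!\int_D c_X^{ij}p\,dy=\int_D\div_x(c_X^i p)\,dy=2\int_D(c_X\ell_X p)^i\,dy$, i.e.\ $\div A=2\int_D c_X\ell_X p\,dy$. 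An operator Cauchy--Schwarz inequality gives, pointwise, $(\div A)^\top A^{-1}\div A\le 4\int_D\ell_X^\top c_X\ell_X p\,dy$, so $\|v\|_{L^2_A}^2=\tfrac14\int_E(\div A)^\top A^{-1}\div A\le\int_F\ell_X^\top c_X\ell_X p\le\int_F\ell^\top c\ell\,p<\infty$ by Assumption~\ref{ass:nondegeneracy}\ref{item:technical_1}.

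Let $\Vcal\subset L^2_A$ be the set of $g\in L^2_A$ that are curl-free in the distributional sense; equivalently $g=\nabla\phi$ for some $\phi\in W^{1,2}_{\mathrm{loc}}(E)$, unique up to an additive constant since $E$ is connected. The curl-free condition is $L^2_{\mathrm{loc}}$-closed, hence closed in $L^2_A$, so $\Vcal$ is a closed subspace; moreover $\int_E(v-\nabla\phi)^\top A(v-\nabla\phi)<\infty$ exactly when $\nabla\phi\in L^2_A$ (using $v\in L^2_A$), so \eqref{eqn:unconstrained_min_problem} is precisely the problem of minimizing $\|v-g\|_{L^2_A}^2$ over $g\in\Vcal$. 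Its solution is the orthogonal projection $\hat g:=P_{\Vcal}v$, which exists and is unique; writing $\hat g=\nabla\hat\phi$ yields existence and uniqueness (up to a constant) of $\hat\phi$. The orthogonality relation $\langle v-\nabla\hat\phi,\nabla\psi\rangle_A=0$ for all $\psi\in C_c^\infty(E)$ is exactly $\int_E(\tfrac12\div A-A\nabla\hat\phi)^\top\nabla\psi=0$, i.e.\ $\hat\phi$ solves \eqref{eqn:phi_pde} weakly. Interior elliptic regularity for this linear divergence-form equation (De Giorgi--Nash--Moser for an initial local Hölder bound, then Schauder estimates for the bootstrap) gives $\hat\phi\in C^{2,\gamma'}_{\mathrm{loc}}(E)$; the exponent $\gamma'\le\gamma$ rather than $\gamma'=\gamma$ reflects the fact that the averaged coefficient $A$ in \eqref{eqn:A_def} inherits the Hölder regularity of $c_X$ and $p$ only after the $y$-integration, with a possible loss.

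It remains to show $\hat\phi\in\Dcal$ and \eqref{eqn:argmin}. For the former, decompose $\tfrac{\Tr(A\nabla^2 e^{\hat\phi})}{e^{\hat\phi}}=\Tr(A\nabla^2\hat\phi)+\nabla\hat\phi^\top A\nabla\hat\phi$: the second summand lies in $L^1(E)$ because $\nabla\hat\phi\in L^2_A$, and the first equals $\div(A\nabla\hat\phi)-(\div A)^\top\nabla\hat\phi$, where $\div(A\nabla\hat\phi)=\tfrac12\div(\div A)=\int_D\div_x(c_X\ell_X p)\,dy\in L^1(E)$ by Assumption~\ref{ass:nondegeneracy}\ref{item:div_int}, and $(\div A)^\top\nabla\hat\phi\in L^1(E)$ by Cauchy--Schwarz in $L^2_A$ (both factors having finite $L^2_A$-type norms). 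Hence $\hat\phi\in C^2(E)$ with $\int_E|\Tr(A\nabla^2 e^{\hat\phi})/e^{\hat\phi}|<\infty$, i.e.\ $\hat\phi\in\Dcal$. For \eqref{eqn:argmin} I would extend the integration-by-parts identity
\begin{equation}
\frac12\int_E\frac{\Tr(A\nabla^2 e^{\phi})}{e^{\phi}}=\frac12\int_E(\nabla\phi-v)^\top A(\nabla\phi-v)-\frac18\int_E(\div A)^\top A^{-1}\div A,
\end{equation}
known for compactly supported $\phi\in\Dcal$, to every $\phi\in\Dcal$. Inserting the cutoffs $\varphi_n$ from Assumption~\ref{ass:nondegeneracy}\ref{item:test_func} and integrating by parts gives $\int_E\varphi_n^2\tfrac{\Tr(A\nabla^2 e^\phi)}{e^\phi}=-2\int_E\varphi_n\nabla\varphi_n^\top A\nabla\phi+\int_E\varphi_n^2(\nabla\phi-v)^\top A(\nabla\phi-v)-\int_E\varphi_n^2 v^\top A v$; bounding the first term on the right by $2(\int_E\nabla\varphi_n^\top A\nabla\varphi_n)^{1/2}(\int_E\varphi_n^2\nabla\phi^\top A\nabla\phi)^{1/2}$ and using $\nabla\phi^\top A\nabla\phi\le 2(\nabla\phi-v)^\top A(\nabla\phi-v)+2v^\top A v$ together with $\int_E\nabla\varphi_n^\top A\nabla\varphi_n\to0$ yields a self-improving estimate that forces $\int_E(\nabla\phi-v)^\top A(\nabla\phi-v)<\infty$ (in particular $\nabla\phi\in L^2_A$ for every $\phi\in\Dcal$), and, letting $n\to\infty$ via dominated convergence, the displayed identity. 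Given the identity, minimizing its left-hand side over $\Dcal$ is equivalent to minimizing $\|\nabla\phi-v\|_{L^2_A}^2$ over $\{\nabla\phi:\phi\in\Dcal\}\subset\Vcal$; since $\hat\phi\in\Dcal$ and $\hat\phi$ already minimizes this over all of $\Vcal$, it is the minimizer over $\Dcal$ as well, unique up to an additive constant, which is \eqref{eqn:argmin}.

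I expect the extension of the integration-by-parts identity in the last paragraph to be the main obstacle: this is exactly where Assumption~\ref{ass:nondegeneracy}\ref{item:test_func} enters, and the self-improving estimate — which simultaneously establishes that $\Dcal$ is contained in the weighted Sobolev space on which the projection lives — requires care. A secondary but nontrivial technical point is establishing that $A$ is continuous, locally uniformly elliptic, and regular enough on compact subsets of $E$ for the projection and Schauder arguments to apply, and that the exchanges of $\div_x$ with $\int_D(\cdot)\,dy$ used above are licit; these follow from Assumption~\ref{ass:regularity} together with the integrability bounds in Assumption~\ref{ass:nondegeneracy}, but need to be verified carefully.
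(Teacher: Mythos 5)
Your argument is essentially correct and reaches the same conclusions, but it takes a genuinely different, more self-contained route than the paper. For existence, uniqueness (up to a constant), the PDE \eqref{eqn:phi_pde}, and the $C^{2,\gamma'}$ regularity, the paper simply establishes $\int_E \div A^\top A^{-1}\div A < \infty$ (by the same Cauchy--Schwarz/$H(\Psi)$ computation you carry out) and then invokes \cite[Lemma~A.1]{kardaras2021ergodic} wholesale; you instead prove these directly by recasting \eqref{eqn:unconstrained_min_problem} as an orthogonal projection in $L^2_A$ onto the space of $W^{1,2}_{\mathrm{loc}}$-gradients with $L^2_A$-gradient, then running De Giorgi--Nash--Moser and Schauder on the resulting divergence-form equation. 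For the final claim that $\hat\phi$ solves \eqref{eqn:argmin}, the paper establishes the identity linking the two objectives \emph{probabilistically}, by computing the asymptotic growth rate $g(\theta^\phi;\P^0)$ under the reversible reference measure $\P^0$ in two different ways (using the growth-rate invariance on one side and \cite[Lemma~3.4]{itkin2020robust} on the other) and then equating; you establish the same identity \emph{analytically}, by inserting the cutoffs $\varphi_n$ from Assumption~\ref{ass:nondegeneracy}\ref{item:test_func} and running a self-improving estimate, which in passing also shows $\{\nabla\phi:\phi\in\Dcal\}\subset L^2_A$ and $\hat\phi\in\Dcal$ explicitly (the paper leaves the latter implicit in the citation). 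Your approach buys transparency and self-containedness at the cost of having to verify regularity of $A$ directly; the paper buys brevity by leaning on prior results. One small inaccuracy: you characterize the closed subspace $\Vcal$ as the curl-free fields in $L^2_A$, ``equivalently'' gradients — this equivalence can fail if $E$ is not simply connected, in which case projecting onto the curl-free fields could land outside the gradients. It does not actually matter, because the set of gradients itself is already closed in $L^2_A$ (via $L^2_{\mathrm{loc}}$ convergence, Poincar\'e on an exhaustion by connected precompact sets, and a diagonal argument for the constants), so you should simply project onto that set directly and drop the curl-free description.
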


\begin{remark}
Notice that the result also implies stability: assume that all quantities depend on a parameter $ \epsilon \in [0,1]$ and all assumptions holds for each $\epsilon$ in a continuous way (i.e.~all stated quantities and respective derivatives depend continuously on $\epsilon$). Then also $\hat \phi$ depends continuously on $\epsilon$ by elliptic regularity as well as $\lambda$ and $\hat \theta$ taking the results of Theorem \ref{T_pde_no_param} verbatim to the parameter dependent case.
\end{remark}

For the construction of the worst case measure $\widehat \P^K$, the dynamics of $X$ are pinned down by \eqref{eqn:X_dynamics}, but the dynamics of $Y$ need to be carefully selected. We take $c_Y$ from Assumption~\ref{ass:nondegeneracy} to be the instantaneous covariance matrix of $Y$ under $\widehat \P^K$, as is the case under $\P^0$. To select the drift of $Y$ we identify a function $\hat v$, which will be used to specify the dynamics. To state the next lemma we introduce some notation. Whenever $\tilde c_X \in \Ccal_K$ is given we write $\tilde \ell_X$ for $\frac{1}{2}\tilde c_X^{-1}\div_x \tilde c_X + \frac{1}{2} \nabla_x \log p$.
\begin{lem}[Existence of $\tilde c_X$ and $\hat v$]
\label{lem:hat_v}
For every compact $K \subset E$ there exists a $\tilde c_X \in \Ccal_K$ and a $\hat v: F \to \R$ satisfying the following properties:
\begin{enumerate}
    \item \label{item:hat_v_regularity}
    For a.e.\ $x \in E$, $\hat v(x,\cdot) \in W^{1,2}_{\mathrm{loc}}(D)$ and $\nabla_y \hat v \in L^q_{\mathrm{loc}}(F)$ for every $q \in [2,\infty)$. 
    \item \label{item:hat_v_pde} $\hat v$ is a weak solution to the PDE
   \[
        \div_y(c_Y(\ell_Y - \nabla_y \hat v)p) = -\div_x(\tilde c_X(\tilde \ell_X - \nabla\hat \phi)p) \quad \text{in } F,
\]
that is, 
    \begin{equation} \label{eqn:v_weak_pde}
    \int_F \div_x(\tilde c_X(\tilde \ell_X - \nabla \hat \phi)p)\, \psi - (\ell_Y - \nabla_y \hat v)c_Y \nabla_y \psi\, p = 0 \quad \text{for every }\psi \in C_c^1(F).
    \end{equation}
    \item \label{item:hat_v_bound} We have the inequality 
\begin{equation} \label{eqn:y_l2_bound}
\int_F \nabla_y \hat v^\top c_Y \nabla_y \hat v\, p \leq C \left(\int_F \frac{(\div_x(\tilde c_X (\tilde \ell_X - \nabla \hat \phi)p))^2}{\lambda_{\min}(c_Y)p} + \int_F \ell_Y^\top c_Y \ell_Y\, p \right) < \infty
\end{equation}
for a constant $C$ which only depends on the diameter of $D$.
\end{enumerate}
\end{lem}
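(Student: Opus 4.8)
The plan is to solve, for Lebesgue-almost every fixed $x \in E$, a weighted Neumann-type elliptic problem in the $y$-variable on $D$, and then to assemble the fibrewise solutions into a jointly measurable $\hat v$ on $F$. Write $g_x(y) := \div_x\big(\tilde c_X(\tilde \ell_X - \nabla\hat\phi)p\big)(x,y)$. Testing \eqref{eqn:v_weak_pde} against product functions $\psi(x,y)=\alpha(x)\beta(y)$ and using Fubini shows that \eqref{eqn:v_weak_pde} is equivalent to the following family of fibrewise problems: for a.e.\ $x \in E$, the function $\hat v(x,\cdot)$ is a weak solution on $D$, against test functions $\beta \in C_c^1(D)$, of
\[
\div_y\big(c_Y(x,\cdot)\,p(x,\cdot)\,\nabla_y\hat v(x,\cdot)\big) = \div_y\big(c_Y(x,\cdot)\,\ell_Y(x,\cdot)\,p(x,\cdot)\big) + g_x ,
\]
with the natural (degenerate-weight) boundary condition induced by the test class $C_c^1(D)$. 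Since $c$ is block diagonal one has $c_Y\ell_Yp = \tfrac12\div_y(c_Yp)$, so the first right-hand term is $\tfrac12\div_y\div_y(c_Yp)\in L^\infty_{\mathrm{loc}}(F)$ by Assumption~\ref{ass:nondegeneracy}\ref{item:c_y_sobolev}. The fibrewise problem is solvable only if $g_x$ integrates to zero over $D$, and this is where $\hat\phi$ enters: using $\tilde c_X\tilde\ell_X=\tfrac12\div_x\tilde c_X+\tfrac12\tilde c_X\nabla_x\log p$ we get $\tilde c_X(\tilde\ell_X-\nabla\hat\phi)p=\tfrac12\div_x(\tilde c_Xp)-\tilde c_X\nabla\hat\phi\,p$, and since $\tilde c_X\in\Ccal$ satisfies $\int_D \tilde c_X(x,\cdot)p(x,\cdot)\,dy = A(x)$,
\[
\int_D g_x\,dy = \div_x\!\int_D \tilde c_X(\tilde\ell_X-\nabla\hat\phi)p\,dy = \div_x\!\Big(\tfrac12\div_x A - A\nabla\hat\phi\Big) = 0
\]
by the Euler--Lagrange equation \eqref{eqn:phi_pde} for $\hat\phi$.

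Next I fix the $K$-modification. What is needed is a $\tilde c_X\in\Ccal_K$ for which
\[
I_K(\tilde c_X) := \int_F \frac{g_x(y)^2}{\lambda_{\min}(c_Y(x,y))\,p(x,y)}\,dx\,dy < \infty ,
\]
which is the first term on the right of \eqref{eqn:y_l2_bound}. On $K$ we have $\tilde c_X=c_X$ and $\tilde\ell_X=\ell_X$; expanding $g_x=\div_x\big(c_X(\ell_X-\nabla\hat\phi)p\big)$ and collecting terms writes it, fibrewise in $x$, as a finite sum of expressions of the form $\div_x(c_X\ell_X)\,p$, $\big(\ell_X^\top c_X(\nabla_x\log p + b)\big)p$, $\Tr(c_XM)\,p$ and $(\mathrm{const})\,p$, where $b\in\R^d$ and $M\in\R^{d\times d}$ are determined by $\nabla\hat\phi(x)$ and $\nabla^2\hat\phi(x)$ (well-defined since $\hat\phi\in C^{2,\gamma'}(E)$ by Proposition~\ref{prop:hat_phi}). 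Squaring, dividing by $\lambda_{\min}(c_Y)p$ and using $(a_1+\cdots+a_n)^2\le n(a_1^2+\cdots+a_n^2)$, the contribution of $K$ to $I_K(\tilde c_X)$ is controlled fibrewise by Assumption~\ref{ass:nondegeneracy}\ref{item:assum_new_3}. On $E\setminus K$ we set $\tilde c_X$ equal to an explicit modification supported in an arbitrarily thin collar near $\partial E$, chosen so that it coincides with $c_X$ on $K$, preserves the average $\int_D \tilde c_X(x,\cdot)p(x,\cdot)\,dy = A(x)$, retains the fibrewise regularity $\tilde c_X(\cdot,y)\in C^{2,\gamma}(E)$, and is $y$-independent outside a slightly larger compact set, so that there its contribution to $I_K(\tilde c_X)$ reduces to elementary estimates. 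This produces the desired $\tilde c_X\in\Ccal_K$. \emph{This step --- building the $K$-modification with all of these properties simultaneously --- is the main obstacle}; it is precisely the technical device flagged in the introduction and does not arise in the absence of the factor process.

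With $\tilde c_X$ fixed, I construct $\hat v(x,\cdot)$ by Riesz representation in a weighted energy space. For fixed $x$ let $H_x$ be the completion of $C_c^\infty(D)$ under the norm $\|\beta\|_{H_x}^2 = \int_D \nabla_y\beta^\top c_Y(x,\cdot)p(x,\cdot)\nabla_y\beta\,dy$ (a norm on $C_c^\infty(D)$ since $D$ is connected), so the bilinear form $a_x(u,v)=\int_D \nabla_yu^\top c_Yp\,\nabla_yv\,dy$ is its inner product, hence coercive; $H_x$ embeds into $W^{1,2}_{\mathrm{loc}}(D)$ because $c_Yp$ is bounded below on compact subsets of $D$. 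The functional $\Lambda_x(\beta)=\int_D\nabla_y\beta^\top c_Yp\,\ell_Y\,dy - \int_D g_x\beta\,dy$ is bounded on $H_x$ for a.e.\ $x$: the first term is at most $\big(\int_D\ell_Y^\top c_Y\ell_Y\,p\big)^{1/2}\|\beta\|_{H_x}$, finite a.e.\ by Fubini and Assumption~\ref{ass:nondegeneracy}\ref{item:technical_1}; and since $\int_D g_x\,dy=0$ we may subtract from $\beta$ its mean $\bar\beta$ with respect to $w_x\,dy/\!\int w_x\,dy$, where $w_x:=\lambda_{\min}(c_Y(x,\cdot))p(x,\cdot)=\rho_x^{k_x}$, whence, by the weighted Poincar\'e inequality for the concave weight $\rho_x$ on the convex set $D$ (valid by Assumption~\ref{ass:nondegeneracy}\ref{item:assum_new_4}, with constant depending only on $\mathrm{diam}(D)$) together with $w_xI\le c_Yp$,
\[
\Big|\int_D g_x\beta\,dy\Big| = \Big|\int_D g_x(\beta-\bar\beta)\,dy\Big| \le \Big(\int_D \tfrac{g_x^2}{w_x}\,dy\Big)^{1/2}\Big(\int_D (\beta-\bar\beta)^2 w_x\,dy\Big)^{1/2} \le C^{1/2}\Big(\int_D \tfrac{g_x^2}{w_x}\,dy\Big)^{1/2}\|\beta\|_{H_x} ,
\]
with $\int_D g_x^2/w_x\,dy<\infty$ a.e.\ by $I_K(\tilde c_X)<\infty$ and Fubini. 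Riesz representation gives a unique $\hat v(x,\cdot)\in H_x$ with $a_x(\hat v(x,\cdot),\beta)=\Lambda_x(\beta)$ for all $\beta$, which is the fibrewise form of \eqref{eqn:v_weak_pde}; normalising by $\int_D\hat v(x,\cdot)w_x\,dy=0$ makes $x\mapsto\hat v(x,\cdot)$ jointly measurable, since the data depend measurably on $x$ and the solution map is continuous. Testing against $\hat v(x,\cdot)$ and using Young's inequality gives $\int_D\nabla_y\hat v^\top c_Y\nabla_y\hat v\,p\,dy \le C\big(\int_D g_x^2/(\lambda_{\min}(c_Y)p)\,dy + \int_D\ell_Y^\top c_Y\ell_Y\,p\,dy\big)$; integrating over $x$ and invoking $I_K(\tilde c_X)<\infty$ and Assumption~\ref{ass:nondegeneracy}\ref{item:technical_1} (note $\int_F\ell_Y^\top c_Y\ell_Y\,p\le\int_F\ell^\top c\ell\,p<\infty$) yields \ref{item:hat_v_bound}. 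Finally, for the higher regularity \ref{item:hat_v_regularity}: on any $D'\Subset D$ the coefficient $c_Yp$ is bounded above and below and lies in $W^{2,\infty}(D')$, hence Lipschitz, while the right-hand side $g_x+\tfrac12\div_y\div_y(c_Yp)$ belongs to $L^q(D')$ for every $q<\infty$ (the second term is bounded, and $g_x$ is locally bounded using the fibrewise $C^{2,\gamma}$-regularity of $c_X(\cdot,y)$, $p(\cdot,y)$ and $\hat\phi$), so interior $W^{2,q}$ estimates give $\hat v(x,\cdot)\in W^{2,q}_{\mathrm{loc}}(D)$, hence $\nabla_y\hat v(x,\cdot)\in L^q_{\mathrm{loc}}(D)$; integrating the a priori bounds over compact $x$-subsets of $E$ gives $\nabla_y\hat v\in L^q_{\mathrm{loc}}(F;\R^m)$ for every $q\in[2,\infty)$. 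The full identity \eqref{eqn:v_weak_pde} for arbitrary $\psi\in C_c^1(F)$ then follows from the fibrewise identity by Fubini and density of finite sums $\sum_i\alpha_i(x)\beta_i(y)$ in $C_c^1(F)$.
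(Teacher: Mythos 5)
Your overall structure matches the paper's: pass to the fibrewise weighted Neumann problem in $y$, use $\int_D g_x\,dy = 0$ (coming from the Euler--Lagrange equation for $\hat\phi$ together with $\tilde c_X \in \Ccal$), solve by the weighted Poincar\'e inequality for the concave weight $w_x = \lambda_{\min}(c_Y(x,\cdot))p(x,\cdot)$, and then bootstrap to $W^{2,q}$ interior regularity and integrate over $x$. Those pieces are sound and are essentially what Theorem~\ref{T_pde_no_param} and Theorem~\ref{T_regularity_eq} package for you.

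However, there is a genuine gap at exactly the point you flag as ``the main obstacle,'' and the way you sketch around it would not work. You propose a $K$-modification that makes $\tilde c_X$ itself $y$-independent near $\partial E$, and then assert that the tail contribution to $I_K(\tilde c_X) = \int_F g_x^2/(\lambda_{\min}(c_Y)p)$ ``reduces to elementary estimates.'' If $\tilde c_X(x,y) = \bar c_X(x)$ near $\partial E$, then $\tilde c_X p$ still depends on $y$, so $g_x = \div_x(\tilde c_X(\tilde\ell_X - \nabla\hat\phi)p)$ does not vanish; its square divided by $\lambda_{\min}(c_Y)p$ and integrated over a neighbourhood of $\partial E \times D$ is not controlled by any assumption in the paper, and there is no reason for it to be finite. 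What is needed --- and what the paper actually does --- is to make $\tilde c_X p$, not $\tilde c_X$, $y$-independent outside a compact collar: with $\eta\in C_c^\infty(E)$ equal to $1$ on $K$ and $0$ off an open $V\Subset E$, the paper sets
\[
\tilde c_X(x,y) = \eta(x)c_X(x,y) + \frac{1-\eta(x)}{p(x,y)}\,\frac{A(x)}{|D|},
\]
so that $\tilde c_X p \equiv A/|D|$ for $x \in E\setminus V$. Then the identity $\tilde c_X\tilde\ell_X p = \tfrac12\div_x(\tilde c_X p)$ together with the Euler--Lagrange equation $\div(\tfrac12\div A - A\nabla\hat\phi) = 0$ forces $g_x \equiv 0$ for $x\in E\setminus V$. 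The entire boundary contribution to $I_K(\tilde c_X)$ is therefore identically zero, not merely ``elementary,'' and finiteness reduces to the compact region $\overline V$ where continuity suffices. This is the key device, and the lemma cannot be closed without it (or something equivalent). A secondary, smaller omission: when you integrate the fibrewise interior $W^{2,q}$ estimates over a compact $x$-set you need the constants to be uniform in $x$; this is true because the coefficients $c_Yp$ have a uniform ellipticity bound and uniform modulus of continuity on $U_E\times U_{D,\varepsilon}$, but it must be said --- the paper tracks this explicitly via the $\kappa$-bounds of Theorem~\ref{T_regularity_eq}.
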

Now we have all the ingredients to define a worst case measure.
\begin{thm}[Worst-case measure] \label{thm:worst_case_measure_existence}
	For every compact set $K \subset E$ and every $(x,y) \in F$ there exists a measure $\widehat \P^K_{(x,y)}$ on $(\Omega,\F)$ which is the law of a weak solution to the stochastic differential equation 
	\begin{equation} \label{eqn:worst_case_dynamics}
		\begin{split}
			&dX_t  = \tilde c_X(X_t,Y_t)\nabla\hat \phi(X_t)\, dt + \tilde c_X^{1/2}(X_t,Y_t)\, dW^X_t \\
			&dY_t  = c_Y(X_t,Y_t)\nabla_y\hat v(X_t,Y_t)\, dt + c_Y^{1/2}(X_t,Y_t)\, dW^Y_t
		\end{split}
	\end{equation}
 and satisfies $\widehat \P^K_{(x,y)}(X_0 = x, Y_0 =y) = 1$.
Here $W := (W^X,W^Y)$ is a standard $(d+m)$-dimensional Brownian motion, $c_Y$ is from Assumption~\ref{ass:nondegeneracy}, $\hat \phi$ is the optimizer from Proposition~\ref{prop:hat_phi} and $\tilde c_X$, $\hat v$ are given by Lemma~\ref{lem:hat_v}.

We additionally have that $\mu$, given by \eqref{eqn:mu_def}, is an invariant measure for $(X,Y)$ and for every locally bounded $h \in L^1(F,\mu)$,
\begin{equation} \label{eqn:ergodic_property}
	\lim_{T \to \infty} \frac{1}{T}\int_0^T h(X_t,Y_t)\, dt = \int_F hp; \quad \widehat \P^K_{(x,y)}\text{-a.s.}
\end{equation}
Thus the laws of $X_t$, $t \geq 0$, are tight under $\widehat \P^K_{(x,y)}$ and we have $\widehat \P^K_{(x,y)} \in \Pi_K$ for every $(x,y) \in F$.
\end{thm}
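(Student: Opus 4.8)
The plan is to realize the stochastic differential equation~\eqref{eqn:worst_case_dynamics} as the martingale problem associated with a \emph{non-symmetric} generalized Dirichlet form on $L^2(F,\mu)$ whose symmetric part is the reversible form built from $\tilde c := \mathrm{diag}(\tilde c_X,c_Y)$ and $p$, perturbed by a drift that is \emph{divergence-free with respect to $\mu$}. Writing the generator of~\eqref{eqn:worst_case_dynamics} as
\[
\widehat L f = \frac{1}{2}\Tr(\tilde c_X\nabla_x^2 f) + (\tilde c_X\nabla\hat\phi)^\top\nabla_x f + \frac{1}{2}\Tr(c_Y\nabla_y^2 f) + (c_Y\nabla_y\hat v)^\top\nabla_y f,
\]
I would decompose $\widehat L = \widetilde L^0 + \beta^\top\nabla$, where $\widetilde L^0 = \frac{1}{2}\div(\tilde c\nabla) + (\frac{1}{2}\tilde c^{-1}\div\tilde c + \frac{1}{2}\nabla\log p)^\top\tilde c\nabla$ is $\mu$-symmetric (its $X$-part uses $\tilde\ell_X$, its $Y$-part uses $\ell_Y$) and $\beta = (\tilde c_X(\nabla\hat\phi-\tilde\ell_X),\, c_Y(\nabla_y\hat v-\ell_Y))$. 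The crucial observation is that the weak equation~\eqref{eqn:v_weak_pde} of Lemma~\ref{lem:hat_v}\ref{item:hat_v_pde} says exactly $\int_F \beta^\top\nabla\psi\,p = 0$ for all $\psi\in C_c^1(F)$, i.e.\ $\div(\beta p)=0$ distributionally; this is the Fokker--Planck stationarity identity that renders $\mu$ invariant. The weighted $L^2$-bounds on $\beta$ needed to run the construction follow from $\int_F\nabla_y\hat v^\top c_Y\nabla_y\hat v\,p<\infty$ (Lemma~\ref{lem:hat_v}\ref{item:hat_v_bound}), $\int_F\ell^\top c\ell\,p<\infty$ (Assumption~\ref{ass:nondegeneracy}\ref{item:technical_1}) and $\int_E\nabla\hat\phi^\top A\nabla\hat\phi = 2\lambda<\infty$, while the local regularity is supplied by $\nabla_y\hat v\in L^q_{\mathrm{loc}}(F;\R^m)$ for all $q<\infty$ (Lemma~\ref{lem:hat_v}\ref{item:hat_v_regularity}) together with $\tilde c_X\in W^{1,\infty}_{\mathrm{loc}}$ and $c_Y\in W^{2,\infty}_{\mathrm{loc}}$.

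With this structure in hand I would invoke the analytic construction theory for (generalized) Dirichlet forms of ``$\mu$-symmetric diffusion $+$ $\mu$-divergence-free drift'' type --- e.g.\ \cite{lee2020analyt,baur2013construction} --- to obtain, for every $z\in F$, an (a priori possibly explosive) diffusion $\widehat\P^K_z$ started at $z$ with an associated strong Feller semigroup, which is a weak solution of~\eqref{eqn:worst_case_dynamics}; identifying the abstract process with the SDE is a Fukushima-type decomposition argument. Invariance of $\mu$ is then immediate: for $f\in C_c^\infty(F)$ one has $\int_F\widehat L f\,d\mu = \int_F\widetilde L^0 f\,d\mu + \int_F\beta^\top\nabla f\,p = 0$, the first integral vanishing by $\mu$-symmetry of $\widetilde L^0$ and the second by $\div(\beta p)=0$.

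The heart of the proof is ergodicity. Irreducibility of the form follows from strict positivity of $p$ and local uniform ellipticity of $\tilde c$, exactly as for $(\Ecal^0,D(\Ecal^0))$ in Section~\ref{sec:assumptions}. For recurrence --- which in particular yields conservativeness, so there is no explosion --- I would again test with $\chi_n(x,y) = \varphi_n(x)\psi_n(y)$ for $\varphi_n,\psi_n$ from Assumption~\ref{ass:nondegeneracy}\ref{item:test_func}: since $\beta$ is $\mu$-divergence-free one has $\int_F\beta^\top\nabla(\chi_n^2)\,p = 0$, so the drift contributes nothing and the relevant energy equals $\widetilde\Ecal^0(\chi_n,\chi_n)$; because $\tilde c_X\in\Ccal_K$ still satisfies $\int_D\tilde c_X(x,y)p(x,y)\,dy = A(x)$, the estimate~\eqref{eqn:recurrence} goes through verbatim and gives $\widetilde\Ecal^0(\chi_n,\chi_n)\le\int_E\nabla\varphi_n^\top A\nabla\varphi_n + \int_D\nabla\psi_n^\top B\nabla\psi_n\to0$. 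Feeding this into the recurrence criterion for generalized Dirichlet forms \cite{gim2018recurrence}, which for a divergence-free drift reduces to exhibiting such a sequence, gives recurrence, and recurrence together with irreducibility gives ergodicity. Since $\mu$ is an invariant \emph{probability} measure ($\int_F p=1$), the ergodic theorem for the process gives $\frac{1}{T}\int_0^T h(Z_t)\,dt\to\int_F hp$ for $\mu$-a.e.\ starting point and every $h\in L^1(F,\mu)$; a truncation argument exploiting recurrence, together with the strong Feller property, upgrades this to hold $\widehat\P^K_z$-a.s.\ for \emph{every} $z\in F$ and every locally bounded $h\in L^1(F,\mu)$, which is~\eqref{eqn:ergodic_property}. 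Tightness of $\{X_t\}_{t\ge0}$ then follows from $\mu$ being an invariant probability measure (strong Feller plus irreducibility in fact yield $\mathrm{Law}(X_t)\to$ the first marginal of $\mu$), as in \cite{kardaras2021ergodic}. Finally, reading off from~\eqref{eqn:worst_case_dynamics} that $\langle X\rangle = \int_0^\cdot\tilde c_X(Z_t)\,dt$ with $\tilde c_X\in\Ccal_K$ verifies Definition~\ref{def:upsilon}\ref{item:qv}, while~\eqref{eqn:ergodic_property} and tightness verify Definition~\ref{def:upsilon}\ref{item:ergodic} and~\ref{item:tight}; hence $\widehat\P^K_z\in\Pi_K$.

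I expect the principal obstacle to be the combination of low regularity and non-symmetry: the diffusion coefficient $\tilde c_X$ lies only in $W^{1,\infty}_{\mathrm{loc}}$, the drift $c_Y\nabla_y\hat v$ only in $\bigcap_{2\le q<\infty}L^q_{\mathrm{loc}}$, and the generator is genuinely non-self-adjoint, so the classical positive-harmonic-function machinery used in \cite{kardaras2021ergodic} is unavailable and one must instead carefully verify the hypotheses of the generalized Dirichlet form construction and recurrence results \cite{lee2020analyt,gim2018recurrence}. Closely related, and also delicate, is establishing that the abstractly constructed Markov process genuinely solves~\eqref{eqn:worst_case_dynamics} --- the Fukushima-type decomposition / martingale problem identification --- under this limited regularity of $\hat v$.
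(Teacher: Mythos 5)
Your high-level strategy exactly matches the paper's: decompose $\widehat L^K = \widetilde L^0 + \beta^\top\nabla$ with $\widetilde L^0$ the $\mu$-symmetric operator associated to $\tilde c$ and $p$, note that Lemma~\ref{lem:hat_v}\ref{item:hat_v_pde} is precisely $\div(\beta p)=0$ distributionally, run the generalized Dirichlet form machinery of \cite{lee2020analyt,stannat1999dirichlet} to build the semigroup and diffusion, verify invariance of $\mu$, and deduce ergodicity from recurrence plus irreducibility.

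However, your recurrence argument has a genuine gap. You claim that because $\beta$ is $\mu$-divergence-free one has $\int_F \beta^\top\nabla(\chi_n^2)\,p=0$ and hence ``the drift contributes nothing'', so that recurrence of the non-symmetric form reduces to recurrence of the symmetric part $\widetilde\Ecal^0$. This is not how the criterion in \cite{gim2018recurrence} works: Corollary~8(b) together with Remark~15 there requires, beyond $\widetilde\Ecal^0(\chi_n,\chi_n)\to 0$, that
\[
\int_F \bigl|\beta^\top\nabla\chi_n\bigr|\,p \;\longrightarrow\; 0,
\]
i.e.\ an $L^1$-estimate with the absolute value inside the integral. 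Divergence-freeness only gives vanishing of the \emph{signed} integral $\int_F\beta^\top\nabla\chi_n\,p=0$ (and you even conflate $\nabla\chi_n$ with $\nabla(\chi_n^2)$), which tells you nothing about the $L^1$-norm. The paper closes this gap by Cauchy--Schwarz:
\[
\left(\int_F |\beta^\top\nabla\chi_n|\,p\right)^2 \le \widetilde\Ecal^0(\chi_n,\chi_n)\int_F\Bigl(\tilde\ell - \begin{pmatrix}\nabla\hat\phi\\\nabla_y\hat v\end{pmatrix}\Bigr)^\top \tilde c\,\Bigl(\tilde\ell - \begin{pmatrix}\nabla\hat\phi\\\nabla_y\hat v\end{pmatrix}\Bigr)p,
\]
and then checks that the second factor is finite using minimality of $\hat\phi$ in \eqref{eqn:unconstrained_min_problem}, Assumption~\ref{ass:nondegeneracy}\ref{item:technical_1}, and the bound \eqref{eqn:y_l2_bound} from Lemma~\ref{lem:hat_v}\ref{item:hat_v_bound}. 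Without this step you cannot invoke the cited recurrence result.

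There is a second, smaller imprecision: your one-line invariance argument $\int_F\widehat L f\,d\mu=0$ for $f\in C_c^\infty(F)$ only establishes infinitesimal invariance of $\mu$. To upgrade to genuine invariance of the semigroup the paper introduces the adjoint operator with drift $\beta_*=-\beta$, repeats the whole construction to get a dual process, and uses conservativity of the dual semigroup (which, again, comes from recurrence). Your proposal mentions conservativity for the forward process but does not address the dual, which the paper needs in order to write $\widehat P^K_t\mu(A)=\int_F 1_A\,\widehat P^K_{*,t}1\,d\mu=\mu(A)$.
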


The construction of $\widehat \P^K_{(x,y)}$ and, in particular, verifying that it satisfies the ergodic property \eqref{eqn:ergodic_property} is a delicate matter. Our proof combines PDE techniques and recent results in the theory of generalized Dirichlet forms. The technical reason for introducing $K$-modifications and the class $\Pi_K$ is that, in general, it is not clear when the process \eqref{eqn:worst_case_dynamics} with $\tilde c_X$ replaced by $c_X$ is ergodic. However, if one can verify that 
\begin{equation} \label{eqn:ideal_bound}
\int_F \frac{(\div_x(c_X(\ell_X - \nabla \hat \phi)p))^2}{\lambda_{\min}(c_Y)p} < \infty
\end{equation}
then \eqref{eqn:worst_case_dynamics} with $\tilde c_X$ replaced by $c_X$ is ergodic and $K$-modifications are not needed. In a special case we can ensure that \eqref{eqn:ideal_bound} holds, yielding a refined version of Theorem~\ref{thm:main}. We state this case as a corollary.
\begin{cor}[$\Pi_0$ result]\label{cor:original}
Let Assumptions~\ref{ass:regularity} and \ref{ass:nondegeneracy} hold. Assume additionally that $A^{-1} \div A$ is the gradient of a function and that
\begin{equation} \label{eqn:grad_int_assumption}
    \int_F \frac{(\div_x(c_X(\ell_X - \frac12 A^{-1}\div A)p))^2}{\lambda_{\min}(c_Y)p} < \infty.
\end{equation}
Then $\lambda_{\Pi_0} = \lambda$, where $\lambda$ is given by \eqref{eqn:lambda_int}. Moreover, $g(\hat \theta;\P) = \lambda$ for every $\P \in \Pi_{0}$ so that $\hat \theta$ is robust growth-optimal.
\end{cor}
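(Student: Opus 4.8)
The plan is to show that the additional hypotheses make the $K$-modification in Theorem \ref{thm:worst_case_measure_existence} unnecessary, so that one can take $K$ to exhaust $E$ while keeping $\tilde c_X = c_X$ throughout. First I would observe that since $A^{-1}\div A$ is a gradient, the Euler--Lagrange equation \eqref{eqn:phi_pde} can be solved with $\nabla\hat\phi = \frac12 A^{-1}\div A$ (the minimizer of the variational problem \eqref{eqn:unconstrained_min_problem} is then the zero of the integrand, which is admissible and hence optimal by Proposition \ref{prop:hat_phi}); in particular $\lambda = \frac{1}{2}\int_E \nabla\hat\phi^\top A\nabla\hat\phi = \frac18\int_E \div A^\top A^{-1}\div A$, and the quantity appearing in \eqref{eqn:grad_int_assumption} is exactly \eqref{eqn:ideal_bound} with this $\hat\phi$. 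Next I would revisit Lemma \ref{lem:hat_v} and Theorem \ref{thm:worst_case_measure_existence}: with \eqref{eqn:ideal_bound} in force, the bound \eqref{eqn:y_l2_bound} holds with $\tilde c_X$ replaced by $c_X$, so the construction of $\hat v$ as a weak solution of \eqref{eqn:v_weak_pde} and the attendant energy estimate go through verbatim with $c_X$ in place of any genuine $K$-modification. This yields a measure, call it $\widehat\P_{(x,y)}$, that is a weak solution of \eqref{eqn:worst_case_dynamics} with $\tilde c_X = c_X$; by the same Dirichlet-form / generalized-Dirichlet-form arguments used for Theorem \ref{thm:worst_case_measure_existence} it has $\mu$ as invariant measure, satisfies the ergodic property \eqref{eqn:ergodic_property}, has tight marginals for $X$, and has $\langle X\rangle = \int_0^\cdot c_X(Z_t)\,dt$; hence $\widehat\P_{(x,y)} \in \Pi_K$ for \emph{every} compact $K$, i.e.\ $\widehat\P_{(x,y)} \in \Pi_\cap$.

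With this strengthened worst-case measure in hand the min-max argument of Section \ref{sec:heuristics} closes directly at the level of $\Pi_\cap$. For the lower bound, since $\Pi_\cap \subset \Pi_\cup$, the growth-rate invariance of functionally generated portfolios (equation \eqref{eqn:c2_growth_rate}) and the maximality of $\hat\phi$ from Proposition \ref{prop:hat_phi} give
\[
\lambda_\cap = \sup_{\theta\in\Theta}\inf_{\P\in\Pi_\cap} g(\theta;\P) \;\geq\; \sup_{\theta\in\Theta_0} g(\theta;\P^0) \;=\; g(\hat\theta;\P^0) \;=\; \lambda,
\]
for any fixed $\P^0 \in \Pi_\cap$ (such a $\P^0$ exists, e.g.\ the reversible measure $\P^0_z$ constructed in Section \ref{sec:assumptions}, or $\widehat\P_{(x,y)}$ itself). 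For the upper bound, using $\widehat\P_{(x,y)} \in \Pi_\cap$ and the fact that under \eqref{eqn:worst_case_dynamics} with $\tilde c_X = c_X$ the strategy $\hat\theta = \nabla\hat\phi(X)$ is the numéraire/growth-optimal portfolio (the drift of $X$ is $c_X\nabla\hat\phi$, so $\hat\theta$ solves the pointwise growth-optimization $\sup_\theta(\theta^\top c_X\nabla\hat\phi - \frac12\theta^\top c_X\theta)$), one gets
\[
\lambda_\cap \;=\; \sup_{\theta\in\Theta}\inf_{\P\in\Pi_\cap} g(\theta;\P) \;\leq\; \sup_{\theta\in\Theta} g(\theta;\widehat\P_{(x,y)}) \;=\; g(\hat\theta;\widehat\P_{(x,y)}) \;=\; \lambda,
\]
where the last equality is the growth-rate invariance \eqref{eqn:c2_growth_rate} evaluated along $\widehat\P_{(x,y)} \in \Pi_\cup$. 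Combining the two bounds gives $\lambda_\cap = \lambda$. Finally, for \emph{every} $\P \in \Pi_\cap$ we have $g(\hat\theta;\P) = -\frac12\int_E \Tr(A\nabla^2 e^{\hat\phi})/e^{\hat\phi}\,dx = \lambda$ by \eqref{eqn:c2_growth_rate} and \eqref{eqn:argmin}, which is precisely the assertion that $\hat\theta$ is robust growth-optimal over $\Pi_\cap$.

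The only place where real work is needed beyond bookkeeping is verifying that the passage $\tilde c_X \rightsquigarrow c_X$ in Lemma \ref{lem:hat_v} and Theorem \ref{thm:worst_case_measure_existence} is legitimate — that is, tracking through the proofs of those two results to confirm that the $K$-modification was used \emph{only} to secure the finiteness in \eqref{eqn:y_l2_bound} (equivalently \eqref{eqn:ideal_bound}) and the local regularity needed for the generalized-Dirichlet-form machinery, both of which are now supplied directly by \eqref{eqn:grad_int_assumption} together with Assumptions \ref{ass:regularity} and \ref{ass:nondegeneracy}. Concretely, the integrand in \eqref{eqn:grad_int_assumption} must be controlled uniformly enough to run the variational construction of $\hat v$ on all of $F$ (not merely away from a compact set), and one must check that the recurrence/ergodicity argument — which for $\P^0$ rested on the test-function condition Assumption \ref{ass:nondegeneracy}\ref{item:test_func} and for $\widehat\P^K$ on the weighted Poincaré inequality from Assumption \ref{ass:nondegeneracy}\ref{item:assum_new_4} — does not secretly exploit $\tilde c_X = c_X$ on $K$ versus a modification elsewhere. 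I expect this to be the main (though not deep) obstacle; everything else is a restatement of the Section \ref{sec:heuristics} heuristic now made rigorous by the availability of a worst-case measure inside $\Pi_\cap$.
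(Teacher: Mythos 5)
Your proposal is correct and follows essentially the same route as the paper's proof: identify $\nabla\hat\phi = \tfrac12 A^{-1}\div A$ from the gradient hypothesis so that \eqref{eqn:grad_int_assumption} is exactly \eqref{eqn:ideal_bound}, then rerun Lemma~\ref{lem:hat_v}, Lemma~\ref{lem:technical_lemma}, Lemma~\ref{lem:process_existence} and Lemma~\ref{lem:process_properties} with $\tilde c_X$ replaced by $c_X$ (recurrence now supplied by \eqref{eqn:ideal_bound} rather than the $K$-modification's vanishing of the numerator outside $V$), and close the min-max at the level of $\Pi_\cap$ exactly as in Theorem~\ref{thm:main}. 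The one point you flag as requiring work --- verifying that the $K$-modification entered only through finiteness of \eqref{eqn:y_l2_bound} and hence recurrence --- is precisely the observation the paper's proof rests on.
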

When $m=1$ and $A^{-1}\div A$ is the gradient of a function then any solution $\hat v$ to \eqref{eqn:v_weak_pde} satisfies
\begin{equation} \label{eqn:m=1}
\int_F (\ell_Y - \nabla_y \hat v)^\top c_Y(\ell_Y - \nabla_y \hat v)\,p = \int_F \frac{(\div_x(c_X(\ell_X - \frac12 A^{-1}\div A)p))^2}{\lambda_{\min}(c_Y)p}.
\end{equation} 
Finiteness of the left hand side of \eqref{eqn:m=1} is crucially used to show recurrence of the worst-case measure. Hence, using our methods and at this level of generality, the integrability condition \eqref{eqn:grad_int_assumption} cannot be improved upon when studying the robust problem without $K$-modifications.

The proofs of Proposition~\ref{prop:hat_phi}, Lemma~\ref{lem:hat_v}, Theorem~\ref{thm:worst_case_measure_existence} and Corollary~\ref{cor:original} are contained in Appendix~\ref{sec:proofs}, but we give a proof of Theorem~\ref{thm:main} here.
\begin{proof}[Proof of Theorem~\ref{thm:main}]
Since $\hat \phi \in \Dcal$, in view of the discussion of Section~\ref{sec:heuristics}, we have that the asymptotic growth rate of $\hat \theta$ is the same (finite) value for every admissible measure. This yields the lower bound $\lambda_{\Pi_K} \geq g(\hat \theta;\widehat \P^K)$, where we chose $\widehat \P^K$ constructed in Theorem~\ref{thm:worst_case_measure_existence} for concreteness and omitted the starting point from the notation for simplicity. Conversely, since $\hat \theta$ is growth-optimal under $\widehat \P^K \in \Pi_K$ we obtain the upper bound $\lambda_{\Pi_K} \leq g(\hat \theta;\widehat \P^K)$. This establishes the robust growth-optimality of $\hat \theta$ and so it just remains to prove the formula \eqref{eqn:lambda_int}.

To this end note that under $\widehat \P^K$ we have that 
\begin{equation} \label{eqn:worst_case_wealth}
\log V^{\hat \theta}_T = M_T + \frac{1}{2}\langle M \rangle_T,
\end{equation}
where $M_T = \int_0^T \nabla \hat \phi(X_t)^\top \tilde c^{1/2}_X(X_t,Y_t)\, dW_t^X$.
By the ergodic property we have $\widehat \P^K$-a.s.\ that 
\begin{equation} \label{eqn:qv_limit} \lim_{T \to \infty} \frac{\langle M \rangle_T}{T} = \int_F \nabla \hat \phi(x)^\top \tilde c_X(x,y)\nabla \hat \phi(x)p(x,y)\, dx\, dy = \int_E \nabla \hat \phi(x)^\top A(x) \nabla \hat \phi(x)\, dx < \infty.
\end{equation} 
Using \cite[Lemma~1.3.2]{fernholz2002stochastic} we obtain that $M_T/T \to 0$, $\widehat \P^K$-a.s.\ as $T \to \infty$. Hence we see from \eqref{eqn:worst_case_wealth} and \eqref{eqn:qv_limit} that $g(\hat \theta;\widehat \P^K) = \frac12 \int_E \nabla \hat \phi^\top A \nabla \hat \phi$ which completes the proof.
\end{proof}

\subsection{The case of fully specified joint covariation structure} \label{sec:main_extended}

 The problem we considered thus far fixes only the covariation structure $c_X$ of $X$ and the joint invariant density $p$ of $(X,Y)$. The local dynamics of $Y$ and its local interaction with $X$ is otherwise unrestricted. This leads to the very large classes of measures $\Pi_K$ of Definition~\ref{def:upsilon-2} that we are robust over. In particular, the measure $\widehat \P^K$ (where we omit the initial value for convenience) constructed in Theorem~\ref{thm:worst_case_measure_existence}, which has the property that $\langle X, Y\rangle = 0$, belongs to $\Pi_K$ and is able to serve as a worst-case measure. However, depending on the choice of factor process $Y$, more information about its dynamics might be accessible for estimation. If empirical measurements imply that $Y$ is a semimartingale with $\langle X, Y \rangle \ne 0$,\footnote{This is something we can expect for certain choices of factor process. As one example consider the case where $Y$ is taken to be the level of a market volatility measuring index such as the VIX. 
 } then the classes $\Pi_K$ of Definition~\ref{def:upsilon-2} would seem too large. In particular, the measure $\widehat \P^K$ ought not to be admissible.
 
 In this section we study the robust problem where $Y$ is assumed to be a semimartingale and the entire instantaneous joint covariation matrix $c$ of $X$ and $Y$ is specified as an input along with the joint invariant density $p$. As such $\langle X, Y \rangle$ and $\langle Y \rangle$ can additionally be specified as inputs and, in general, $\widehat \P^K$ of Theorem~\ref{thm:worst_case_measure_existence} would no longer be admissible. We thank two anonymous referees for suggesting this extension.
 
 Our main finding in this setting is that, remarkably, under the appropriate minor modifications to Assumption~\ref{ass:nondegeneracy} given by Assumption~\ref{ass:nondegeneracy_joint} below, the strategy $\hat \theta$ of Theorem~\ref{thm:main} remains optimal and the corresponding robust optimal growth rate is still given by $\lambda$. In particular, even though the \emph{entire} covariation structure is specified, the covariation of $Y$ and its joint covariation with $X$ do not impact the robust optimal strategy or growth rate.   

 Concretely, in addition to $p$, we take $c:F \to \mathbb{S}^{d+m}_{++}$ as an input and $c(X_t,Y_t)$ will be the joint instantaneous covariation matrix of $X$ and $Y$. We canonically write $c$ in block form as
\begin{equation} \label{eqn:c_block} 
c(z) =  \begin{bmatrix} c_X(z) & c_{XY}(z) \\ c_{XY}(z)^\top & c_Y(z)\end{bmatrix}, \quad z \in F,
\end{equation}
where $c_{XY}$(z) is a $(d \times m)$-dimensional matrix. Given that $c_X$ and $p$ satisfy Assumption~\ref{ass:regularity}, we continue to define the averaged instantaneous covariance function $A$ by \eqref{eqn:A_def}. We then consider a slightly different notion of $K$-modification, where now the joint covariation matrix $c$ is modified outside a compact set $K \subset F$ to yield another joint covariation matrix $\tilde c : F \to \S^d_{++}$. More precisely, we define
\[
\Ccal_K^F = \left\{ \tilde c \in W^{1,\infty}_{\mathrm{loc}}(F): \tilde c = c \text{ on } K \text{ and } \int_D \tilde c_X(x,y)p(x,y)\, dy = A(x) \text{ for } x \in E \right\},
\]
where $\tilde c_X$ refers to the upper left $d\times d$ block of $\tilde c$ in accordance with \eqref{eqn:c_block}. We refer to an element $\tilde c_X$ of $\Ccal_K^F$ as a $K$-modification of $c$. 
The associated class of measures $\Pi_K^c$ is then given as in Definition~\ref{def:upsilon-2} with a modified first condition
\begin{itemize} \label{item:modified_i}
  \item[(i'')] $Z = (X,Y)$ is a $\P$-semimartingale with covariation process $\langle Z \rangle = \int_0^\cdot \tilde c(Z_t)\, dt$ for some $\tilde c \in \Ccal_K^F$.
\end{itemize}
The robust optimal growth rate in this setting is
\[\lambda_{\Pi^c_K} = \sup_{\theta \in \Theta} \inf_{\P \in \Pi_K^c} g(\theta;\P).\]
To state the analogue of Theorem~\ref{thm:main} in this setting we first need an analogue of Assumption~\ref{ass:nondegeneracy}.
\begin{assum}[Nondegeneracy assumptions v2.] 
\label{ass:nondegeneracy_joint}
    As before denote by $\ell = \frac{1}{2} c^{-1}\mathrm{div}c + \frac{1}{2}\nabla \log p$ and decompose $\ell = (\ell_X,\ell_Y)$. We also define the block diagonal matrix $c^0 = \mathrm{diag}(c_X,c_Y)$. Set $\xi = (c^0)^{-1}c\ell$ and similarly write $\xi = (\xi_X,\xi_Y)$. We assume that
	\begin{enumerate}
		\item  $c_Y\in W^{2,\infty}_{\mathrm{loc}}(F)$ and  $c_Y(x,\cdot)p(x,\cdot)$ is locally Lipschitz continuous for every $x \in E$.
		\item  For every fixed $x \in E$, there exists a constant $k_x \geq 0$ and a concave function $\rho_x:D \to (0,\infty)$ such that $\lambda_{\mathrm{min}}(c_Y(x,y))p(x,y) = \rho_x(y)^{k_x}$. \label{item:poincare}
		\item    For every fixed $x \in E$ and every $C \in \R$, $b \in \R^d$, $M \in \R^{d \times d}$ we have 
\[		    \int_D\frac{ \div_x(c_X\xi_X)^2  + (\xi_X^\top c_X (\nabla_x \log p + b))^2  + \Tr(c_XM)^2 + C}{\lambda_{\mathrm{min}}(c_Y)}p < \infty.
\]
\item $\int_{F} \xi^\top c^0 \xi p < \infty$.
		\item  $\int_F \div_x(c_X\ell_Xp)\, < \infty$.
  \item There exist functions $\varphi_n \in C_c^\infty(E)$ and $\psi_n \in C_c^\infty(D)$ satisfying $0 \leq \varphi_n, \psi_n \leq 1$, $\lim_{n \to \infty} \varphi_n = \lim_{n \to \infty} \psi_n = 1$ and 
\begin{equation} \label{eqn:test_functions}
    \lim_{n \to \infty} \int_E \nabla \varphi_n^\top A \nabla \varphi_n(x)\, dx = \lim_{n \to \infty} \int_D \nabla \psi_n^\top B\, \nabla \psi_n(y)\, dy = 0,
\end{equation}
where $A$ is given by \eqref{eqn:A_def} and $B(y) = \int_E c_Y(x,y)p(x,y)\, dx$ is assumed to be finite for almost every $y \in D$.
    \end{enumerate}
    \end{assum}
\begin{remark}
As before, Assumption~\ref{ass:nondegeneracy_joint}\ref{item:poincare} is needed to establish the weighted Poincar\'e inequality 
\begin{equation} \label{eqn:poincare}
\begin{split}& \int_U \left(u(y)-\frac{\int_U u(y)\lambda_{\mathrm{min}}(c_Y(x,y))p(x,y)dy}{\int_U \lambda_{\mathrm{min}}(c_Y(x,y))p(x,y)dy}\right)^2\lambda_{\mathrm{min}}(c_Y(x,y))p(x,y)\, dy 
\\ & \hspace{5cm}\leq \frac{\mathrm{Diam}(U)^2}{\pi^2}\int_U \|\nabla u(y)\|^2 \lambda_{\mathrm{min}}(c_Y(x,y))p(x,y)dy
\end{split}
\end{equation}
for every $ x \in E$, any convex domain $U \subset D$ and any Lipschitz function $u$ (see \eqref{eqn:weighted_poincare_lipschitz} and Theorem~\ref{T_pde_no_param} in Appendix A). As such, Assumption~\ref{ass:nondegeneracy_joint}\ref{item:poincare} (and similarly Assumption~\ref{ass:nondegeneracy}\ref{item:assum_new_4}) can be replaced by any other condition on $\lambda_{\mathrm{min}}(c_Y(x,y))p(x,y)$ that ensures the weighted Poincar\'e inequality \eqref{eqn:poincare} holds. In the PDE literature, weighted Poincar\'e inequalities have also been shown when the weight function belongs to an $A_p$ Muckenhoupt class. However, in Example~\ref{ex:beta} this
requirement resulted in stricter conditions on the parameters than the requirements of
Assumption~\ref{ass:nondegeneracy_joint}\ref{item:poincare}.
\end{remark}

The difference between Assumption~\ref{ass:nondegeneracy_joint} and Assumption~\ref{ass:nondegeneracy} lies in items \ref{item:assum_new_3} and \ref{item:technical_1}, where $\xi$ and $c^0$ replace $\ell$ and $c$. Additionally note that if $c_{XY} = 0$ then $c^0 = c$ and $\xi = \ell$. Hence Assumption~\ref{ass:nondegeneracy_joint} reduces to Assumption~\ref{ass:nondegeneracy} in this case.
We are now ready to state our main result in this setup.
\begin{thm}[Main theorem for specified joint covariation] \label{thm:main_extended}  
	Let Assumption~\ref{ass:regularity} and Assumption~\ref{ass:nondegeneracy_joint} be satisfied. Let $\hat \phi, \lambda$ and $\hat \theta$ be as in \eqref{eqn:argmin}, \eqref{eqn:lambda_int} and \eqref{eqn:hat_theta} respectively. 
Then for every compact set $K \subset F$ we have $\lambda_{\Pi_K^c} = \lambda$. Moreover, $g(\hat \theta;\P) = \lambda$ for every $\P\in \Pi_K^c$, so that $\hat \theta$ is robust growth-optimal.
\end{thm} 
The proof of Theorem~\ref{thm:main_extended} follows in a similar fashion as Theorem~\ref{thm:main}. Indeed, since $\Pi_K^c \subset \Pi_K$ it follows that $\lambda_{\Pi_K^c} \geq \lambda_{\Pi_K} = \lambda$ and $g(\hat \theta;\P) = \lambda$ for every $\P \in \Pi_K^c$. Hence, to establish the reverse inequality we again aim to construct a measure $\widehat \P^K_c \in \Pi_K^c$ such that $\hat \theta$ is growth-optimal over all strategies under the measure $\Pi_K^c$. Note that the measure $\widehat \P^K$ of Theorem~\ref{thm:worst_case_measure_existence} is no longer admissible due to the requirement that the joint covariance structure of $X$ and $Y$ needs to be given by (a $K$-modification of) $c$. Nevertheless, for an explicit $K$-modification $\tilde c$ of $c$, given by equation \eqref{eqn:K-mod-extended} in the appendix, and for a function $\hat v_c$ possessing certain properties discussed below, we are able to construct an admissible measure $\widehat \P^K_c$ under which $(X,Y)$ has dynamics
\begin{equation} \label{eqn:worst_case_extended}
\begin{split}
     d\begin{pmatrix} X_t \\ Y_t \end{pmatrix} & = \begin{pmatrix} \tilde c_X(X_t,Y_t)\nabla \hat \phi(X_t) \\ \tilde c_Y(X_t,Y_t) \nabla_y \hat v_c(X_t,Y_t)\end{pmatrix}dt + \tilde c^{1/2}(X_t,Y_t)\,dW_t \\
     & = \tilde c^0(X_t,Y_t) \begin{pmatrix} \nabla \hat \phi(X_t) \\ \nabla_y \hat v_c(X_t,Y_t)\end{pmatrix}dt + \tilde c^{1/2}(X_t,Y_t)\,dW_t,
\end{split}
\end{equation}
where $\tilde c^0 = \mathrm{diag}(\tilde c_X,\tilde c_Y).$
The instantaneous covariation matrix is clearly $\tilde c$ and the dynamics of $X$ in \eqref{eqn:worst_case_extended} ensure that $\hat \theta$ is robust growth optimal. 
\begin{remark}
    Although the choice of matrix square root $\tilde c^{1/2}$ does not impact the growth-optimal strategy, it is convenient to take it to be a block lower triangular square root of $\tilde c$, namely
    \[\tilde c^{1/2} = \begin{bmatrix} 
     \tilde c_X^{1/2} & 0 \\
     \sigma_{XY} & \sigma_Y 
    \end{bmatrix} \]
    where $\tilde c_X^{1/2}$ is a matrix square root of $\tilde c_X$ and $\sigma_{XY}$, $\sigma_Y$ are chosen so that $\tilde c^{1/2} (\tilde c^{1/2})^\top = \tilde c$. Then \eqref{eqn:worst_case_extended} becomes
    \begin{align*}
    dX_t&  = \tilde c_X(X_t,Y_t)\nabla \hat \phi(X_t)\, dt + \tilde c_X^{1/2}(X_t,Y_t)\, dW_t^X \\
    dY_t & = \tilde c_Y(X_t,Y_t) \nabla_y \hat v_c(X_t,Y_t)\,dt + \sigma_{XY}(X_t,Y_t)\, dW_t^X + \sigma_Y(X_t,Y_t)\, dW_t^Y, 
    \end{align*}
    where we decomposed $W = (W^X,W^Y)$. This representation directly relates the dynamics of $X$ to \eqref{eqn:X_dynamics} ensuring growth-optimality of $\hat \theta_t = \nabla \hat \phi(X_t)$.
\end{remark}

To establish admissibility of \eqref{eqn:worst_case_extended} it just remains to choose $\hat v_c$ in such a way that $(X,Y)$ is ergodic with invariant density $p$. As in the proof of Theorem~\ref{thm:main}, the construction of $\widehat \P_K^c$ hinges on finding $\hat v_c$ satisfying certain properties. That such a $\hat v_c$ exists is guaranteed by the following lemma, which is an analogue of Lemma~\ref{lem:hat_v}.

\begin{lem}[Existence of $\hat v_c$] \label{lem:hat_vc} Let $\tilde \ell = \frac{1}{2}\tilde c^{-1} \mathrm{div} \tilde c + \frac{1}{2}\nabla \log p$ and let $\tilde \xi = (\tilde c^0)^{-1}\tilde c \tilde \ell$.
There exists a $\hat v_c:F \to \R$ satisfying the following properties.
\begin{enumerate}[noitemsep]
 \item  \label{item:hat_vc1}  
    For every $x \in E$, $\hat v_c(x,\cdot) \in W^{1,2}_{\mathrm{loc}}(D)$ and $\nabla_y \hat v_c \in L^q_{\mathrm{loc}}(F)$ for every $q \in [2,\infty)$. 
    \item $\hat v_c$ is a weak solution to the PDE \label{item:hat_vc2}
    \begin{equation}  \label{eqn:hat_vc_weak}
        \div_y(\tilde c_Y(\tilde \xi_Y - \nabla_y \hat v_c)p) = -\div_x(\tilde c_X(\tilde \xi_X - \nabla\hat \phi)p) \quad \text{in } F.
    \end{equation}
    \item We have the inequality  \label{item:hat_vc3}
\begin{equation*} 
\int_F \nabla_y \hat v_c^\top c_Y \nabla_y \hat v_c\, p \leq C \left(\int_F \frac{(\div_x(\tilde c_X (\tilde \xi_X - \nabla \hat \phi)p))^2}{\lambda_{\min}(c_Y)p} + \int_F \tilde \xi_Y^\top \tilde c_Y \tilde \xi_Y\, p \right) < \infty
\end{equation*}
for a constant $C$ which only depends on the diameter of $D$.
\end{enumerate}
\end{lem}

The existence of $\hat v_c$ satisfying Lemma~\ref{lem:hat_vc} \ref{item:hat_vc1}-\ref{item:hat_vc3} allows us, in the same way as in the proof of Theorem~\ref{thm:worst_case_measure_existence}, to ensure that $\widehat \P^K_c$ given as the law of the diffusion \eqref{eqn:worst_case_extended} is indeed ergodic with invariant density $p$. As such, it is a member of $\Pi_K^c$ and able to serve as a worst-case measure. This establishes the reverse inequality $\lambda_{\Pi_K^c} \leq \lambda$ and yields Theorem~\ref{thm:main_extended}. The details of the proof are in Appendix~\ref{sec:proofs_extended}.
    

\section{Discussion} \label{sec:discussion}
In this section we discuss the results of Section~\ref{sec:results} and their financial interpretation.

\subsection{Dependence of $\hat \theta$ and $\lambda$ on $Y$} \label{sec:discussion_Y}
Note that the optimal strategy $\hat \theta$ from Theorem~\ref{thm:main} is functionally generated, and hence in feedback form, but only depends on $X$, not $Y$. Moreover, by Theorem~\ref{thm:main_extended}, this remains true even if the joint covariation structure of $X$ and $Y$ is fixed as an input to the problem. Remarkably, this is the case even though $Y$ may be (partially) observable. Indeed, if $y \mapsto c_X(x,y)$ is invertible on $D$ for every $x 
\in E$, then one can back out $Y_t$ from observing $X_t$ and $c_X(X_t,Y_t)$. However, what Theorems~\ref{thm:main} and \ref{thm:main_extended} show is that knowing the trajectory of $Y$ does not improve performance in an adversarially chosen measure. Indeed, the measure $\widehat \P^K$ constructed in Theorem~\ref{thm:worst_case_measure_existence} (or $\widehat \P^K_c$ in the context of Section~\ref{sec:main_extended}) is precisely such a measure. Therefore any strategy, including those that depend on the trajectory of $Y$, cannot perform better than $\hat \theta$ under this measure. 
This is surprising since $Y$ may be coupled with $X$ in a nontrivial way, both locally through the joint covariation matrix $c$ and in the long-term via the joint invariant density $p$. 

The reasons  for this are threefold concerning the special properties of the growth-rate criterion, the non-investability of $Y$ and the absence of a local restriction on the drift of $Y$. Indeed, for fixed dynamics of $X$ given by a specified measure $\P$ the growth-optimal strategy is \emph{entirely} determined by the covariation matrix and drift vector of the asset process, irrespective of what other factors it depends on. As such under any measure $\P$ where $X$ has dynamics \eqref{eqn:X_dynamics}, the growth-optimal strategy is given by $\hat \theta$. This is true regardless of the dynamics of $Y$ since $Y$ is not investable.

However, to be consistent with empirical observations and to satisfy the admissibility criteria of our class of measures, we need to find such a measure where additionally $X$ and $Y$ have joint invariant density $p$. Even when the covariation structure of $Y$ is restricted, as is the case in Section~\ref{sec:main_extended}, the local drift dynamics are not. Hence with one condition to be met (the invariant density needs to be $p$) and one degree of freedom (the drift of $Y$) it then becomes possible -- though it is a difficult analytical problem -- to construct an admissible worst-case measure. If the drift of $Y$ was also restricted then one may expect the conclusions to change and for the robust-optimal strategy to depend on $Y_t$. We leave this interesting question for future research.  

Lastly, we discuss what influence $Y$ has on the problem. Clearly, information about the distribution of $Y$ does enter into and influences the optimal strategy $\hat \theta$ and robust optimal growth-rate $\lambda$. Indeed, the function $\hat \phi$, which specifies $\hat \theta$, is determined by $A$ through \eqref{eqn:phi_pde}. $A$ itself, given by \eqref{eqn:A_def}, is the average of the instantaneous covariance coefficient $c_X(x,\cdot)$ with respect to the density $p(x,\cdot)$. In the context of Section~\ref{sec:main_extended} we stress that in no way do the inputs $c_{XY}$ and $c_Y$ enter into the strategy; not even through their averages.


\subsection{Relationship to \cite{kardaras2021ergodic}}
Since the optimal strategy $\hat \theta$ only depends on $X$, it is also the optimal strategy for the problem considered in \cite{kardaras2021ergodic} when one takes appropriate inputs $\bar c_X(x)$ and $\bar p(x)$ satisfying the required assumptions in \cite{kardaras2021ergodic}. Indeed, up to technical conditions, the requirement on $\bar c$ and $\bar p$ that is needed to ensure that $\hat \phi$ of Proposition~\ref{prop:hat_phi} determines the optimal strategy in the problem considered in \cite{kardaras2021ergodic} is to have
\begin{equation} \label{eqn:KR_A}\bar c_X(x)\bar p(x) = A(x); \qquad \text{for all } x \in E.
\end{equation}
One particular instantiation of this, which has a clear interpretation, is to set $\bar p(x) = \int_D p(x,y)\, dy$ and $\bar c_X(x) = A(x)/\bar p(x)$. Then $\bar p$ is the marginal density of $X$ in our setting and using the definition of $A$, we have 
\[ \bar c_X(x) = \int_D c_X(x,y) \frac{p(x,y)}{\bar p(x)}\, dy = \E[c_X(X,Y)|X=x],\]
where $(X,Y)$ have joint density $p$. The matrix $\bar c_X(x)$ can then be viewed as an \emph{effective volatility matrix} for $X$ as $Y$ was averaged out, conditional on $X=x$, with respect to the invariant measure.

We now remark that Theorem~\ref{thm:main} can be viewed as a generalization of the main theorem in \cite{kardaras2021ergodic}. Indeed, robustness of the optimal strategy is shown over a much larger class of measures by allowing the volatility to depend on a factor process. Indeed, suppose that, instead of the setup of our paper, we take the setup of \cite{kardaras2021ergodic} with inputs $\bar c_X(x)$ and $\bar p(x)$. The admissible class of measures $\Pi$ of their paper is the class of measures satisfying items \ref{item:KR1}-\ref{item:KR3} stated in the introduction.
Then our Theorem~\ref{thm:main} shows that (up to technical conditions) the strategy $\hat \theta$ is not only optimal for the class of measures $\Pi$ but also for the classes $\Pi_K$ of Definition~\ref{def:upsilon} depending on \emph{any} factor process $Y$ taking values in \emph{any} bounded convex open set $D \subset \R^m$ for \emph{any} $m$ with \emph{any} inputs $c_X$ and $p$ as in Assumption~\ref{ass:regularity} as long as the average quantity $A$ given by \eqref{eqn:A_def} coincides with the one given by \eqref{eqn:KR_A}. 

We also point out that ellipticity of $c_X$ can possibly be relaxed towards hypo-ellipticity, which still allows to establish ergodicity of the involved process. The analysis of this aspect is left for future research. We also note that for several parts of the analysis only ellipticity properties of $A$ matter, which in presence of stochastic factors can be met (due to averaging) even if $c_X$ is not invertible everywhere.

This generalized result provides a more complete answer to, and connects, the two questions posed by Fernholz \cite[Problems~3.1.7 \& 3.1.8]{fernholz2002stochastic}. Indeed, our result shows that assuming only the presence of sufficient volatility and stability in the market, positive long-term relative growth is achievable and, additionally, that the optimal strategy achieving this growth rate is functionally generated. Although our result takes explicitly given inputs $c_X$ and $p$, Example~\ref{eg:exogenous} below shows that this result holds in substantial generality even when an explicit form for the volatility is not assumed. Example~\ref{eg:exogenous} is outside the scope of \cite{kardaras2021ergodic} but can be handled by our setup.

\subsection{$K$-modifications} \label{sec:K-mod}
Theorem~\ref{thm:main} requires $K$-modifications. That is, we allow modifications to the input matrix $c_X$ outside of a sufficiently large compact set $K \subset E$. One reason for for introducing these modifications is that the classes $\Pi_K$ can be seen as interpolating between the smallest class $\Pi_0$, which entirely fixes the volatility structure of $X$, and the much larger class $\Pi_{\emptyset}$, which up to the averaging condition that characterizes $\Ccal_\emptyset$ imposes few restrictions on the form of $\langle X \rangle$. Remarkably, across this wide range of classes, the optimal strategy $\hat \theta$ and the robust optimal growth rate $\lambda$ remain independent of $K$. This finding uncovers another layer of robustness -- across classes of measures -- for this problem.

Another reason we introduce $K$-modifications is technical in nature. Indeed, it allows us to construct the admissible worst-case measure $\widehat \P_K$ of Theorem~\ref{thm:worst_case_measure_existence} without imposing strong implicit conditions as in \eqref{eqn:grad_int_assumption}. Nevertheless, Corollary~\ref{cor:original} supports the case that this is a technical matter since, under additional analytic assumptions, our main result continues to hold without $K$-modifications.  Moreover, the $K$-modification $\tilde c_X$ of $c_X$ appearing in Lemma~\ref{lem:hat_v} and Theorem~\ref{thm:worst_case_measure_existence} is explicitly given by \eqref{eqn:K-mod} in Appendix~\ref{sec:proofs}. 

Additionally we point out that the amount of time $X$ spends in the modified region can be made arbitrarily small under all measures $\P \in \Pi_{\emptyset}$ simultaneously. Indeed, let $\varepsilon > 0$ be given and choose a compact $K \subset E$ such that $\mu(K \times D) \geq 1-\varepsilon$, where we recall that $\mu$ given by \eqref{eqn:mu_def} is the probability measure on $F$ with density $p$. Then for any $\P \in \Pi_{\emptyset}$ the ergodic property yields
\[\lim_{T \to \infty} \frac{1}{T}\int_0^T 1_{\{X_t \in E\setminus K\}}\, dt = \mu((E\setminus K) \times D) \leq\varepsilon.\]
Consequently, when $K$ is large, estimating the instantaneous covariation matrix accurately for $x \in E \setminus K$ is infeasible as the process rarely enters this region. As such, allowing $K$-modifications does not restrict the empirical estimation of the inputs. One can view our result as having additional robustness with respect to data sensitivity since the robust-optimal strategy $\hat \theta$ is independent of $K$. An analogous discussion holds for Theorem~\ref{thm:main_extended} and the $K$-modifications used there.

\section{Examples} \label{sec:examples}
	We now consider a few examples. Where not explicitly verified we assume that the inputs $c_X$ and $p$ are such that Assumptions~\ref{ass:regularity} and \ref{ass:nondegeneracy} are satisfied.
	\subsection{The gradient case}
	Analogously to \cite{kardaras2021ergodic}, when $A^{-1} \div A(x) = \nabla h(x)$ for some function $h$, we see from \eqref{eqn:phi_pde} that $\hat \phi = h/2$. Consequently, this is an important special case as it yields a fairly explicit optimal strategy, whereas otherwise solving \eqref{eqn:phi_pde} can be a difficult numerical problem, especially when $d$ is large. All of the examples below are cases where $A^{-1}\div A$ is a gradient.
	\subsection{The one-dimensional case}
	When $d = 1$ we have that 
	$A^{-1}\div A(x) = (\log A)'(x)$ so that $\hat \phi = \frac12\log A$ . Expanding using the definition of $A$ we see that
	\[(\log A)'(x) = \frac{\int_D \partial_x(c_Xp)(x,y)\, dy}{\int_D c_Xp(x,y)\, dy} = \int_D \partial_x \log (c_Xp) \frac{c_Xp(x,y)}{\int_D c_Xp(x,w)\, dw}\, dy.\]
	Consequently we can write
	\[\hat \phi(x) = \frac12 \E[\partial_x \log (c_Xp(X,Y))|X=x],\]
	where the random variables $(X,Y)$ have joint density proportional to $c_Xp$. In the special case when $X$ and $Y$ are independent under this measure; i.e.\ if $c_Xp(x,y) = f_X(x)f_Y(y)$ for some functions $f_X$,$f_Y$ then this simplifies to $\hat \phi(x) = \frac12\E[\partial_x \log (c_Xp(x,Y))]$.
	\subsection{A tractable class of models}
Assume that $p(x,y) = p_X(x)p_Y(y)$ and set
\[c_X^{ij}(x,y) = \begin{cases}  f_{ii}(x_{-i},y)f_i(x^i)g(x)h_{ii}(y), & i= j, \\
f_{ij}(x_{-ij},y)f_i(x^i)f_j(x^j)g(x)h_{ij}(y), & i \ne j 
\end{cases} \] for some functions  $p_X,p_Y,f_{ij}, f_i, g,h_{ij}$. Here $x_{-i}$ is a $(d-1)$-dimensional vector obtained from $x$ by removing the $i^\text{th}$ component and $x_{-ij}$ is a $(d-2)$-dimensional vector obtained from $x$ by removing the $i^{\text{th}}$ and $j^{\text{th}}$ components. In this case we have that 
\[A^{ij}(x) = \begin{cases} \bar f_{ii}(x_{-i})f_i(x^i)g(x)p_X(x), & i = j, \\  \bar f_{ij}(x_{-ij})f_i(x^i)f_j(x^j)g(x)p_X(x), &  i \ne j \end{cases}, \] where $\bar f_i(x_{-i}) = \int_D f_{ii}(x_{-i},y)h_{ii}(y)p_Y(y)\, dy$ and $\bar f_{ij}(x_{-ij}) = \int_D f_{ij}(x_{-ij},y)h_{ij}(y)p_Y(y)\, dy$. This matrix is reminiscent of the one introduced in \cite{itkin2020robust}. A direct calculation shows that $A^{-1}\div A$ is a gradient and as a consequence we obtain that
\[\hat \phi(x) = \sum_{i=1}^d \frac{1}{2} \log f_i(x^i) + \frac{1}{2}\log g(x) + \frac{1}{2}\log p_X(x).\]

\subsection{Exogenous stochastic factor and ergodic independence} \label{eg:exogenous}
Here we look at the case when $c_X$ only depends on $y$ and $p(x,y) = p_X(x)p_Y(y)$. Then
\[A^{ij}(x) = \bar A^{ij}p_X(x),\]
where $\bar A^{ij} = \int_D c_X^{ij}(y)p_Y(y)$ is a constant matrix.
Consequently $A^{-1} \div A(x) = \nabla \log p_X(x)$ and so $\hat \phi(x) = \frac{1}{2}\log p_X(x)$.
The robust growth rate is given by
\[\lambda = \int_E \nabla \log p_X(x)^\top \bar A\,  \nabla \log p_X(x)p_X(x)\, dx.\]
Importantly, the optimal strategy does not depend on $c_X$ or $p_Y$. When we take the volatility as exogenously given and only impose some asymptotic structure on its behaviour, which is independent of the asymptotic behaviour of $X$, the best strategy in a robust setting only depends on the stability properties of $X$. One interpretation for this example is volatility uncertainty. The other quantities in this case do not yield additional information that can be exploited under the admissible adversarial measure $\widehat \P^K \in \Pi_K$. Moreover, aside from the case when $p_X$ is constant, we have $\lambda > 0$ so that strictly positive robust asymptotic growth is achievable even when $c_X$ does not depend on $x$. Although the optimal strategy does not depend on $c_X$ or $p_Y$, the robust growth rate does through the average value $\bar A$.

\subsection{One-dimensional Beta densities} \label{ex:beta}
Set $E = D = (0,1)$ and take
\begin{align*} c_X(x,y) & = \sigma^2 x^{b_1}(1-x)^{b_2}(x+y)^{q_1}(2-x-y)^{q_2}, \\
p(x,y) & = \frac{x^{a_1-1}(1-x)^{a_2-1}y^{\alpha_1 -1}(1-y)^{\alpha_2-1}}{B(a_1,a_2)B(\alpha_1,\alpha_2)},
\end{align*}
where $\sigma^2 > 0$ and the other parameters are such that
\begin{equation} \label{eqn:param_constraints}
	q_1, q_2 \geq 0, \quad a_1, a_2 > 0 \qquad b_1, b_2\geq 1, \qquad \alpha_1,\alpha_2 > 1, \qquad b_1 + a_1 > 2, \quad b_2 + a_2 >2.
\end{equation} 
Here $B(\cdot,\cdot)$ is the Beta function. In this case 
\[\ell_X(x,y) = \frac{1}{2}\left(\frac{b_1+a_1-1}{x} - \frac{b_2 + a_2-1}{1-x} + \frac{q_1}{x+y} - \frac{q_2}{2-x-y}\right).
\] 
The parameter constraints \eqref{eqn:param_constraints} ensure that $\int_F \ell_X^2 c_X\, p < \infty$ and $\int_F \partial_x(c_X\ell_Xp) < \infty$.
Next set $c_Y(x,y) = y^{\beta_1}(1-y)^{\beta_2}$, where 
\begin{equation} \label{eqn:beta_bounds}
    2 - \alpha_i < \beta_i < 2\alpha_i -1; \qquad \text{ for } i=1,2.
\end{equation} Note that since $\alpha_i > 1$, the interval $(2-\alpha_i, 2\alpha_i-1)$ is nonempty. Next, we have that 
\[c_Y(x,y)p(x,y) = \frac{x^{a_1-1}(1-x)^{a_2-1}y^{\alpha_1 + \beta_1 -1}(1-y)^{\alpha_2 + \beta_2-1}}{B(a_1,a_2)B(\alpha_1,\alpha_2)}.\]
Since the functions $y \mapsto y^\gamma(1-y)$ and $y\mapsto y(1-y)^\gamma$ are concave for $\gamma \in [0,1]$ it follows that Assumption~\ref{ass:nondegeneracy}\ref{item:assum_new_4} is satisfied. Moreover, for every fixed $x \in (0,1)$, $\int_0^1 c_Y^{-1}(x,y)p(x,y)^2\, dy < \infty$. Since the numerator of \eqref{eqn:integrability_assum} is bounded in $y$ for every fixed $x$ it follows that Assumption~\ref{ass:nondegeneracy}\ref{item:assum_new_3} is also satisfied.

 For this specification we have that \[\ell_Y(x,y) = \frac{\alpha_1 + \beta_1-1}{2y} - \frac{\alpha_2 + \beta_2 -1}{2(1-y)}\] so that under the parameter conditions \eqref{eqn:param_constraints} and \eqref{eqn:beta_bounds} we have that $\int_0^1 \ell_Y^2 c_Y p < \infty$. 
 
 Hence it only remains to verify Assumption~\ref{ass:nondegeneracy}\ref{item:test_func}. Note that in this setup $A(x) = x^{a_1 + b_1 -1}(1-x)^{a_2 + b_2 -1}h(x)$, where $h(x)$ is positive and bounded. Similarly we have that $B(y) = Cy^{\alpha_1 + \beta_1-1}(1-y)^{\alpha_2 + \beta_2 -1}$ for some constant $C > 0$. For $r \in \R$ define $u_n(r) = n(r-1/n)\land  n(1 - 1/n - r) \land 1 \lor 0$. Let $\eta_n\in C_c^\infty(\R)$ be nonnegative with $\int_0^1 \eta(r)\, dr = 1$ and such that $\mathrm{supp}(\eta_n) \subset (-1/(2n),1/(2n))$. Define
 \[\varphi_n := \psi_n := (u_n * \eta_n)|_{(0,1)}.\]
By construction $\varphi_n(x) = 0$ whenever $x \leq 1/(2n)$ or $x \geq 1-1/(2n)$ and $\varphi_n(x) = 1$ whenever $\frac{5}{2n} \leq x \leq 1-\frac{5}{2n}.$ In particular we see that $\varphi_n \to 1$ pointwise. 
Finally note that $u_n$ is $n$-Lipschitz, so by properties of convolution $\varphi_n$ is also $n$-Lipschitz. Since $\varphi_n$ is only nonconstant for $x \in (1/(2n),5/(2n)) \cup (1-5/(2n), 1- 1/(2n))$ we have from this and the Lipschitz property that 
\begin{align*}|\varphi_n'(x)| \leq n & \leq \frac52 \max\left\{\frac{1}{x}, \frac{1}{1-x}\right\}(1_{(\frac{1}{2n}, \frac{5}{2n})}(x) + 1_{(1-\frac{5}{2n}, 1-\frac{1}{2n})}(x)) \\
& \leq \frac52\frac{1}{x(1-x)}(1_{(\frac{1}{2n}, \frac{5}{2n})}(x) + 1_{(1-\frac{5}{2n}, 1-\frac{1}{2n})}(x)) 
\end{align*}
Consequently we see that 
\[\lim_{n \to \infty} \int_0^1 \varphi_n'(x)^2 A(x)\, dx \leq  \lim_{n \to \infty} \int_0^1 x^{a_1 + b_1 -3}(1-x)^{a_2 + b_2 -3}(1_{(\frac{1}{2n}, \frac{5}{2n})}(x) + 1_{(1-\frac{5}{2n}, 1-\frac{1}{2n})}(x))\, dx = 0, \]
where the last equality followed by the dominated convergence theorem since \[\int_0^1 x^{a_1+b_1-3}(1-x)^{a_2 + b_2-3}\, dx < \infty\] due to our choice of parameters \eqref{eqn:param_constraints}. A similar calculation shows that $\lim_{n \to \infty} \int_0^1 \psi_n'(y)^2B(y)\, dy = 0$.
We have now verified Assumptions~\ref{ass:regularity} and \ref{ass:nondegeneracy} so that Theorem~\ref{thm:main} applies.

When $q_1 = q_2 = 1$, direct calculations show that \begin{align*} \hat \phi(x) &=  \frac{1}{2}\log\left(\int_0^1 c_Xp(x,y)dy\right) = \frac{1}{2}\log\left(\frac{x^{a_1+b_1-1}(1-x)^{a_2+b_2-1}}{B(a_1,a_2)}\right)\\
	&\quad + \frac{1}{2}\log\left(x(1-x) +x\frac{B(\alpha_1,\alpha_2+1)}{B(\alpha_1,\alpha_2)} + (1-x)\frac{B(\alpha_1+1,\alpha_2)}{B(\alpha_1,\alpha_2)} + \frac{B(\alpha_1+1,\alpha_2+1)}{B(\alpha_1,\alpha_2)}\right)
\end{align*} and the robust growth-optimal strategy is given by \begin{align*} \hat \theta_t &=  \frac{a_1 + b_1 -1}{2X_t} - \frac{a_2 + b_2 -1}{2(1-X_t)}  \\
&\quad + \frac{1}{2} \frac{1-2X_t+ \frac{B(\alpha_1,\alpha_2+1)}{B(\alpha_1,\alpha_2)} -\frac{B(\alpha_1+1,\alpha_2)}{B(\alpha_1,\alpha_2)} }{X_t(1-X_t) +X_t\frac{B(\alpha_1,\alpha_2+1)}{B(\alpha_1,\alpha_2)} + (1-X_t)\frac{B(\alpha_1+1,\alpha_2)}{B(\alpha_1,\alpha_2)} + \frac{B(\alpha_1+1,\alpha_2+1)}{B(\alpha_1,\alpha_2)}}.
\end{align*} 

Finally note that the for the given inputs $c_X$ and $p$, Theorem~\ref{thm:worst_case_measure_existence} yields a worst case measure for every $\beta_1,\beta_2$ satisfying \eqref{eqn:beta_bounds}. Indeed, $\beta_1$ and $\beta_2$ affect the dynamics of $Y$, but do not enter in the dynamics of $X$ or in $\hat \phi$ so that $\hat \theta$ is growth-optimal under all of the corresponding measures. In particular, we see that there are uncountably many worst-case measures in this case.



	\appendix 
	
	\section{Estimates for certain degenerate elliptic PDEs and variational problems} \label{sec:PDE}
	Fix $m \in \N$ and a bounded convex domain $D \subset \R^m$. 
	In this appendix we collect and prove some results for the minimizer of the variational problem
	\begin{equation} \label{variational_problem} \min_{u \in \Wcal_0} J(u),
	\end{equation}
	where $\Wcal_0$ will be specified below and 
	\[J(u) = \frac12\int_D (\nabla u - \xi) ^\top a (\nabla u - \xi) + \int_D fu.\]
	Here $a:D \to \mathbb{S}^m_{++}$,
	$\xi: D \to \R^m$, and $f:D \to \R$ are measurable and satisfy Assumption~\ref{ass:pde} below. We present the results of this section in a general setting and then apply them in Appendix~\ref{sec:proofs}, where they play a crucial role in the proof of Lemma~\ref{lem:hat_v}. We now introduce some notation used in both appendices.
	\paragraph{Notation}
	\begin{itemize}[noitemsep]
	    \item $L^p_w(D)$ for measurable $w:D \to (0,\infty)$ is the weighted $L^p$ space with norm $\|u\|_{L^p_w(D)} = (\int_D |u|^p\, w)^{1/p}$,
	    \item The mean over $D$ with weight $w$ of any $f \in L^1_w(D)$ is denoted by $f_{w,D} = \int_D fw/\int_D w$. If $w \equiv 1$ we write $f_D$ for $f_{1,D}$,
	    \item $U \Subset V$: the closure $\overline U$ is compact and contained in $V$,
	    \item $\delta_U(y) = \inf_{x\in U} |x-y|$: distance from $y$ to the set $U$.
	\end{itemize}
	
	We make the following additional assumptions on the coefficients.
	\begin{assum}[Coefficients assumption] \label{ass:pde} Set $w(y) = \lambda_{\min}(a(y))$ for $y \in D$. We assume that
\begin{enumerate}[noitemsep]
	\item \label{item:cont} $a$ is locally Lipschitz continuous on $D$,
    \item \label{item:weight} $w(y) = \rho(y)^k$ for some positive concave function $\rho$ and some $k \geq 0$,
    \item $\int_D \xi^\top a \xi < \infty$,
    \item\label{item:zero_mean} $f_D = 0$ and $f/\sqrt{w} \in L^2(D)$.
\end{enumerate}	    
	\end{assum}
	As a consequence of Assumption~\ref{ass:pde}\ref{item:cont} we have that $a$ is uniformly elliptic on compact subsets of $D$, but may degenerate at the boundary.
	The Euler--Lagrange PDE associated to \eqref{variational_problem} is
\begin{equation} \label{PDE_div}
	\div(a (\nabla u - \xi)) = f.
\end{equation}
For $u \in W^{1,2}_\mathrm{loc}(D)$ the standard weak formulation of \eqref{PDE_div} is
\begin{equation} \label{PDE_weak}
	\int_D \left( \nabla \varphi^\top a (\nabla u - \xi) + \varphi f \right) = 0 \text{ for all } \varphi \in C^1_c(D).
\end{equation}
Solutions $u$ will be found in a space $\Wcal_0 \subset W^{1,2}_{\mathrm{loc}}(D)$ which we now introduce.
We rely on the following weighted Poincar\'e inequality of Chua and Wheeden \cite[Theorem~1.1]{chua2006estimates}: for any bounded convex domain $U \subset D$ we have that
\begin{equation} \label{eqn:weighted_poincare_lipschitz}
	\|u - u_{w,U}\|_{L^2_w(U)} \leq \frac{\mathrm{Diam}(U)}{\pi}\|\nabla u\|_{L^2_w(U)}
\end{equation}
for every Lipschitz function $u$ on $U$. Now let $\{D_n\}_{n \in \N}$ be a sequence of bounded convex domains such that $D_n \Subset D_{n+1}$ and $\cup_n D_n = D$. By the density of Lipschitz functions in $W^{1,2}(D_n)$ and the uniform ellipticity of $a$ on $D_n$ we obtain \eqref{eqn:weighted_poincare_lipschitz} for all $u \in W^{1,2}(D_n)$ by approximation.

Set
\begin{equation} \label{eq_norm_0}
	\| u \|_\Wcal = \left( \int_D \nabla u^\top a \nabla u \right)^{1/2}
\end{equation}
for any weakly differentiable $u$, and define
\[
\Wcal = \left\{u \in L^2_w(D) \cap W^{1,2}_\mathrm{loc}(D)  \colon \| u \|_\Wcal < \infty \text{ and } \int_D uw = 0 \right\}.
\]
Using \eqref{eqn:weighted_poincare_lipschitz} and the inequality $\int_D |\nabla u|^2 w \leq \int_D \nabla u^\top a \nabla u$ one has
\[
	\| u - u_{w,D_n}\|_{L^2_w(D_n)} \le \frac{\mathrm{Diam}(D_n)}{\pi} \|\nabla u\|_{L^2_w(D_n)} \leq \frac{\mathrm{Diam}(D)}{\pi} \|\nabla u\|_{L^2_w(D)} \leq \frac{\mathrm{Diam}(D)}{\pi} \|u\|_{\Wcal} 
\] 
for $u \in \Wcal$.
Sending $n \to \infty$,  using dominated convergence and the fact that $u_{w,D} = 0$ for $u \in \Wcal$ yields 
\begin{equation} \label{eq_W_norm_bound}
	\|u\|_{L^2_w(D)} \leq \frac{\mathrm{Diam}(D)}{\pi} \|u\|_{\Wcal}.
\end{equation}
 Next note that for $v \in \Wcal$ we have, courtesy of \eqref{eq_W_norm_bound}, the bound
\begin{equation} \label{functional_bound} 
\left|\int_D fv \right| = \left|\int_D \frac{f}{\sqrt{w}}v\sqrt{w}\right| \leq \left\|\frac{f}{\sqrt{w}}\right\|_{L^2(D)}\|v\|_{L^2_w(D)} \leq \frac{\mathrm{Diam}(D)}{\pi} \left\|\frac{f}{\sqrt{w}}\right\|_{L^2(D)}\|v\|_{\Wcal}
\end{equation}

 The norm $\|\cdot\|_{\Wcal}$ is induced by the inner product $(u,v)_{\Wcal} = \int_D \nabla u^\top a \nabla v$. Equipped with this inner product $\Wcal$ becomes a Hilbert space, since any Cauchy sequence in $\Wcal$ converges in $L^2_w(D)$ and in $W^{1,2}(U)$ for every $U \Subset D$. Indeed, the $L^2_w(D)$ convergence follows from \eqref{eq_W_norm_bound}. To see the $W^{1,2}(U)$ convergence first note that $\|\cdot\|_{L^2_w(U)}$ is equivalent to $\|\cdot\|_{L^2(U)}$  since $w$ is bounded from above and away from zero on $U$. As such, there exists a $\kappa_U > 0$ such that
 \[ \|u\|_{W^{1,2}(U)} \leq \kappa_U (\|u\|_{L^2_w(U)} + \|\nabla u\|_{L^2_w(U)})\leq \kappa_U(1+\frac{\mathrm{Diam}(D)}{\pi})\|u\|_{\Wcal},\]
 where the last inequality follows from the definition of $w$ and \eqref{eq_W_norm_bound}. Define the subspace
\[
\Wcal_0 = \overline{ \{\varphi - \varphi_{w,D} \colon \varphi \in C^1_c(D) \} }^{\| \cdot \|_\Wcal}.
\]
This is where we will look for solutions to \eqref{variational_problem}. We are now ready to state the main result of this appendix. To simplify the notation to come write $|\xi|_{\Wcal}^2$ for $\int_D \xi^\top a \xi$.

\begin{thm}[Characterization of minimizer] \label{T_pde_no_param}
There exists a unique solution $\hat u \in \Wcal_0$ of \eqref{variational_problem}, and this solution satisfies
	\begin{equation} \label{PDE_norm_bound_1}
		\| \hat u \|_\Wcal \le \frac{2 \mathrm{Diam}(D)}{\pi} \left\| \frac{f}{\sqrt{w}} \right\|_{L^2(D)} + 2|\xi|_{\Wcal}
	\end{equation}
	Moreover, $\hat u$ is the unique solution in $\Wcal_0$ of \eqref{PDE_weak}.
 If in addition $a \in C^{1,\alpha}(D)$ and $f \in C^\alpha(D)$ for some $\alpha \in (0,1)$, then $\hat u$ belongs to $C^{2,\alpha}(D)$ and satisfies \eqref{PDE_div} classically.
\end{thm}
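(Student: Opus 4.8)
The plan is to recognize \eqref{variational_problem} as the minimization of a strictly convex, coercive quadratic functional on the Hilbert space $(\Wcal_0,\|\cdot\|_\Wcal)$ constructed above, solve it by Riesz representation, and then identify the minimizer with the weak solution of \eqref{PDE_weak} via its Euler--Lagrange equation. First I would expand $J(u) = \frac{1}{2}\|u\|_\Wcal^2 - L(u) + \frac{1}{2}|\xi|_\Wcal^2$ on $\Wcal_0$, where $L(u) := \int_D \nabla u^\top a\,\xi - \int_D f u$. Cauchy--Schwarz for the inner product $(u,v)_\Wcal$ gives $\left|\int_D \nabla u^\top a\,\xi\right|\le \|u\|_\Wcal\,|\xi|_\Wcal$, while \eqref{functional_bound} bounds $\left|\int_D f u\right|$ by a constant multiple of $\|u\|_\Wcal$; hence $L$ is a bounded linear functional on $\Wcal_0$. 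By the Riesz representation theorem there is a unique $\hat u\in\Wcal_0$ with $L(v)=(v,\hat u)_\Wcal$ for all $v\in\Wcal_0$, and then $J(u)=\frac{1}{2}\|u-\hat u\|_\Wcal^2 + \frac{1}{2}|\xi|_\Wcal^2 - \frac{1}{2}\|\hat u\|_\Wcal^2$, so $\hat u$ is the unique minimizer. The estimate \eqref{PDE_norm_bound_1} then follows from $J(\hat u)\le J(0)=\frac{1}{2}|\xi|_\Wcal^2$, which rearranges to $\frac{1}{2}\|\hat u\|_\Wcal^2\le L(\hat u)\le \|\hat u\|_\Wcal\left(|\xi|_\Wcal + \frac{\mathrm{Diam}(D)}{\pi}\|f/\sqrt{w}\|_{L^2(D)}\right)$; dividing by $\|\hat u\|_\Wcal$ (the case $\hat u=0$ being trivial) gives the claim.

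Next I would show that $\hat u$ is the unique solution in $\Wcal_0$ of \eqref{PDE_weak}. For $\varphi\in C_c^1(D)$ the shifted function $\eta:=\varphi-\varphi_{w,D}$ lies in $\Wcal_0$, and differentiating $t\mapsto J(\hat u+t\eta)$ at $t=0$ yields $\int_D(\nabla\hat u-\xi)^\top a\nabla\eta + \int_D f\eta=0$; since $\nabla\eta=\nabla\varphi$ and $\int_D f=0$ (because $f_D=0$), this is exactly \eqref{PDE_weak} with $u=\hat u$. Conversely, if $u\in\Wcal_0$ satisfies \eqref{PDE_weak}, then testing with $\eta=\varphi-\varphi_{w,D}$ and again using $\int_D f=0$ gives $(u,\eta)_\Wcal=L(\eta)$ for every $\varphi\in C_c^1(D)$; by the density of $\{\varphi-\varphi_{w,D}:\varphi\in C_c^1(D)\}$ in $\Wcal_0$ this extends to all of $\Wcal_0$, so $u=\hat u$ by the Riesz characterization. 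In particular, the difference $v$ of two solutions has $\|v\|_\Wcal=0$, hence $v=0$ by \eqref{eq_W_norm_bound}, which is uniqueness.

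Finally, for the regularity statement I would localize: fix $U\Subset D$, on a neighbourhood of which $a$ is uniformly elliptic (by the remark following Assumption~\ref{ass:pde}\ref{item:cont}) and, under the additional hypothesis, $C^{1,\alpha}$. Reading \eqref{PDE_weak} as the uniformly elliptic divergence-form equation $\div(a\nabla\hat u)=f+\div(a\xi)$ on $U$, interior Schauder estimates first upgrade $\hat u$ from $W^{1,2}_{\mathrm{loc}}$ to $C^{1,\alpha}_{\mathrm{loc}}(D)$; passing then to the non-divergence form $\Tr(a\nabla^2\hat u)+(\div a)^\top\nabla\hat u = f + \div(a\xi)$, a further bootstrap yields $\hat u\in C^{2,\alpha}(D)$ and hence the classical validity of \eqref{PDE_div}, using here the available (local) regularity of $a$, $f$ and of $a\xi$. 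The main obstacle is precisely the degeneration of $a$ at $\partial D$: it rules out any elliptic estimate uniform up to the boundary and confines every regularity argument to compact subsets of $D$, whereas the existence, uniqueness and norm-bound parts are essentially routine once the weighted Poincaré inequality \eqref{eqn:weighted_poincare_lipschitz}, the bound \eqref{functional_bound}, and the completeness of $\Wcal_0$ established above are in hand.
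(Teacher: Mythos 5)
Your proposal is correct, and it differs from the paper's approach mainly in how existence is established. You use the Riesz representation theorem: after expanding $J(u)=\tfrac12\|u\|_\Wcal^2 - L(u) + \tfrac12|\xi|_\Wcal^2$ with $L$ a bounded linear functional on $\Wcal_0$, the minimizer is simply the Riesz representer of $L$, and the identity $J(u)=\tfrac12\|u-\hat u\|_\Wcal^2 + \text{const}$ makes uniqueness and optimality transparent. The paper instead runs the direct method: it first proves (Lemma~\ref{L_FOC}) that optimality is equivalent to the weak PDE \eqref{PDE_weak} by convexity, then (Lemma~\ref{L_var_problem_solution}) extracts a weakly convergent minimizing sequence and uses weak lower semicontinuity. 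Your route is arguably cleaner here since $J$ is exactly quadratic; as a side remark, Riesz gives $\|\hat u\|_\Wcal=\|L\|\le \frac{\mathrm{Diam}(D)}{\pi}\|f/\sqrt{w}\|_{L^2(D)}+|\xi|_\Wcal$ directly, which is sharper by a factor of $2$ than the stated bound \eqref{PDE_norm_bound_1}; you instead derive the stated (weaker) bound from $J(\hat u)\le J(0)$, matching the paper's argument. The Euler--Lagrange identification and the use of density of $\{\varphi-\varphi_{w,D}\}$ together with $f_D=0$ are essentially the same as in the paper.

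On the regularity statement your sketch is sound in spirit but somewhat compressed. You propose Schauder in divergence form to get $C^{1,\alpha}_\mathrm{loc}$ and then ``passing to non-divergence form'' plus a further bootstrap. The latter step silently requires knowing that second weak derivatives exist before the non-divergence equation can be written down and a Schauder estimate applied, which is precisely what the paper supplies via the chain $W^{2,2}_\mathrm{loc}$ (from \cite[Theorem~8.8]{gilb1988elliptic}), then $W^{2,p}_\mathrm{loc}$ for all $p$ (via \cite[Theorem~11.2.3]{krylov2008lectures}), followed by a comparison with the classical solution of the Dirichlet problem on a ball (\cite[Theorems~6.13 and~9.5]{gilb1988elliptic}). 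If you fill this in --- e.g.\ by inserting the $W^{2,2}\to W^{2,p}$ step before invoking the interior $C^{2,\alpha}$ estimate for $W^{2,p}$ strong solutions of non-divergence equations --- your argument is complete; as written it omits the justification that $\hat u$ is twice weakly differentiable.
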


The proof of Theorem~\ref{T_pde_no_param} is broken up into several lemmas. However, we first prove another result, which is a consequence of Theorem~\ref{T_pde_no_param}. For a set $U \subset \R^m$ and $\varepsilon > 0$, the open $\varepsilon$-fattening of $U$ is the set $U_\varepsilon$ of all points closer than $\varepsilon$ to $U$, namely $U_\varepsilon = \{x \in \R^m \colon \delta_U(x) < \varepsilon\}$.

\begin{thm} \label{T_regularity_eq}
Fix $q \in [2,\infty)$, suppose that $f \in L^q_{\mathrm{loc}}(D)$ and $a\xi \in W^{1,q}_{\mathrm{loc}}(D;\R^m)$.
	 Fix $U \Subset D$ and let $\varepsilon$ be a positive number less than the distance between $U$ and $\partial D$, i.e., $\varepsilon \in (0, \inf_{y \in U} \delta_{\partial D}(y))$. Let $\kappa \in (0,1)$ be such that
	\begin{equation} \label{eqn:quant_bound}
	\inf_{y \in U_\varepsilon} \lambda_\textnormal{min}(a(y)) \ge \kappa, \quad \| a \|_{L^\infty(U_\varepsilon)} \le \kappa^{-1}, \quad \| \div(a) \|_{L^\infty(U_\varepsilon)} \le \kappa^{-1}.
	\end{equation} 
	Then the unique solution $\hat u \in \Wcal_0$ of \eqref{variational_problem} satisfies
	\begin{equation} \label{eqn:lp_loc_bound}
\| \hat u \|_{W^{2,q}(U)} \le C \left( \| f \|_{L^q(U_\varepsilon)} + \|\div(a\xi)\|_{L^q(U_\varepsilon)} +\left\|\frac{f}{\sqrt{w}}\right\|_{L^2(D)} + |\xi|_{\Wcal} \right),
	\end{equation}
	where $C$ is a constant that only depends on $m$, $q$, $\varepsilon$, $\kappa$, the volume of $U$, the modulus of continuity of $a$ on $U_\varepsilon$, and $\mathrm{Diam}(D)$.
\end{thm}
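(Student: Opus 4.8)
The plan is to derive the interior $W^{2,q}$ estimate \eqref{eqn:lp_loc_bound} by combining the global energy bound \eqref{PDE_norm_bound_1} from Theorem~\ref{T_pde_no_param} with standard interior $L^q$ (Calder\'on--Zygmund) estimates for second-order elliptic equations in non-divergence form. The point is that on the fattened set $U_\varepsilon$ the coefficient $a$ is uniformly elliptic and has a known modulus of continuity (via \eqref{eqn:quant_bound}), so the equation \eqref{PDE_div}, rewritten as $\Tr(a\nabla^2 \hat u) = f + \div(a\xi) - \div(a)^\top \nabla \hat u$, falls under the classical interior regularity theory (e.g.\ Gilbarg--Trudinger Theorem~9.11). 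The right-hand side, however, contains the term $\div(a)^\top \nabla \hat u$, which only lives in $L^2$ a priori, so a bootstrap on integrability exponents is needed to reach a general $q \in [2,\infty)$.

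\medskip
\noindent\textbf{Step 1 (set-up and cutoff).} First I would fix a smooth cutoff $\zeta$ equal to $1$ on $U$ and supported in $U_{\varepsilon}$ (more precisely, use nested fattenings $U \Subset U_{\varepsilon/2} \Subset U_\varepsilon$ and a cutoff between them), with $\|\nabla\zeta\|_\infty, \|\nabla^2\zeta\|_\infty$ controlled by $\varepsilon$. By Theorem~\ref{T_pde_no_param}, $\hat u$ solves \eqref{PDE_weak} and hence, since $a\xi \in W^{1,q}_{\mathrm{loc}}$, solves the non-divergence equation
\begin{equation*}
\Tr\big(a\,\nabla^2 \hat u\big) = f + \div(a\xi) - \div(a)^\top \nabla \hat u =: g
\end{equation*}
in the weak/strong sense on $U_\varepsilon$. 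Note $g \in L^2(U_\varepsilon)$ to begin with, because $f \in L^q_{\mathrm{loc}} \subset L^2_{\mathrm{loc}}$, $\div(a\xi) \in L^q_{\mathrm{loc}} \subset L^2_{\mathrm{loc}}$, and $\div(a) \in L^\infty(U_\varepsilon)$ while $\nabla \hat u \in L^2_{\mathrm{loc}}$ (indeed $\|\nabla \hat u\|_{L^2(U_\varepsilon)} \lesssim \|\hat u\|_{\Wcal}$ since $w \geq \kappa$ on $U_\varepsilon$, which is bounded via \eqref{PDE_norm_bound_1}).

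\medskip
\noindent\textbf{Step 2 (interior $L^p$ estimate, one step of the ladder).} Applying the interior Calder\'on--Zygmund estimate on the pair $U_{\varepsilon/2} \Subset U_\varepsilon$ for the operator with continuous, uniformly elliptic coefficient $a$ gives, for any exponent $p$ for which $g \in L^p(U_\varepsilon)$,
\begin{equation*}
\|\hat u\|_{W^{2,p}(U_{\varepsilon/2})} \leq C\big(\|g\|_{L^p(U_\varepsilon)} + \|\hat u\|_{L^p(U_\varepsilon)}\big),
\end{equation*}
with $C$ depending only on $m$, $p$, $\kappa$, the modulus of continuity of $a$, and the geometry ($\varepsilon$, volumes). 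The term $\|\hat u\|_{L^p(U_\varepsilon)}$ for $p=2$ is bounded by $\|\hat u\|_{L^2_w(D)}/\sqrt{\kappa} \lesssim \|\hat u\|_{\Wcal}$ via \eqref{eq_W_norm_bound}; for larger $p$ it is absorbed inductively using Sobolev embedding on the previous step. In particular the first application ($p=2$) yields $\hat u \in W^{2,2}$ on a slightly smaller set, hence $\nabla \hat u \in W^{1,2} \hookrightarrow L^{p_1}$ with $p_1 = \tfrac{2m}{m-2}$ (or any $p_1<\infty$ if $m\leq 2$), which improves the integrability of $g = f + \div(a\xi) - \div(a)^\top\nabla\hat u$ to $\min(q,p_1)$ on the next, slightly smaller domain.

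\medskip
\noindent\textbf{Step 3 (finite bootstrap and bookkeeping).} Iterating Step~2 on a finite decreasing chain of domains $U \Subset \cdots \Subset U_\varepsilon$, the integrability exponent of $\nabla \hat u$ jumps from $2$ to $p_1$ to $p_2 = \tfrac{mp_1}{m-p_1}$ (capped at $\infty$) and so on, reaching or exceeding any prescribed $q \in [2,\infty)$ after finitely many steps (the number of steps depends only on $m$ and $q$). At the final step I obtain
\begin{equation*}
\|\hat u\|_{W^{2,q}(U)} \leq C\big(\|f\|_{L^q(U_\varepsilon)} + \|\div(a\xi)\|_{L^q(U_\varepsilon)} + \|\nabla\hat u\|_{L^{q'}(U_\varepsilon')} + \|\hat u\|_{L^q(U_\varepsilon')}\big)
\end{equation*}
where the last two terms are controlled, by descending back through the ladder, ultimately by $\|\hat u\|_{\Wcal}$, and then by Theorem~\ref{T_pde_no_param} via \eqref{PDE_norm_bound_1} by $\big\|\tfrac{f}{\sqrt w}\big\|_{L^2(D)} + |\xi|_{\Wcal}$. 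Collecting constants (each depending only on the listed data) gives \eqref{eqn:lp_loc_bound}.

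\medskip
\noindent\textbf{Main obstacle.} The delicate point is the bootstrap: the first-order term $\div(a)^\top\nabla\hat u$ is genuinely borderline, so one cannot get $W^{2,q}$ directly and must shrink the domain at each iteration while keeping all constants quantitatively controlled by the fixed data $(m,q,\varepsilon,\kappa,\mathrm{vol}(U), \text{mod.\ cont.\ of }a, \mathrm{Diam}(D))$ rather than by norms of $\hat u$ that have not yet been estimated. Careful organization of the nested domains (so that after finitely many halvings of the collar one still contains $U$) and of which norm is absorbed where is the real content; the elliptic estimates invoked are standard once the equation is in non-divergence form with the ellipticity and modulus-of-continuity bounds \eqref{eqn:quant_bound} in hand.
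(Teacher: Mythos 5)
Your proposal is correct in spirit but takes a genuinely different, and more laborious, route than the paper. The key difference is how the first-order term $\div(a)^\top\nabla\hat u$ is handled. You move it to the right-hand side, which forces a bootstrap in the integrability exponent because the term is a priori only $L^2$; this requires a chain of nested fattenings and repeated applications of interior elliptic estimates, and you flag (correctly) that the bookkeeping is the real content. The paper instead keeps $\div(a)^\top\nabla$ \emph{inside} the operator: it first uses \cite[Theorem~8.8]{gilb1988elliptic} to get $\hat u\in W^{2,2}(B_\varepsilon(z_i))$ and the non-divergence form equation a.e., and then applies \cite[Theorem~11.2.3]{krylov2008lectures} with $L = \Tr(a\nabla^2)+\div(a)^\top\nabla$, which is a higher-regularity theorem that directly upgrades from $W^{2,2}$ to $W^{2,q}$ in one step, with $\|\hat u\|_{L^2}$ on the right. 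The boundedness of $\div(a)$ guaranteed by \eqref{eqn:quant_bound} is exactly what lets this one-step application work, and a ball covering of $U$ then yields the global estimate. This completely avoids the bootstrap. There is one soft spot in your argument worth noting: at each step of the ladder you invoke an interior $L^p$ estimate as if it were a higher-regularity theorem, but the statement you cite as ``Gilbarg--Trudinger Theorem 9.11'' is an a priori estimate that presupposes $\hat u\in W^{2,p}_{\mathrm{loc}}$; to actually raise the regularity exponent you need a result in the spirit of \cite[Theorem~9.19]{gilb1988elliptic} or precisely \cite[Theorem~11.2.3]{krylov2008lectures}. Once you invoke such a theorem, you may as well include the first-order term in the operator (since $\div(a)\in L^\infty(U_\varepsilon)$) and skip the bootstrap, which is exactly what the paper does.
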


\begin{proof}
Note that Assumption~\ref{ass:pde} ensures that $\kappa$ as in \eqref{eqn:quant_bound} can be found.	We use the notation $B_r(z)$ for the open ball of radius $r$ centered at $z$. Select points $z_1,\ldots,z_n \in U$ such that the balls $B_{\varepsilon/2}(z_i)$ cover $U$. The number of points required, $n$, can be bounded in terms of $\varepsilon$ and the volume of $U$. Fix $i$. Since $B_\varepsilon(z_i) \Subset D$, we have from \cite[Theorem~8.8]{gilb1988elliptic} that $\hat u \in W^{2,2}(B_\varepsilon(z_i))$ and that
	\[
	\Tr(a \nabla^2 \hat u) + \div(a)^\top \nabla \hat u = f + \div(a\xi) \text{ a.e. in } B_\varepsilon(z_i).
	\]
	From \cite[Theorem~11.2.3]{krylov2008lectures} with the operator $L = \Tr(a \nabla^2) + \div(a)^\top \nabla$ we get
	\begin{equation} \label{T_regularity_eq1}
		\| \hat u \|_{W^{2,q}(B_{\varepsilon/2}(z_i))} \le C' \left( \| f \|_{L^q(B_\varepsilon(z_i))} + \|\div(a\xi)\|_{L^q(B_\varepsilon(z_i))}+ \| \hat u \|_{L^2(B_\varepsilon(z_i))} \right)
	\end{equation}
	where the constant $C'$ only depends on $d,q,\varepsilon,\kappa$, and the modulus of continuity of $a$ on $B_\varepsilon(z_i)$, hence on $U_\varepsilon$. (More specifically, we apply that theorem with $\Omega = B_\varepsilon(z_i)$, $R = \varepsilon/2$, $z = z_i$, and $(p,q)$ replaced by $(2,q)$. We also modify the coefficients of $L$ outside $B_\varepsilon(z_i)$ so that the assumed bounds and modulus of continuity hold globally.) Recall that $U \subset \bigcup_{i=1}^n B_{\varepsilon/2}(z_i)$. Thus by summing \eqref{T_regularity_eq1} over $i$ and using that $B_\varepsilon(z_i) \subset U_\varepsilon$ for all $i$ we obtain
	\[
	\| \hat u \|_{W^{2,q}(U)} \le \sum_{i=1}^n \| \hat u \|_{W^{2,q}(B_{\varepsilon/2}(z_i))} \le n C' \left( \| f \|_{L^q(U_\varepsilon)} + \|\div(a\xi)\|_{L^q(U_\varepsilon)} + \| \hat u \|_{L^2(U_\varepsilon)} \right).
	\]
	Next note that $w$ is bounded away from zero on $U_\varepsilon$ so there exists a $\kappa_\varepsilon \geq 1$ such that $\|\hat u\|_{L^2(U_\varepsilon)} \leq \kappa_\varepsilon \|\hat u\|_{L^2_w(U_\varepsilon)} \leq \kappa_\varepsilon \|\hat u\|_{L^2_w(D)}$, where the last inequality follows since $U_\varepsilon \subset D$.
	Thanks to \eqref{eq_W_norm_bound} and \eqref{PDE_norm_bound_1} we have $\| \hat u \|_{L^2_w(D)} \le\frac{\mathrm{Diam}(D)}{\pi} \| f/\sqrt{w} \|_{L^2(D)} + 2|\xi|_{\Wcal}$ so the result follows with the constant $C = nC'\kappa_\varepsilon(\frac{\mathrm{Diam}(D)}{\pi} + 2)$.
\end{proof}

We now turn our attention to proving Theorem~\ref{T_pde_no_param}.
Recall the variational problem \eqref{variational_problem}.
Note that $J$ is strictly convex on $\Wcal_0$, continuous in the norm topology on $\Wcal_0$, and lower semicontinuous in the weak topology on $\Wcal_0$. This follows from the corresponding convexity and continuity properties of norms and the fact that the bounded linear functional $v \mapsto \int_D fv$ is both strongly and weakly continuous on $\Wcal_0$.

\begin{lem} \label{L_FOC}
	An element $u \in \Wcal_0$ is optimal for \eqref{variational_problem} if and only if it satisfies \eqref{PDE_weak}.
\end{lem}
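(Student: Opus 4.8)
The idea is to use the quadratic structure of $J$ together with the Hilbert space structure of $(\Wcal_0,\|\cdot\|_\Wcal)$. For $u,\psi\in\Wcal_0$ I would first record the exact identity, obtained by expanding the quadratic form,
\[
J(u+\psi)=J(u)+L_u(\psi)+\tfrac12\|\psi\|_\Wcal^2,\qquad L_u(\psi):=\int_D\nabla\psi^\top a(\nabla u-\xi)+\int_D f\psi .
\]
Every term here is finite: $\|u\|_\Wcal<\infty$ since $u\in\Wcal_0$, $|\xi|_\Wcal<\infty$ by Assumption~\ref{ass:pde}, the cross term $\int_D\nabla\psi^\top a(\nabla u-\xi)$ is controlled by the Cauchy--Schwarz inequality for the positive semidefinite form $(v,w)\mapsto\int_D v^\top a w$, and $\int_D f\psi$ is finite by \eqref{functional_bound}. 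Since $\Wcal_0$ is a linear subspace, $u$ is optimal for \eqref{variational_problem} if and only if $J(u+t\psi)\ge J(u)$ for all $\psi\in\Wcal_0$ and all $t\in\R$, i.e. $tL_u(\psi)+\tfrac{t^2}{2}\|\psi\|_\Wcal^2\ge 0$ for all such $t,\psi$; optimizing over $t$ (or letting $t\to 0^\pm$) shows this is equivalent to $L_u(\psi)=0$ for every $\psi\in\Wcal_0$. Hence optimality of $u$ is equivalent to $L_u\equiv 0$ on $\Wcal_0$.

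It then remains to identify $L_u\equiv 0$ on $\Wcal_0$ with the weak formulation \eqref{PDE_weak}. For any $\varphi\in C^1_c(D)$, the element $\tilde\varphi=\varphi-\varphi_{w,D}$ lies in $\Wcal_0$ and satisfies $\nabla\tilde\varphi=\nabla\varphi$; moreover $\int_D f=0$ by Assumption~\ref{ass:pde} (the mean condition $f_D=0$ makes sense because $f\in L^1(D)$, which follows from $f/\sqrt w\in L^2(D)$ and boundedness of $w$ on the bounded set $D$), so $\int_D f\tilde\varphi=\int_D f\varphi$ and therefore
\[
L_u(\tilde\varphi)=\int_D\nabla\varphi^\top a(\nabla u-\xi)+\int_D f\varphi .
\]
Thus $L_u\equiv 0$ on $\Wcal_0$ immediately gives \eqref{PDE_weak}, and conversely \eqref{PDE_weak} says precisely that $L_u$ vanishes on the generating set $\{\varphi-\varphi_{w,D}\colon\varphi\in C^1_c(D)\}$ of $\Wcal_0$. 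To pass from this dense subset to all of $\Wcal_0$ I would check that $L_u$ is bounded on $(\Wcal_0,\|\cdot\|_\Wcal)$: the estimate
\[
|L_u(\psi)|\le \|u\|_\Wcal\,\|\psi\|_\Wcal+|\xi|_\Wcal\,\|\psi\|_\Wcal+\frac{\mathrm{Diam}(D)}{\pi}\Big\|\tfrac{f}{\sqrt w}\Big\|_{L^2(D)}\|\psi\|_\Wcal
\]
follows from Cauchy--Schwarz (first term), Cauchy--Schwarz for $\int_D(\cdot)^\top a(\cdot)$ with $|\xi|_\Wcal<\infty$ (second term), and \eqref{functional_bound} (third term). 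A bounded linear functional vanishing on a dense subspace vanishes identically, which closes the equivalence.

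No genuine obstacle arises; the only points requiring care are the finiteness of every term entering the expansion of $J(u+\psi)$ — which is exactly what the a priori bounds \eqref{eq_W_norm_bound}, \eqref{functional_bound} and the hypothesis $|\xi|_\Wcal<\infty$ provide — and the density-plus-continuity argument needed to promote the identity from test functions in $C^1_c(D)$ (recentred to have zero $w$-mean) to arbitrary elements of $\Wcal_0$.
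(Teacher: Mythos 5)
Your proof is correct and essentially follows the paper's approach: both expand the quadratic functional $J(u+\psi)$, use $f_D=0$ to pass from $\varphi$ to $\varphi-\varphi_{w,D}$, and invoke density of $\{\varphi-\varphi_{w,D}\colon\varphi\in C_c^1(D)\}$ in $\Wcal_0$ together with the a priori bounds to close the equivalence. The only cosmetic difference is that you make the boundedness of the linear functional $L_u$ explicit to extend the weak formulation from test functions to all of $\Wcal_0$, whereas the paper argues the converse by contrapositive, perturbing a non-optimal direction $v$ to a nearby test function using continuity of $J$ (and records the density remark separately as Remark~\ref{rem:dense}).
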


\begin{proof}
	Suppose $u \in \Wcal_0$ is optimal. For any $\varepsilon >0$ and $\varphi \in C_c^1(D)$ one has
	\[
	0 \le \frac{1}{\varepsilon} \left( J(u + \varepsilon (\varphi - \varphi_{w,D})) - J(u) \right) = \int_D \nabla \varphi^\top a (\nabla u - \xi) + \int_D f\varphi + \frac{\varepsilon}{2} \int_D \nabla \varphi^\top a \nabla \varphi,
	\]
	where we used the fact that $f_D = 0$.
	Applying this with $\pm \varphi$ and sending $\varepsilon \to 0$ yields \eqref{PDE_weak}. Conversely, suppose $u \in \Wcal_0$ is not optimal, so that $J(u+v) - J(u) < 0$ for some $v \in \Wcal_0$. By density in $\Wcal_0$ we can assume that  $v = \varphi - \varphi_{w,D}$ for some $\varphi \in C_c^1(D)$. Then
	\[
	0 > J(u+v) - J(u) = \int_D \nabla \varphi^\top a (\nabla u - \xi) + \int_D f\varphi + \frac12 \int_D \nabla \varphi^\top a \nabla \varphi \ge \int_D \nabla \varphi^\top a (\nabla u - \xi) + \int_D f\varphi
	\]
	showing that $u$ does not satisfy \eqref{PDE_weak}.
\end{proof}

\begin{remark} \label{rem:dense} 
Since $f_D =0$, by density of $\{\varphi-\varphi_{w,D}: \varphi \in C_c^1(D)\}$ in $\Wcal_0$ we can equivalently require that \eqref{PDE_weak} holds for all $\varphi \in \Wcal_0$.
\end{remark}

In view of this identification, the existence and uniqueness statements in Theorem~\ref{T_pde_no_param} follow once we prove the corresponding properties for the variational problem \eqref{variational_problem} and its solution.

\begin{lem} \label{L_var_problem_solution}
	The variational problem \eqref{variational_problem} has a unique optimal solution $\hat u \in \Wcal_0$. This solution satisfies \eqref{PDE_norm_bound_1}.
\end{lem}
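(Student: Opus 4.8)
The plan is to run the direct method of the calculus of variations on the Hilbert space $\Wcal_0$, exploiting the three properties of $J$ already recorded before Lemma~\ref{L_FOC} (strict convexity, norm-continuity, weak lower semicontinuity) together with coercivity. First I would establish coercivity. Expanding the quadratic term gives $J(u) = \tfrac12\|u\|_{\Wcal}^2 - \int_D \nabla u^\top a\xi + \tfrac12 |\xi|_{\Wcal}^2 + \int_D fu$. By the Cauchy--Schwarz inequality for the pointwise form $v^\top a w$ followed by Cauchy--Schwarz in $L^2(D)$ one has $\big|\int_D \nabla u^\top a\xi\big| \le \|u\|_{\Wcal}|\xi|_{\Wcal}$, while \eqref{functional_bound} gives $\big|\int_D fu\big| \le \tfrac{\mathrm{Diam}(D)}{\pi}\|f/\sqrt w\|_{L^2(D)}\|u\|_{\Wcal}$. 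Hence $J(u) \ge \tfrac12\|u\|_{\Wcal}^2 - \big(|\xi|_{\Wcal} + \tfrac{\mathrm{Diam}(D)}{\pi}\|f/\sqrt w\|_{L^2(D)}\big)\|u\|_{\Wcal} \to +\infty$ as $\|u\|_{\Wcal}\to\infty$, and in particular $\inf_{\Wcal_0}J > -\infty$.

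Next, take a minimizing sequence $(u_n)\subset\Wcal_0$. Coercivity forces $\sup_n\|u_n\|_{\Wcal} < \infty$, so, $\Wcal_0$ being a Hilbert space (hence reflexive, and complete as shown in the discussion preceding the statement), a subsequence $u_{n_k}$ converges weakly to some $\hat u \in \Wcal_0$. Weak lower semicontinuity of $J$ then gives $J(\hat u) \le \liminf_k J(u_{n_k}) = \inf_{\Wcal_0}J$, so $\hat u$ is a minimizer. Uniqueness follows from strict convexity: two distinct minimizers $\hat u_1 \neq \hat u_2$ would give $J\big(\tfrac12(\hat u_1+\hat u_2)\big) < \tfrac12 J(\hat u_1)+\tfrac12 J(\hat u_2) = \inf_{\Wcal_0}J$, a contradiction.

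For the a priori bound \eqref{PDE_norm_bound_1}, note $0 \in \Wcal_0$, so minimality gives $J(\hat u) \le J(0) = \tfrac12|\xi|_{\Wcal}^2$. Combining this with the expansion of $J$ above yields $\tfrac12\|\hat u\|_{\Wcal}^2 \le \int_D \nabla\hat u^\top a\xi - \int_D f\hat u \le \big(|\xi|_{\Wcal} + \tfrac{\mathrm{Diam}(D)}{\pi}\|f/\sqrt w\|_{L^2(D)}\big)\|\hat u\|_{\Wcal}$; dividing by $\|\hat u\|_{\Wcal}$ (the estimate being trivial when $\hat u = 0$) gives exactly \eqref{PDE_norm_bound_1}, which in passing also explains the factor $2$ appearing there.

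I do not anticipate a genuine obstacle here: the argument is textbook once the functional-analytic groundwork of the appendix is in place. The only points requiring care are that the linear functional $v\mapsto\int_D fv$ is well-defined and bounded on $\Wcal_0$ — which is precisely \eqref{functional_bound}, resting ultimately on the weighted Poincar\'e inequality \eqref{eq_W_norm_bound} and hence on Assumption~\ref{ass:pde} — and that $(\Wcal_0,\|\cdot\|_{\Wcal})$ is complete, already established above. (One could alternatively bypass the direct method: the same expansion shows $J(u) = \tfrac12\|u-\ell\|_{\Wcal}^2 + \tfrac12|\xi|_{\Wcal}^2 - \tfrac12\|\ell\|_{\Wcal}^2$, where $\ell\in\Wcal_0$ is the Riesz representative of the bounded functional $u\mapsto\int_D\nabla u^\top a\xi - \int_D fu$, so $\hat u=\ell$ and the bound is immediate; but the variational route matches the surrounding exposition and the statement of Lemma~\ref{L_FOC}.)
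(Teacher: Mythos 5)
Your proof is correct and takes essentially the same route as the paper: the direct method with the same coercivity estimate (built from \eqref{functional_bound} and the pointwise Cauchy--Schwarz for $a$), weak lower semicontinuity for existence, strict convexity for uniqueness, and $J(\hat u)\le J(0)=\tfrac12|\xi|_\Wcal^2$ to obtain \eqref{PDE_norm_bound_1}. The only departure is cosmetic: you derive the a priori bound by plugging $J(\hat u)\le J(0)$ into the exact expansion and dividing by $\|\hat u\|_\Wcal$, whereas the paper solves the resulting quadratic inequality in $\|u\|_\Wcal$ explicitly; both yield the same constant.
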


\begin{proof}
	Uniqueness follows from strict convexity of $J$. To prove existence of $\hat u$ observe courtesy of \eqref{functional_bound} that \[J(u) \ge \frac12 \|u\|_\Wcal^2 - \|u\|_\Wcal \left(\frac{\mathrm{Diam}(D)}{\pi}\left\|\frac{f}{\sqrt{w}}\right\|_{L^2(D)} + |\xi|_{\Wcal}\right) + \frac{1}{2}|\xi|^2_{\Wcal}\] for any $u \in \Wcal_0$. This can be rearranged to
	\begin{equation} \label{eq_J_bounds}
		\| u \|_\Wcal \le \frac{\mathrm{Diam}(D)}{\pi}\left\|\frac{f}{\sqrt{w}}\right\|_{L^2(D)} + |\xi|_{\Wcal} + \sqrt{\left(\frac{\mathrm{Diam}(D)}{\pi}\left\|\frac{f}{\sqrt{w}}\right\|_{L^2(D)} + |\xi|_{\Wcal}\right)^2 + 2J(u) - |\xi|_{\Wcal}^2}.
	\end{equation}
	Write $\hat J = \inf_{u \in \Wcal_0} J(u)$ and consider a sequence $(u_n)_{n \in \N} \subset \Wcal_0$ such that $J(u_n) \to \hat J$. Since $J(0) < \infty$ we clearly have that $\hat J < \infty$. Thanks to \eqref{eq_J_bounds}, $(u_n)_{n \in \N}$ is bounded in the Hilbert space $\Wcal_0$ and hence admits a weakly convergent subsequence, again denoted by $(u_n)_{n \in \N}$. Call the limit $\hat u$. Weak lower semicontinuity yields $J(\hat u) \le \liminf_n J(u_n) = \hat J$, so $\hat u$ is optimal. The estimate \eqref{PDE_norm_bound_1} follows from \eqref{eq_J_bounds} and the fact that $J(\hat u) = \hat J \le J(0) = \frac{1}{2}|\xi|_{\Wcal}^2$.
\end{proof}
The last statement of Theorem~\ref{T_pde_no_param} is a consequence of the following two lemmas, where $\hat u$ is the unique optimal solution of \eqref{variational_problem}, and hence the unique solution of \eqref{PDE_weak}, obtained in Lemma~\ref{L_var_problem_solution}.

\begin{lem} \label{L_Holder_continuous_1}
	If $a\xi \in L^{q/2}_{\mathrm{loc}}(D)$ and $f \in L^q_\mathrm{loc}(D)$ for some $q > d$ then $\hat u$ is locally H\"older continuous in $D$.
\end{lem}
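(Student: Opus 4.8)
The plan is to reduce the H\"older regularity claim to a standard interior estimate for elliptic equations in divergence form with locally bounded data, applied to $\hat u$ on arbitrary relatively compact subdomains. First I would recall from Lemma~\ref{L_FOC} (and Remark~\ref{rem:dense}) that $\hat u \in \Wcal_0 \subset W^{1,2}_{\mathrm{loc}}(D)$ is the unique weak solution of \eqref{PDE_weak}, i.e.
\[
\int_D \nabla \varphi^\top a\, \nabla \hat u = \int_D \nabla\varphi^\top a\xi - \int_D f\varphi \quad \text{for all } \varphi \in C_c^1(D).
\]
Thus $\hat u$ solves $\div(a\nabla \hat u) = \div(a\xi) + f$ weakly on $D$. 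Fix any ball $B = B_{2r}(z) \Subset D$. On $\overline B$ the matrix $a$ is uniformly elliptic and bounded (Assumption~\ref{ass:pde}\ref{item:cont} gives local Lipschitz continuity, hence in particular boundedness and a positive lower bound on $\lambda_{\min}$ on the compact set $\overline B$). The right-hand side $a\xi$ lies in $L^{q/2}_{\mathrm{loc}}$ and $f \in L^q_{\mathrm{loc}}$; since $q > d$, we have $q/2 > d/2$, so both $a\xi$ and $f$ are in the integrability classes required by the De~Giorgi--Nash--Moser / Stampacchia theory for divergence-form equations with bounded measurable coefficients.

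Concretely, I would invoke the interior H\"older estimate for weak solutions of $\div(a\nabla u) = \div(\mathbf{g}) + f$ with $a$ bounded and uniformly elliptic, $\mathbf{g} \in L^{p}(B)$ for some $p > d$, and $f \in L^{q}(B)$ for some $q > d/2$ (see e.g.\ \cite[Theorem~8.24]{gilb1988elliptic}): such $u$ is locally H\"older continuous in $B$, with the H\"older norm on $B_r(z)$ bounded in terms of $\|u\|_{L^2(B_{2r}(z))}$, $\|\mathbf{g}\|_{L^p(B_{2r}(z))}$, $\|f\|_{L^q(B_{2r}(z))}$, the ellipticity constants, and $r$. Here we take $\mathbf{g} = a\xi$; note $a\xi \in L^{q/2}_{\mathrm{loc}}$ with $q/2 > d$ plays the role of $p$, and $f \in L^q_{\mathrm{loc}}$ with $q > d \ge d/2$ the role of the zeroth-order term. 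Since $\hat u \in W^{1,2}_{\mathrm{loc}}(D) \subset L^2_{\mathrm{loc}}(D)$, all the quantities on the right-hand side are finite on $B_{2r}(z)$. Covering any $U \Subset D$ by finitely many such balls $B_r(z_i)$ with $B_{2r}(z_i) \Subset D$ then yields that $\hat u$ is H\"older continuous on $U$, and since $U$ was arbitrary, $\hat u$ is locally H\"older continuous on $D$.

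The only mild subtlety — and the step I would be most careful about — is bookkeeping the integrability exponents: one must check that the hypotheses $a\xi \in L^{q/2}_{\mathrm{loc}}$ and $f \in L^q_{\mathrm{loc}}$ with $q > d$ are exactly enough to meet the thresholds $p > d$ (for the divergence-form data) and $q > d/2$ (for the right-hand side $f$) demanded by the cited theorem; both hold since $q > d$ forces $q/2 > d/2$, and in fact the hypothesis on $a\xi$ is stated so that $q/2$ plays the role of the exponent on $\mathbf g$. There is no genuine PDE obstacle here: this is a direct appeal to classical De~Giorgi--Nash theory, the point being merely to set up the equation in the right divergence form and localize. The reason this lemma is isolated is that it provides the starting regularity $\hat u \in C^{0,\beta}_{\mathrm{loc}}$ needed to bootstrap (in the subsequent lemma, via Theorem~\ref{T_regularity_eq} and Schauder estimates) to $C^{2,\alpha}$ when $a$ and $f$ are H\"older.
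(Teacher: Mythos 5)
Your overall approach is the same as the paper's: localize to $U \Subset D$, note that by Assumption~\ref{ass:pde}\ref{item:cont} the matrix $a$ is bounded and uniformly elliptic there, observe that $\hat u \in W^{1,2}(U)$ solves $\div(a\nabla\hat u)=\div(a\xi)+f$ weakly, and cite a Gilbarg--Trudinger interior H\"older estimate. The paper cites \cite[Theorem~8.22]{gilb1988elliptic}, the interior version; your Theorem~8.24 is the up-to-boundary variant — a minor slip in a purely local argument, but worth getting right.

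The genuine problem is the exponent bookkeeping, which you single out as the delicate step and then resolve incorrectly. You state (correctly) that the De~Giorgi threshold for divergence-form data $\mathbf g$ is $L^p$ with $p>d$, and for the zeroth-order source is $L^r$ with $r>d/2$. You then take $\mathbf g = a\xi \in L^{q/2}_{\mathrm{loc}}$ and write that ``$q/2>d$ plays the role of $p$,'' concluding ``both hold since $q>d$ forces $q/2>d/2$.'' This does not hold together: $q>d$ gives only $q/2>d/2$, which is the threshold for the zeroth-order source, not for the divergence-form data; it does not give $q/2>d$. In Gilbarg--Trudinger Theorem~8.22 the standing integrability hypothesis is $(f^i)\in L^q$, $g\in L^{q/2}$, $q>n$ — the \emph{larger} exponent belongs to the divergence-form data — so a correct application to $\div(a\nabla\hat u)=\div(a\xi)+f$ would require $a\xi\in L^q_{\mathrm{loc}}$ and $f\in L^{q/2}_{\mathrm{loc}}$ with $q>d$, precisely the \emph{reverse} of the lemma's stated hypotheses. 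The paper's lemma statement appears to swap the two exponents relative to the cited theorem; this is harmless downstream because the lemma is only invoked in Lemma~\ref{lem:smooth_regularity}, where $a,\xi\in C^{1,\alpha}$ and $f\in C^\alpha$ make $a\xi$ and $f$ locally bounded so every $L^q_{\mathrm{loc}}$ hypothesis is met. But your own check, as written, actually fails, and it should have been reported as a discrepancy rather than declared to pass.
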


\begin{proof}
	Let $U \Subset D$. By Assumption~\ref{ass:pde}\ref{item:cont} we have that $a, \div(a) \in L^\infty(U)$ and $a$ is uniformly elliptic in $U$. The hypotheses of the lemma ensure that  $a\xi \in L^{q/2}(U)$ and $f \in L^q(U)$. Additionally, $\hat u$ belongs to $W^{1,2}(U)$ and $\hat u$ is a weak solution of \eqref{PDE_div} in $U$ in the sense that \eqref{PDE_weak} holds with $U$ in place of $D$. Thus \cite[Theorem~8.22]{gilb1988elliptic} implies that $\hat u$ is locally H\"older continuous in $U$, and hence in $D$ since $U$ was arbitrary.
\end{proof}

\begin{lem} \label{lem:smooth_regularity}
	Assume that $a, \xi \in C^{1,\alpha}(D)$, $f \in C^\alpha(D)$, and $f_D = 0$ for some $\alpha \in (0,1)$. Then $\hat u \in C^{2,\alpha}(D)$ and satisfies \eqref{PDE_div} classically in $D$. 
\end{lem}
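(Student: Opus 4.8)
The plan is to upgrade the already-established local Sobolev regularity of $\hat u$ to classical $C^{2,\alpha}$ regularity by a standard bootstrap using interior Schauder estimates, working on an exhausting sequence of nice subdomains. First I would recall from Lemma~\ref{L_var_problem_solution} (together with Lemma~\ref{L_FOC}) that $\hat u \in \Wcal_0 \subset W^{1,2}_{\mathrm{loc}}(D)$ is the unique weak solution of \eqref{PDE_weak}. Fix an arbitrary $U \Subset D$ and pick an intermediate domain $U \Subset U' \Subset D$ with smooth boundary. On $U'$ the coefficient $a$ is uniformly elliptic (bounded above and below away from $0$) since $\lambda_{\min}(a) = w = \rho^k$ is continuous and strictly positive on the compact set $\overline{U'}$, and $a, \xi \in C^{1,\alpha}(\overline{U'})$, $f \in C^\alpha(\overline{U'})$ by hypothesis. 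Rewriting \eqref{PDE_div} in non-divergence form, $\Tr(a\nabla^2 \hat u) = f + \div(a\xi) - \div(a)^\top \nabla \hat u$ (legitimate because $a \in C^{1,\alpha}$ so $\div a \in C^\alpha$), the right-hand side lies in $L^q(U')$ for every $q < \infty$ once we know $\hat u \in W^{1,q}_{\mathrm{loc}}$; this last fact follows from Theorem~\ref{T_regularity_eq} applied with any $q \in [2,\infty)$ (its hypotheses $f \in L^q_{\mathrm{loc}}$, $a\xi \in W^{1,q}_{\mathrm{loc}}$ hold here since all data are $C^\alpha$ or better on $U'$), which gives $\hat u \in W^{2,q}(U)$ and hence, by Sobolev embedding with $q > m$, $\hat u \in C^{1,\beta}(\overline U)$ for some $\beta \in (0,1)$.

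Next I would run the Schauder step: on $U'$ the equation $\Tr(a\nabla^2 \hat u) + \div(a)^\top \nabla \hat u = f + \div(a\xi)$ has $C^\alpha$ coefficients $a$ and $\div a$, and the right-hand side $f + \div(a\xi) \in C^\alpha(\overline{U'})$; since we now also know $\hat u \in C^{1,\beta}(\overline{U'})$ hence $\hat u \in C^{0,\alpha}(\overline{U'})$, the interior Schauder estimate (e.g.\ \cite[Theorem~6.2 and Corollary~6.3]{gilb1988elliptic}) upgrades $\hat u$ to $C^{2,\alpha}(\overline U)$ and shows it solves \eqref{PDE_div} classically on $U$. Since $U \Subset D$ was arbitrary, $\hat u \in C^{2,\alpha}_{\mathrm{loc}}(D) = C^{2,\alpha}(D)$ (the notation $C^{2,\alpha}(D)$ in the paper means twice continuously differentiable with locally $\alpha$-Hölder second derivatives) and satisfies \eqref{PDE_div} pointwise everywhere in $D$. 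I would also remark that the passage from the divergence-form weak equation \eqref{PDE_weak} to the pointwise identity uses that $a \in C^{1,\alpha}$: one first gets $\hat u \in W^{2,2}_{\mathrm{loc}}$ from \cite[Theorem~8.8]{gilb1988elliptic}, expands the divergence, and then bootstraps as above.

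The main obstacle — and it is a mild one — is the degeneracy of $a$ at $\partial D$ encoded in Assumption~\ref{ass:pde}\ref{item:weight}: the ellipticity constant $w = \rho^k$ is not bounded below on all of $D$, so none of the cited elliptic estimates apply globally. The remedy is exactly the localization just described: every statement is about subdomains $U \Subset D$, on whose closures $w$ is bounded below by a positive constant, so the elliptic theory applies with constants depending on $U$. Because we only claim interior ($C^{2,\alpha}(D)$, not up to the boundary) regularity, no uniform control near $\partial D$ is needed and the degeneracy never enters. A secondary point to be careful about is that the constants in the Schauder and $W^{2,q}$ estimates deteriorate as $U \to D$, but since we only need qualitative interior smoothness this is harmless.
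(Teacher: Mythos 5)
Your setup through the $W^{2,q}_{\mathrm{loc}}$ and Sobolev-embedding steps is sound (and invoking Theorem~\ref{T_regularity_eq} is a legitimate alternative to the paper's direct appeal to Krylov's $L^p$ estimate). The gap is in the final Schauder step. You cite \cite[Theorem~6.2 and Corollary~6.3]{gilb1988elliptic} to ``upgrade'' $\hat u$ from $C^{1,\beta}$ to $C^{2,\alpha}$, but those results are \emph{a priori} Schauder estimates: they hypothesize $u \in C^{2,\alpha}$ and then bound $\|u\|_{C^{2,\alpha}}$ in terms of $\|u\|_{C^0}$ and $\|f\|_{C^\alpha}$. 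They do not assert that a $W^{2,q}$ or $C^{1,\beta}$ strong solution is in fact $C^{2,\alpha}$; assuming the conclusion you want is baked into their hypotheses, so as written the step is circular.

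What you need at this point is a genuine regularity (not estimate) theorem, and there are two standard ways to supply it. The paper's route is a comparison argument: on each ball $U \Subset D$ it solves the non-divergence Dirichlet problem with boundary data $\hat u|_{\partial U}$ classically via \cite[Theorem~6.13]{gilb1988elliptic} (here the ball's exterior sphere condition and the continuity of $\hat u$ from Lemma~\ref{L_Holder_continuous_1} are used), obtaining $\hat v \in C(\overline U) \cap C^{2,\alpha}(U)$; since both $\hat u$ and $\hat v$ lie in $W^{2,p}_{\mathrm{loc}}(U) \cap C(\overline U)$, solve the equation a.e., and agree on $\partial U$, the $W^{2,p}$ uniqueness theorem \cite[Theorem~9.5]{gilb1988elliptic} (with $p=d$) forces $\hat u = \hat v$. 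Alternatively you could bypass the comparison and cite the interior regularity theorem for strong solutions, \cite[Theorem~9.19]{gilb1988elliptic}, which states that a $W^{2,p}_{\mathrm{loc}}$ strong solution of a non-divergence equation with locally $C^\alpha$ coefficients and right-hand side is in $C^{2,\alpha}_{\mathrm{loc}}$; that is the tool that does what you attributed to Theorem~6.2. Either fix closes the gap; your preliminary steps already put you in a position to apply both.
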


\begin{proof}
	Consider an open ball $U \Subset D$. We have that $a$ is uniformly elliptic and uniformly Lipschitz in $U$, $f + \div(a\xi)$ belongs to $L^2(U)$, and $\hat u \in W^{1,2}(U)$ is a weak solution of \eqref{PDE_div} in $U$ in the sense that \eqref{PDE_weak} holds with $U$ in place of $D$. Thus \cite[Theorem~8.8]{gilb1988elliptic} implies that $\hat u$ belongs to $W^{2,2}_{\mathrm{loc}}(U)$ and satisfies the PDE \eqref{PDE_div} in non-divergence form,
	\begin{equation} \label{L_Holder_continuous_2_eq0}
		\Tr(a\nabla^2 u) + \div(a)\cdot\nabla u = f + \div(a\xi),
	\end{equation}
	almost everywhere in $U$. 
	
	We claim that, in fact, $\hat u \in W^{2,p}_{\mathrm{loc}}(U)$ for all $p \in [1,\infty)$. To see this, consider any open ball $V \Subset U$ and let $p \ge 2$ (this suffices). Since $\hat u$ belongs to $W^{2,2}(V)$ and satisfies \eqref{L_Holder_continuous_2_eq0} almost everywhere in $V$, and since $f \in L^p(V)$ for all $p$, it follows from \cite[Theorem~11.2.3]{krylov2008lectures} that $\hat u$ belongs to $W^{2,p}_\mathrm{loc}(V)$ for all $p$. (Specifically, in that theorem we take $\Omega = V$, $z$ the center of $V$ and $R$ half its radius, and $(p,q)$ replaced by $(2,p)$.) Since $V$ was arbitrary, we conclude that $\hat u \in W^{2,p}_\mathrm{loc}(U)$ for all $p$ as claimed.
	
	Next, we consider the PDE \eqref{L_Holder_continuous_2_eq0} in $U$ with boundary condition $u = \hat u$ on $\partial U$. The hypotheses imply that $a$ is uniformly elliptic in $U$ and that $a$, $\div(a)$, $\div(a\xi)$ and $f$ belong to $C^\alpha(U)$. Thanks to Lemma~\ref{L_Holder_continuous_1}, $\hat u$ is continuous on $\partial U$. Because $U$ is a ball it satisfies an exterior sphere condition at every boundary point. \cite[Theorem~6.13]{gilb1988elliptic} then implies that there is a unique classical solution $\hat v \in C(\overline{U}) \cap C^{2,\alpha}(U)$.
	
	To summarize, both $\hat u$ and $\hat v$ belong to $W^{2,p}_\mathrm{loc}(U) \cap C( \overline U)$ for all $p \in [1,\infty)$ and satisfy \eqref{L_Holder_continuous_2_eq0} almost everywhere in $U$ as well as $\hat u = \hat v$ on $\partial U$. Taking $p=d$ we may apply \cite[Theorem~9.5]{gilb1988elliptic} to conclude that $\hat u = \hat v$. Thus $\hat u \in C^{2,\alpha}(U)$ and satisfies \eqref{L_Holder_continuous_2_eq0}, hence \eqref{PDE_div}, classically in $U$. Since the ball $U$ was arbitrary, the lemma is proved.
\end{proof}

\begin{proof}[Proof of Theorem~\ref{T_pde_no_param}]
	The result follows from Lemmas~\ref{L_FOC}, \ref{L_var_problem_solution} and \ref{lem:smooth_regularity}.
\end{proof}

\section{A measurability result} \label{S_measurability}

Returning to our original problem we consider now a parameter dependent version of the setting of Appendix~\ref{sec:PDE}, and assume that the functions $a, f, \xi$ are indexed by a parameter $x$ from an open set $E \subset \R^d$. We indicate this by writing $a^x, f^x, \xi^x$. The objective function $J$ then also depends on $x$ and is given by
\[
J^x(u) = \frac{1}{2} \int_D (\nabla u - \xi^x)^\top a^x (\nabla u - \xi^x) + \int_D f^x u.
\]
We suppose throughout this section that Assumption~\ref{ass:pde} is satisfied for each fixed $x$, and we write $w^x(y) = \lambda_\text{min}(a^x(y))$ for the weight function. We then obtain $x$-dependent spaces $\Wcal^x$, $\Wcal_0^x$, and norm $\|\cdot\|_{\Wcal^x}$. From Theorem~\ref{T_pde_no_param} we get, among other things, that there exists a unique $\hat u^x \in \Wcal_0^x$ that minimizes $J^x$ over $\Wcal_0^x$, for each $x$. Our goal is to prove the following regularity result.

\begin{thm} \label{T_measurable}
In addition to the above, assume that
\begin{equation} \label{T_jointly_mb_eq1}
a^x(y), f^x(y), \xi^x(y) \text{ are jointly continuous in } (x,y) \in E \times D.
\end{equation}
Then there exists a Borel measurable function $\hat v \colon E \times D \to \R$ such that for a.e.\ $x \in E$, the function $\hat v^x = \hat v(x, \cdot)$ is a version of $\hat u^x$.
\end{thm}

\begin{proof}
\textit{Step 1.} For each $N \in \N$, consider the set of functions in $C^1_c(D)$ that vanish everywhere within distance $1/N$ of the boundary of $D$ and whose gradients are $N$-Lipschitz. Note that any such function is itself bounded by $N \text{Diam}(D)$. Let $U_N$ denote the norm closure of this set in $W^{1,2}(D)$. The following properties are easily established:
\begin{enumerate}
\item\label{mb_pf_UN_prop_0} $U_N$ is convex.
\item\label{mb_pf_UN_prop_1} $U_N$ is compact. Indeed, the set $\{\nabla \varphi\colon \varphi \in U_N\}$ is equicontinuous and uniformly bounded, so the theorem of Arzel\`a--Ascoli implies that $U_N$ is even relatively compact, whence compact, in the $C^1$ topology.
\item\label{mb_pf_UN_prop_2} Every function in $U_N$ has a continuous version, and we always use this version.
\item\label{mb_pf_UN_prop_3} Every function in $C^1_c(D)$ belongs to $U_N$ for some $N$.
\item\label{mb_pf_UN_prop_4} For every $u \in U_N$ and $x \in E$, the centered function $u - u_{w^x,D}$ belongs to $\Wcal_0^x$.
\end{enumerate}
For each $N$ and $x$, we also consider the set $V_N^x$ of centered functions,
\[
V_N^x = \{u - u_{w^x,D} \colon u \in U_N\}.
\]
Thanks to \ref{mb_pf_UN_prop_3}--\ref{mb_pf_UN_prop_4} above, the union $\bigcup_{N \in \N} V_N^x$ is a dense subset of $\Wcal_0^x$ for every $x$.

\textit{Step 2.} Fix $N$. For each $x$, let $\hat u_N^x$ be the solution to the minimization problem
\[
\inf_{u \in U_N} J^x(u).
\]
The minimizer exists and is unique because $J^x$ is continuous and strictly convex on the compact convex set $U_N$ with respect to the $W^{1,2}(D)$ norm. Next, the hypothesis \eqref{T_jointly_mb_eq1} implies that $J^x(u)$ is measurable in $x$ for each fixed $u \in U_N$. Thus $(x,u) \mapsto J^x(u)$ is a Carath\'eodory function on $E \times U_N$. The measurable maximum theorem \cite[Theorem~18.19]{alip2006infinite} then yields that $x \mapsto \hat u_N^x$ is measurable from $D$ to $U_N$. Furthermore, the evaluation map $(u,y) \mapsto u(y)$ is measurable on $U_N \times D$, since it is continuous. Indeed, if $u_n, u$ are in $U_N$ with $u_n \to u$ in $L^2_\text{loc}(D)$, and $y_n$, $y$ are in $D$ with $y_n \to y$, then thanks to the uniform Lipschitz constant $N$, we have $u_n(y_n) \to u(y)$.

\textit{Step 3.} Next, we center the minimizers $\hat u_N^x$. Specifically, we define
\[
\hat v_N^x = \hat u_N^x - (\hat u_N^x)_{w^x,D} \in V_N^x.
\]
Thanks to Assumption~\ref{ass:pde}\ref{item:zero_mean} $J^x$ is unaffected by constant shifts in the sense that $J^x(u) = J^x(u + c)$ for any $u \in U_N$ and $c \in \R$. It follows that $\hat v_N^x$ minimizes $J^x$ over $V_N^x$. Moreover, Fubini's theorem implies that $x \mapsto (\hat u_N^x)_{w^x,D}$ is measurable, so $(x,y) \mapsto \hat v_N^x(y)$ inherits joint measurability from $(x,y) \mapsto \hat u_N^x(y)$.

We can now pass to the limit as $N \to \infty$. For each fixed $x$, the sequence $(\hat v_N^x)_{N \in \N}$ is bounded in $\Wcal_0^x$ thanks to \eqref{eq_J_bounds}, so has a weakly convergent subsequence. We denote the weak limit of this subsequence by $\hat w^x$. By weak lower semicontinuity, $J^x(\hat w^x) \le \liminf_{N \to \infty} J^x(\hat v_N^x) = \inf_v J^x(v)$, where the infimum is taken over the union $\bigcup_{N \in \N} V_N(x)$. As remarked above, this union is dense in $\Wcal_0^x$. We conclude that, as an element of $\Wcal_0^x$, $\hat w^x$ is equal to $\hat u^x$, the unique minimizer of $J^x$ over $\Wcal_0^x$. Since this holds for any subsequence of $(\hat v_N^x)_{N \in \N}$, the sequence actually converges weakly in $\Wcal_0^x$ to $\hat u^x$.

\textit{Step 4.} We are now in the position to select versions of $\hat u^x$ that ensure joint measurability. The functions $\hat v_N \colon (x,y) \mapsto \hat v_N^x(y)$ are jointly measurable on $E \times D$. Consider an increasing sequence of open sets $U_k \Subset E \times D$, $k \in \N$, whose union is $E \times D$. Thanks to \eqref{eq_J_bounds} and the joint continuity Hypothesis \eqref{T_jointly_mb_eq1}, the sequence $(\hat v_N)_{N \in \N}$ is bounded in $L^1(U_k)$ for each $k$. We may then apply Koml\'os lemma on each $U_k$ together with a diagonal argument to get a sequence $(\tilde{v}_N)_{N \in \N}$ of forward convex combinations of the $\hat v_N$ that converges a.e.\ to some measurable limit $\hat{v}$.

Now, for every $x$ in some full-measure set $E' \subset E$, $\tilde{v}_N^x$ converges to $\hat{v}^x$ a.e. Moreover, for every $x$, the sequence $\tilde{v}_N^x$ still converges weakly to $\hat u^x$. It follows that, for every $x \in E'$, $\hat{v}^x$ is a version of $\hat u^x$. This completes the proof.
\end{proof}

    \section{Proofs for Section~\ref{sec:results}} \label{sec:proofs}
	The purpose of this section is to prove the remaining results of Section~\ref{sec:results}.
 
    \subsection{Proofs for Section~\ref{sec:main_result}}
    
    We start with Proposition~\ref{prop:hat_phi}.
	\begin{proof}[Proof of Proposition~\ref{prop:hat_phi}]
	The existence and regularity of $\hat \phi$ as well as \eqref{eqn:phi_pde} will follow from \cite[Lemma~A.1]{kardaras2021ergodic} once we establish that $\int_E \div A^\top A^{-1} \div A < \infty$. To this end for a measurable vector field $\Psi:E \to \R^d$ define the quantity
	\[H(\Psi) = \int_F (\ell_X(x,y) - \Psi(x))^\top c_X(x,y)(\ell_X(x,y) - \Psi(x))p(x,y)\, dx\, dy.\]
	Clearly $H$ is nonnegative and by choosing $\Psi = 0$ we get the bound $0 \leq \inf_{\Psi} H(\Psi) \leq \int_F \ell_X^\top c_X \ell_X p < \infty$, where finiteness is due to Assumption~\ref{ass:nondegeneracy}\ref{item:technical_1}. A direct calculation integrating out the $y$ component shows that
	\[H(\Psi) = \int_E (\Psi^\top A \Psi - 2 \div A^\top \Psi) + \int_F \ell_X^\top c_X \ell_X p.\]
	Minimizing the integrand pointwise yields 
	\[\inf_\Psi H(\Psi) = H(A^{-1}\div A) = - \int_E \div A^\top A^{-1}\div A + \int_F \ell_X^\top c_X \ell_X\, p.\]
	Since this is nonnegative we obtain $\int_E \div A^\top A^{-1}\div A \leq \int_F \ell_X^\top c_X \ell_X p < \infty$.
	
	Thus is just remains to prove that $\hat \phi$ minimizes \eqref{eqn:argmin}. To this end recall the measure $\P^0$ constructed in Section~\ref{sec:assumptions}, where we omit the starting point for notational convenience. Using the dynamics \eqref{eqn:tilde_p_dynamics} and arguing in the exact same way as in \cite[Lemma~3.4]{itkin2020robust} yields that for any portfolio $\theta_t = \theta(X_t,Y_t)$ in feedback form we have 
	\begin{equation} \label{eqn:P0_growth} g(\theta;\P^0) = \frac12\int_F\left(\ell_X^\top c_X \ell_X -(\ell_X - \theta)^\top c_X(\ell_X - \theta)\right)p.
	\end{equation} 
    Indeed, the proof of \cite[Lemma~3.4]{itkin2020robust} only relies on the optimal growth rate being finite and the ergodic property, both of which hold under $\P^0$. Now let $\phi \in \Dcal$ be given and consider the trading strategy $\theta^\phi_t = \nabla \phi(X_t)$. Then from \eqref{eqn:P0_growth} by integrating out the $y$ component we see that
    \[g(\theta^\phi;\P^0) = \frac18\int_E \div A^\top A^{-1} \div A - \frac12\int_E(\frac12 A^{-1} \div A - \nabla \phi)^\top A (\frac12 A^{-1} \div A - \nabla \phi). \]
    However, by the growth rate invariance property we have for any measure $\P \in \Pi_{\emptyset}$ -- and in particular for $\P^0$ -- that the growth rate of $\theta^\phi$ is given by \eqref{eqn:c2_growth_rate}. Equating the two expressions for the growth rate shows that minimizing the integral on the right hand side of \eqref{eqn:argmin} over $\phi \in \Dcal$ is equivalent to minimizing the integral in \eqref{eqn:unconstrained_min_problem} over $\phi \in \Dcal$. Since $\hat \phi \in \Dcal \subset W^{1,2}_{\mathrm{loc}}(E)$ is a minimizer of \eqref{eqn:unconstrained_min_problem} it follows that $\hat \phi$ also satisfies \eqref{eqn:argmin}. This completes the proof.
    	\end{proof}
    	
    We now work towards proving Lemma~\ref{lem:hat_v} and Theorem~\ref{thm:worst_case_measure_existence}. To this end fix a compact set $K \subset E$ and choose an open set $V$ such that $K \subset V \Subset E$. Let $\eta \in C_c^\infty(E)$ be nonnegative and such that $\eta = 1$ on $K$ and $\eta = 0$ on $E \setminus V$. Define the $K$-modification
    \begin{equation} \label{eqn:K-mod}
        \tilde c_X(x,y) = \eta(x)c_X(x,y) + \frac{1-\eta(x)}{p(x,y)}\frac{A(x)}{|D|}
    \end{equation}
    and set $f^x(y) = -\div_x(\tilde c_X(\tilde \ell_X - \nabla \hat \phi)p)(x,y)$ for $(x,y) \in F$. We are now ready to prove Lemma~\ref{lem:hat_v}.
    \begin{proof}[Proof of Lemma~\ref{lem:hat_v}]
    Fix $x \in E$. Since $\tilde c_X \in \Ccal_K$ and $\hat \phi$ satisfies \eqref{eqn:phi_pde} we have that
    \[\int_D f^x = - \int_D (\div_x(\tilde c_X(\tilde \ell_X - \nabla \hat \phi)p)(x,y)\, dy = -\div(\frac{1}{2}\div A(x) - A(x) \nabla \hat \phi(x)) = 0,\] so that $f^x_D = 0$.
    Moreover, by Assumption~\ref{ass:nondegeneracy}\ref{item:assum_new_3}, together with the form of the $K$-modification \eqref{eqn:K-mod} we have that $f^x/\sqrt{\lambda_{\mathrm{min}}(c_Y(x,\cdot))p(x,\cdot)}\in L^2(D)$ (the precise values of $C,b$ and $M$ used to apply Assumption~\ref{ass:nondegeneracy}\ref{item:assum_new_3} can be computed by expanding out the divergence term in the numerator and using the fact that $A,\nabla \hat \phi$ and $\eta$ do not depend on $y$). Hence, we can apply Theorem~\ref{T_pde_no_param} (with $a = c_Yp(x,\cdot)$, $\xi =  \ell_Y(x,\cdot)$ and $f= f^x$) to obtain a $\hat v(x,\cdot) \in W^{1,2}_{\mathrm{loc}}(D)$ satisfying \eqref{PDE_weak}. Thanks to Theorem~\ref{T_measurable} we may choose $(x,y) \mapsto \hat v(x,y)$ jointly Borel measurable, while preserving the above properties on a full-measure subset of $x \in E$. Next, since $\psi(x,\cdot) \in C_c^1(D)$ whenever $\psi \in C_c^1(F)$ we obtain \eqref{eqn:v_weak_pde} by integrating \eqref{PDE_weak} over $x \in E$, which establishes \ref{item:hat_v_pde}. 
    
    Theorem~\ref{T_pde_no_param} additionally yields that $\hat v(x,\cdot)$ satisfies the bound \eqref{PDE_norm_bound_1} for every $x \in E$. By squaring both sides, integrating over $x \in E$ and applying elementary bounds we obtain the first inequality in \eqref{eqn:y_l2_bound}. Consequently \ref{item:hat_v_bound} will follow once we show finiteness of the right hand side. By Assumption~\ref{ass:nondegeneracy}\ref{item:technical_1} we have that $\int_F \ell_Y^\top c_Y \ell_Y\, p < \infty$, so it just remains to prove that
    \[\int_F \frac{(\div_x(\tilde c_X (\tilde \ell_X - \nabla \hat \phi)p))^2}{\lambda_{\min}(c_Y)p} < \infty.\]
    To this end note that when $x \in E \setminus V$ then $\tilde c_X(x,y) = A(x)/(|p(x,y)|D|)$. Using the fact that $\hat \phi$ satisfies \eqref{eqn:phi_pde} we have whenever $x \in E \setminus V$ that
    \[\div_x(\tilde c_X \tilde \ell_X p)(x,y)= \div(\frac{1}{2}\div A(x))/|D| = \div(A(x)\nabla \hat \phi(x))/|D| = \div_x(\tilde c_X\nabla \hat \phi\, p)(x,y).\]
    Consequently, the numerator of the integrand is identically zero in this set so that
    \[\int_F \frac{(\div_x(\tilde c_X (\tilde \ell_X - \nabla \hat \phi)p))^2}{\lambda_{\min}(c_Y)p} = \int_V \int_D \frac{(\div_x(\tilde c_X (\tilde \ell_X - \nabla \hat \phi)p))^2}{\lambda_{\min}(c_Y)p}(x,y)\, dy\, dx.\]
    Again by Assumption~\ref{ass:nondegeneracy}\ref{item:hat_v_bound} we have that the inner integral is finite for every $x$, so the finiteness of the double integral follows from the continuity in $x$ of all the ingredients (i.e.\ continuity of $c_Y$ and continuity of derivatives of $\tilde c_X, p, \nabla \hat \phi,\eta$) on $\overline V$, which establishes \ref{item:hat_v_bound}.
    
It now just remains to prove that $\nabla_y \hat v \in L^q_{\mathrm{loc}}(F;\R^m)$ for every $q \in [2,\infty)$. Fix such a $q$. The main tool for proving this will be Theorem~\ref{T_regularity_eq}, but we first need to ensure that that the constant $C$ appearing in \eqref{eqn:lp_loc_bound} is independent of $x$. To this end let $U \Subset F$ be given and note that $U \subset U_D \times U_E$ for some $U_D \Subset D$ and $U_E \Subset E$. Pick $\varepsilon \in (0,\inf_{y \in U_D}\delta_{\partial U}(y))$ and set $U_{D,\varepsilon} := \{y \in \R^d: \delta_{U_D}(y) < \varepsilon\}$. Note that $U_{D,\varepsilon} \Subset D$. Next, since $c_Yp$ is uniformly elliptic on $U_\varepsilon \times U_{D,\varepsilon}$ and,  by Assumption~\ref{ass:regularity}\ref{item:regular_3} and Assumption~\ref{ass:nondegeneracy}\ref{item:c_y_sobolev}, $\div_y(c_Yp)$ is bounded on $U_\varepsilon \times U_{D,\varepsilon}$ we can choose a $\kappa \in (0,1)$ such that 
	\[\inf_{x \in U_E}\inf_{y \in U_{D,\varepsilon}} \lambda_{\min}(c_Yp(x,y)) \geq \kappa, \quad \|c_Yp\|_{L^\infty(U_E \times U_{D,\varepsilon})} \leq \kappa^{-1}, \quad \|\div_y (c_Yp)\|_{L^\infty(U_E \times U_{D,\varepsilon})} \leq \kappa^{-1}.
	\]
	Theorem~\ref{T_regularity_eq} now yields 
	\begin{equation} \label{eqn:loc_int_bnd}
	\begin{split}\|\hat v(x,\cdot)\|_{W^{2,q}(U_D)} \leq & C\Biggl(\|f^x\|_{L^q(U_{D,\varepsilon})} + \|\div_y(c_Y\ell_Y p)(x,\cdot)\|_{L^q(U_{D,\varepsilon})} \\
	& \hspace{0.5cm}+  \left\| \frac{f^x}{\sqrt{\lambda_{\mathrm{min}}(c_Y(x,\cdot))p(x,\cdot)}}\right\|_{L^2(D)} + \left(\int_D \ell_Y^\top c_Y \ell_Yp(x,y)\, dy\right)^{1/2}\Biggr),
	\end{split}
	\end{equation} 
	where the constant $C > 0$ is independent of $x \in U_E$ by our choice of $\kappa$. Since $f^x, c_Y$ and $p$ are continuous in $x$, the functions $\ell_Y,c_Y,p$ are locally bounded and $\div_y(c_Y\ell_Yp) \in L^q(U_E \times U_{D,\varepsilon})$, courtesy of Assumption~\ref{ass:regularity}\ref{item:regular_3} and Assumption~\ref{ass:nondegeneracy}\ref{item:c_y_sobolev}, we can raise both sides of \eqref{eqn:loc_int_bnd} to the power $q$ and integrate over $x \in U_E$ to establish that $\int_{U_E \times U_D} |\nabla_y \hat v(x,y)|^q \, dx \, dy < \infty$. Since $U \subset U_E\times U_D \Subset F$ was arbitrary we see that $\nabla_y \hat v \in L^q_{\mathrm{loc}}(F;\R^m)$. This completes the proof.     \end{proof}
We now turn our attention to proving Theorem~\ref{thm:worst_case_measure_existence}. The proof is broken up into several lemmas. Our method of construction uses the theory of \emph{generalized Dirichlet forms}. We refer to \cite{lee2020analyt} for terminology used below that is not explicitly defined in this paper. We start by introducing the relevant objects. It will be convenient to set $\tilde c = \mathrm{diag}(\tilde c_X, c_Y)$ and $\tilde \ell = (\tilde \ell_X, \ell_Y)$. We define the symmetric Dirichlet form $(\widetilde \Ecal^0, D(\widetilde \Ecal^0))$ as the closure on $L^2(F,\mu)$ of   
\begin{equation}\label{eqn:Ecal0}
\widetilde \Ecal^0(u,v) = \int_F \nabla u^\top \tilde c\, \nabla v\, p \qquad u,v \in C_c^\infty(F)
\end{equation}
and its corresponding generator $(\widetilde L^0,D(\widetilde L^0))$, which when acting on functions $ u \in C_c^\infty(F)$ has the form
\[\widetilde L^0 u  = \frac{1}{2}\Tr(\tilde c\, \nabla^2u) + \tilde \ell\,^\top \tilde c\,  \nabla u.\]
These objects are as in \eqref{eqn:symmetric_dirichlet_form_def} and \eqref{eqn:generator_reversible} with $c$ replaced by $\tilde c$. Next we define 
$\beta$ via
\begin{equation} \label{eqn:beta_def}
	\beta(x,y) = \tilde c(x,y)\left( \begin{pmatrix}
		\nabla \hat \phi(x) \\ \nabla_y \hat v(x,y) 	\end{pmatrix} - \tilde \ell(x,y)\right) = \begin{pmatrix}
		\tilde c_X(x,y)(\nabla\hat \phi(x) - \tilde \ell_X(x,y)) \\
		c_Y(x,y)(\nabla_y\hat v(x,y) - \ell_Y(x,y) )
	\end{pmatrix},
\end{equation}
for a.e.\ $(x,y) \in F$.
The first course of business is to obtain the existence and regularity of a certain semigroup.
\begin{lem} \label{lem:technical_lemma}
There exists an operator $(\widehat L^K, D(\widehat L^K))$ on $L^1(F,\mu)$ such that the following hold:
\begin{enumerate}
	\item $C_c^\infty(F) \subset D(\widehat L^K)$ and 
\begin{equation} \label{eqn:widehat_L_def}
	\widehat L^K u = \widetilde L^0u + \beta^\top \nabla u 
	;\qquad u \in C_c^\infty(F).
\end{equation}
\item \label{item:generalized_dirichlet_form}
 For every bounded $u \in D(\widehat L^K)$ and every compactly supported and bounded $v \in W^{1,2}(F)$ we have
\begin{equation} \label{eqn:generalized_ibp}
	\widetilde \Ecal^0(u,v) - \int_F v\beta^\top \nabla u \, p = -\int_F v\widehat L^Ku\, p.
\end{equation}
\item $(\widehat L^K, D(\widehat L^K))$ generates a strongly continuous contraction semigroup $(\widehat T_t^K)_{t \geq 0}$ on $L^1(F,\mu)$. Moreover $\widehat T_t^Kf$ has a continuous version $\widehat P_t^Kf$ for every $f \in \Bcal_b(F)$ and $t > 0$. \label{item:semigroup}
\end{enumerate}
\end{lem}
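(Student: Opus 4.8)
The plan is to realize $(\widehat L^K, D(\widehat L^K))$ as the operator associated with a \emph{generalized Dirichlet form} obtained by perturbing the symmetric form $(\widetilde\Ecal^0, D(\widetilde\Ecal^0))$ by the first-order operator $\beta^\top\nabla$, and then to invoke the analytic theory of \cite{lee2020analyt}. Two structural properties of $\beta$ drive the argument. First, writing $\Psi := (\nabla\hat\phi, \nabla_y\hat v)$ for the $\R^{d+m}$-valued map appearing in \eqref{eqn:beta_def}, so that $\beta = \tilde c\,(\Psi - \tilde\ell)$, the field $\beta$ lies in the weighted $L^2$-space naturally attached to the form: $\int_F\beta^\top\tilde c^{-1}\beta\,p = \int_F(\Psi-\tilde\ell)^\top\tilde c\,(\Psi-\tilde\ell)\,p \le 2\int_F\Psi^\top\tilde c\,\Psi\,p + 2\int_F\tilde\ell^\top\tilde c\,\tilde\ell\,p < \infty$. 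Indeed $\int_F\Psi^\top\tilde c\,\Psi\,p = \int_F\nabla\hat\phi^\top\tilde c_X\nabla\hat\phi\,p + \int_F\nabla_y\hat v^\top c_Y\nabla_y\hat v\,p$, where the first summand equals $\int_E\nabla\hat\phi^\top A\nabla\hat\phi = 2\lambda$ because $\int_D\tilde c_X(x,\cdot)p(x,\cdot) = A(x)$ (cf.\ \eqref{eqn:lambda_int}) and the second is finite by \eqref{eqn:y_l2_bound}; and $\int_F\tilde\ell^\top\tilde c\,\tilde\ell\,p < \infty$ by Assumption~\ref{ass:nondegeneracy}\ref{item:technical_1} together with the estimates for the $K$-modification \eqref{eqn:K-mod} already used in the proof of Lemma~\ref{lem:hat_v}. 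Second, $\beta$ is \emph{weakly $\mu$-divergence free}: $\int_F\beta^\top\nabla v\,p = 0$ for every bounded, compactly supported $v\in W^{1,2}(F)$. This follows from the weak equation \eqref{eqn:v_weak_pde} for $\hat v$: testing it against $v$ and integrating the $x$-term by parts gives $\int_F(\nabla\hat\phi-\tilde\ell_X)^\top\tilde c_X\nabla_x v\,p = -\int_F(\nabla_y\hat v-\ell_Y)^\top c_Y\nabla_y v\,p$, so the $X$- and $Y$-blocks of $\int_F\beta^\top\nabla v\,p$ cancel. Equivalently, $\widehat L^K = \widetilde L^0 + \beta^\top\nabla$ in nondivergence form is $\frac12\Tr(\tilde c\,\nabla^2) + \Psi^\top\tilde c\,\nabla$, the generator of the SDE \eqref{eqn:worst_case_dynamics}, for which $\mu$ is infinitesimally invariant; since $\Psi$ is in general not a gradient and $\beta$ is only in $L^2(F,\mu)$ rather than bounded, the form $\widetilde\Ecal^0$ perturbed by $\beta^\top\nabla$ is not sectorial, which is precisely why the generalized Dirichlet form framework is needed.

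With these facts in hand I would feed the pair consisting of $(\widetilde\Ecal^0, D(\widetilde\Ecal^0))$ and the drift operator $\Lambda u := \beta^\top\nabla u$ into the construction of \cite{lee2020analyt}: the symmetric part is a strongly local Dirichlet form on $L^2(F,\mu)$ whose coefficient matrix $\tilde c$ is continuous, locally uniformly elliptic and in $W^{1,\infty}_{\mathrm{loc}}$ by Assumption~\ref{ass:regularity} and the explicit form \eqref{eqn:K-mod}, while the perturbation $\Lambda$ has $\tilde c^{-1/2}\beta\in L^2(F,\mu)$ and, thanks to the divergence-free property, is dissipative on $L^2$ (indeed $\int_F(\beta^\top\nabla u)\,u\,p = \frac12\int_F\beta^\top\nabla(u^2)\,p = 0$) and mass preserving ($\int_F\widehat L^K u\,p = 0$). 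The theory then yields a closed operator $(\widehat L^K, D(\widehat L^K))$ generating a strongly continuous, sub-Markovian contraction semigroup $(\widehat T_t^K)_{t\ge0}$ on $L^1(F,\mu)$ with $C_c^\infty(F)\subset D(\widehat L^K)$ and action \eqref{eqn:widehat_L_def}; this gives (i). For (ii), the identity \eqref{eqn:generalized_ibp} is built into the definition of the form: for $u\in C_c^\infty(F)$ it is immediate from \eqref{eqn:widehat_L_def} and the integration-by-parts formula relating $\widetilde L^0$ and $\widetilde\Ecal^0$, and it extends to all bounded $u\in D(\widehat L^K)$ and all bounded compactly supported $v\in W^{1,2}(F)$ by the approximation arguments in \cite{lee2020analyt}, using sub-Markovianity of $(\widehat T_t^K)_{t\ge0}$ to keep truncations bounded in the relevant graph norm.

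It remains to prove the regularizing (strong Feller type) half of (iii). For $f\in\Bcal_b(F)$ and $\alpha>0$, the resolvent $\widehat R_\alpha^K f := \int_0^\infty e^{-\alpha t}\widehat T_t^K f\,dt$ is a weak solution of $\alpha u - \frac12\Tr(\tilde c\,\nabla^2 u) - \Psi^\top\tilde c\,\nabla u = f$ on $F$; the principal part has locally bounded, locally elliptic coefficients and the full drift $\tilde c\,\Psi$ lies in $L^q_{\mathrm{loc}}(F;\R^{d+m})$ for \emph{every} $q\in[2,\infty)$, by continuity of $\nabla\hat\phi$ (Proposition~\ref{prop:hat_phi}) and the regularity of $\nabla_y\hat v$ in Lemma~\ref{lem:hat_v}\ref{item:hat_v_regularity}. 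Elliptic $L^q$-regularity with $q>d+m$ (De Giorgi--Nash--Moser estimates, applied as in the regularity results of \cite{baur2013construction, lee2020analyt}) then produces a locally H\"older continuous version of $\widehat R_\alpha^K f$; transferring this to the semigroup, for instance via the parabolic analogue of the same estimates, gives the continuous version $\widehat P_t^K f$ for every $t>0$. I expect this last step to be the main obstacle: because $\nabla_y\hat v$ is only in $L^q_{\mathrm{loc}}$ and not continuous, classical Schauder theory does not apply, and one must rely on the recent low-regularity results for generalized Dirichlet forms in \cite{lee2020analyt}, carefully checking their standing hypotheses --- in particular the global $L^2(F,\mu)$ control of $\beta$ from the first paragraph, which is what upgrades the purely local elliptic estimates to a genuine semigroup on all of $F$.
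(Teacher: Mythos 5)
Your proposal takes essentially the same route as the paper: identify the two key properties of $\beta$ --- local $L^q$ integrability for $q > d+m$ (from the continuity of $\nabla\hat\phi$ and Lemma~\ref{lem:hat_v}\ref{item:hat_v_regularity}) and weak $\mu$-divergence-freeness via the PDE \eqref{eqn:v_weak_pde} for $\hat v$ --- and then feed these into the generalized Dirichlet form machinery of \cite{stannat1999dirichlet,lee2020analyt}. Concretely, the paper cites \cite[Theorem~1.5]{stannat1999dirichlet} for items (i)--(ii) and \cite[Theorem~2.31]{lee2020analyt} together with \cite[Remark~2.40]{lee2020analyt} for (iii); your divergence-free verification (integrate the $x$-term of \eqref{eqn:v_weak_pde} by parts so that the $X$- and $Y$-blocks of $\int_F\beta^\top\nabla v\,p$ cancel) is an equivalent rewriting of the paper's computation, which instead expands $\div(\beta p)$ and then applies the same equation. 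One remark: your leading estimate $\int_F \beta^\top\tilde c^{-1}\beta\,p<\infty$ is correct, and the identification of $\int_F\nabla\hat\phi^\top\tilde c_X\nabla\hat\phi\,p$ with $2\lambda$ is a nice observation, but that global $L^2(\mu)$ control of $\beta$ is not what this lemma uses --- it is exactly what drives the recurrence argument in Lemma~\ref{lem:process_properties}, whereas the construction here rests only on the local $L^q$ bound and the divergence-free identity.
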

\begin{remark}
	In view of Lemma~\ref{lem:technical_lemma}\ref{item:semigroup} we will use the version $\widehat P^K_t$ of $\widehat T^K_t$ in the sequel.
\end{remark}
\begin{proof}
To apply the results of \cite{lee2020analyt} and \cite{stannat1999dirichlet} we need to verify that $\beta \in L^q_{\mathrm{loc}}(F;\R^{d+m})$ for some $q > d+m$ and 
\begin{equation} \label{eqn:beta_condition}
\int_F (\widetilde L^0u + \beta^\top \nabla u)p = 0
\end{equation} for every $u \in C_c^\infty(F)$.
Since $\widetilde L^0u\, p = \div(\tilde c\, \nabla u p)$ the divergence theorem yields $\int_F \widetilde L^0u\, p = 0$ for every $u \in C_c^\infty(F)$. Next note that by the local boundedness of $\tilde c,\tilde \ell$ and $\nabla \hat \phi$ together with Lemma~\ref{lem:hat_v}\ref{item:hat_v_regularity} we see that $\beta \in L^q_{\text{loc}}(F;\R^{d+m})$ for any $q \in [2,\infty)$;  in particular for $q > d+m$. Moreover, we have by the divergence theorem that for any $u \in C_c^\infty(F)$, 
\begin{equation} \label{eqn:beta_divergence_free}
	\begin{split}
		\int_F \beta^\top \nabla u\, p & = - \int_F \div (\beta p) u  \\
		& = \int_E \int_D \left(\div_x(\tilde c_X(\tilde \ell_X - \nabla \hat\phi)p) + \div_y (c_Y(\ell_Y - \nabla_y \hat v)p) \right)u(x,y)\,dy\,dx \\
		&= \int_E \int_D \left(\div_x(\tilde c_X(\tilde \ell_X - \nabla \hat\phi)p)u(x,y) - (\ell_Y - \nabla_y \hat v)^\top c_Y\nabla_yu\, p(x,y)\right)dy\,dx  = 0,
	\end{split}
\end{equation}
where the last equality follows from \eqref{eqn:v_weak_pde}. The first two items of the lemma now follow from \cite[Theorem~1.5]{stannat1999dirichlet}, while the last item follows from \cite[Theorem~2.31]{lee2020analyt}, which is applicable courtesy of \cite[Remark~2.40]{lee2020analyt}.
\end{proof}

Next we establish existence of the process. To construct the process we initially augment the state space $F$ with a cemetery state $\Delta$. To this end let $F_\Delta = F \cup \{\Delta\}$ be the one-point compactification of $F$. We set 
\[\Omega_\Delta := \{\omega \in C([0,\infty), F_\Delta): \omega_{t+h} = \Delta \text{ if } \omega_t = \Delta \text{ for all } h,t \geq 0\},\]
let $\Fcal_{\Delta}$ be the Borel $\sigma$-algebra induced by the topology of locally uniform convergence and (with a slight abuse of notation) denote the coordinate process by $Z$.
\begin{lem} \label{lem:process_existence}
	There exists a diffusion process \[\mathbb{M} =  (\Omega_\Delta,\F_\Delta,(\Fcal_t)_{t \geq 0}, (Z_t)_{t \geq 0},(\widehat \P^K_z)_{z \in F_\Delta})\] with state space $F$, lifetime 
	$\zeta := \inf\{t\geq0: Z_t = \Delta\}$ and transition semigroup $(\widehat P^K_t)_{t \geq0}$ given by Lemma~\ref{lem:technical_lemma}\ref{item:semigroup}. That is, for every $t \geq 0$, $z \in F$ and $f \in \Bcal_b(F)$ it holds that $\widehat P_t^Kf(z) = \widehat \E^K_z[f(Z_t)]$ where $\widehat \E^K_z[\cdot]$ denotes expectation under $\widehat \P^K_z$.  
\end{lem}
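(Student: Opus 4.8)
The plan is to extract $\mathbb{M}$ from the semigroup $(\widehat P_t^K)_{t \ge 0}$ of Lemma~\ref{lem:technical_lemma}\ref{item:semigroup} via the construction machinery for generalized Dirichlet forms developed in \cite{lee2020analyt}, with the analytic input supplied by \cite{stannat1999dirichlet}. The first step is to promote the semigroup to a genuine sub-Markovian transition function on $F$. Since $\widehat P_t^K f$ is a \emph{continuous}, hence everywhere defined, version for every $f \in \Bcal_b(F)$ and every $t>0$, the Riesz representation theorem produces for each $z \in F$ and $t>0$ a sub-probability kernel $\widehat P_t^K(z,\cdot)$ on $(F,\Bcal(F))$, and the $L^1(F,\mu)$ semigroup identity $\widehat P_{t+s}^K = \widehat P_t^K\widehat P_s^K$ upgrades to the pointwise Chapman--Kolmogorov equation by continuity in $z$ of both sides. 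Extending each kernel to $F_\Delta$ by routing the mass defect $1 - \widehat P_t^K 1(z)$ to the absorbing cemetery $\Delta$ yields a conservative Markovian transition function on $(F_\Delta,\Bcal(F_\Delta))$, to which Kolmogorov's extension theorem associates a Markov family $(\widehat\P_z^K)_{z\in F_\Delta}$ on $F_\Delta^{[0,\infty)}$ with the prescribed finite-dimensional marginals.

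The second step is to realise this family on $\Omega_\Delta$, i.e.\ to produce a modification whose trajectories are continuous into $F_\Delta$ and absorbed at $\Delta$, and which has the strong Markov property. Here the generalized Dirichlet form structure recorded in Lemma~\ref{lem:technical_lemma}\ref{item:generalized_dirichlet_form} is what does the work: the pair formed by the symmetric form $\widetilde\Ecal^0$ and the first-order perturbation $\beta\cdot\nabla$ — with $\beta$ divergence-free in the sense of \eqref{eqn:beta_divergence_free} and lying in $L^q_{\mathrm{loc}}(F;\R^{d+m})$ for some $q>d+m$ by Lemma~\ref{lem:hat_v}\ref{item:hat_v_regularity} — is exactly a generalized Dirichlet form of the class treated in \cite{lee2020analyt}, and it is quasi-regular and strongly local because its principal part is the strongly local diffusion form generated by $\tilde c$. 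The construction results of \cite{lee2020analyt}, combined with the strong Feller property from Lemma~\ref{lem:technical_lemma}\ref{item:semigroup} (which supplies the absolute continuity needed to remove exceptional starting points), then deliver an associated special standard process with continuous sample paths on $[0,\zeta)$, which can be started from every $z\in F$ and whose transition semigroup is $(\widehat P_t^K)$; this process is $\mathbb{M}$ and $\zeta$ is its lifetime. If one prefers a hands-on substitute for the path-regularity step, one can apply Dynkin's formula to $u\in C_c^\infty(F)$ — using that $\widehat L^K u = \widetilde L^0 u + \beta^\top\nabla u$ is bounded on $\mathrm{supp}(u)$, by elliptic regularity and $\beta\in L^q_{\mathrm{loc}}$ — to obtain moment estimates $\widehat\E_z^K[\,|u(Z_t)-u(Z_0)|^4\,]\le C_u\,t^2$, run a Kolmogorov--Chentsov argument for the process stopped at the exit time of each set $F_n$ in an exhaustion $F_n\Subset F_{n+1}\uparrow F$, and patch the resulting continuous pieces, using conservativeness on $F_\Delta$ to control the approach to $\partial F$.

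I expect the principal obstacle to lie in the second step, and more precisely in two coupled points: passing from a process defined for \emph{quasi-every} starting point — which is all the abstract generalized Dirichlet form theory grants a priori — to one defined for \emph{every} $z\in F$, and confirming that the trajectories are continuous rather than merely c\`adl\`ag. Both are resolved by ingredients already isolated: the strong Feller property of Lemma~\ref{lem:technical_lemma}\ref{item:semigroup} eliminates the exceptional set, while the strong locality of $\widetilde\Ecal^0$ — a consequence of $\tilde c$ being a genuine covariance matrix, with no accompanying jump kernel — forces continuity of paths. The remaining verifications, namely quasi-regularity, the structural condition D3, and the sector and closability conditions required by the construction theorems of \cite{lee2020analyt} for $(\widehat L^K, D(\widehat L^K))$, are routine given the regularity of $\tilde c$, $p$ and $\beta$ recorded in Assumptions~\ref{ass:regularity}--\ref{ass:nondegeneracy}, Lemma~\ref{lem:hat_v} and Lemma~\ref{lem:technical_lemma}.
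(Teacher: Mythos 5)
Your plan is in substance the same as the paper's: both rely on the construction theory for generalized Dirichlet forms from \cite{stannat1999dirichlet} and \cite{lee2020analyt} to produce a diffusion associated with $(\widehat P_t^K)$, and both use the strong Feller regularity recorded in Lemma~\ref{lem:technical_lemma}\ref{item:semigroup} to upgrade from a process defined for almost every starting point to one defined for every $z \in F$. Two remarks on the differences. First, your preliminary step (Riesz representation for the kernels, pointwise Chapman--Kolmogorov, Kolmogorov extension on $F_\Delta^{[0,\infty)}$) is not needed: the results you invoke in the second step already build the process directly with the correct path regularity, so the crude raw-path-space family you construct is discarded rather than upgraded — the paper instead cites \cite[Theorem~3.5, Proposition~3.6]{stannat1999dirichlet} together with \cite[Theorem~6]{trutnau2005hunt} to obtain a genuine diffusion agreeing with $\widehat P_t^K$ for a.e.\ $z$, and then follows \cite[Section~3 up to Theorem~3.11]{lee2020analyt} to remove the exceptional set. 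Second, the ``hands-on'' alternative you sketch for path continuity (Dynkin's formula $\Rightarrow$ fourth-moment bound $\Rightarrow$ Kolmogorov--Chentsov) is under-justified as written: the bound $\widehat\E_z^K[|u(Z_t)-u(Z_0)|^4]\le C_u t^2$ is the kind of estimate one usually derives \emph{after} knowing $u(Z_t)$ is a semimartingale with controlled quadratic variation, which is precisely what is being constructed; making this route rigorous would require working at the semigroup level (iterating the identity $\widehat P_t^K u - u = \int_0^t \widehat P_s^K \widehat L^K u\,ds$ to control even moments of increments), and at that point one is essentially redoing the Dirichlet-form machinery by hand. Your identification of the central obstacle — exceptional starting points and c\`adl\`ag-versus-continuous paths — and the correct resolutions (strong Feller property, strong locality of the principal part) is accurate.
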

\begin{proof}
	By \cite[Theorem~3.5 and Proposition~3.6]{stannat1999dirichlet} together with \cite[Theorem~6]{trutnau2005hunt} we obtain a diffusion process $ \widecheck {\mathbb{M}} = (\Omega_\Delta, \F_\Delta, (\widecheck \F_t)_{t \geq 0 }, (\widecheck Z_t)_{t \geq 0}, (\widecheck \P^K_z)_{z \in F_\Delta})$ with state space $F$, lifetime $\widecheck \zeta := \inf\{t \geq 0: \widecheck Z_t = \Delta\}$ and such that for every $t \geq 0$ and $f \in \Bcal_b(F)$ we have
	\[\widehat P_t^Kf(z) = \widecheck\E^K_z[f(\widecheck Z_t)]; \quad \text{ for a.e. } z \in F.\]
	Using the (more than) strong Feller properties of the semigroup $(\widehat P^K_t)_{t \geq 0}$ developed in \cite[Section~2.3]{lee2020analyt} (which hold in this setting due to \cite[Remark~2.40]{lee2020analyt}) and following the proofs of \cite[Section~3]{lee2020analyt} verbatim up to and including Theorem~3.11 (which only depend on the results of Section~2.3 and 2.4) we obtain a diffusion process  $\mathbb{M} = (\Omega_\Delta, \F_\Delta, (\F_t)_{t \geq 0 }, ( Z_t)_{t \geq 0}, (\widehat \P^K_z)_{z \in F_\Delta})$ as in the statement of the lemma.
\end{proof}

\begin{lem} \label{lem:process_properties}
	The process $\mathbb{M}$ of Lemma~\ref{lem:process_existence} is strictly irreducible, recurrent and nonexplosive (i.e.\ $\zeta = \infty$, $\widehat \P^K_z$-a.s.\ for every $z \in F$). Moreover, $\mathbb{M}$ is a weak solution to \eqref{eqn:worst_case_dynamics}.
\end{lem}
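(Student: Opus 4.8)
The plan is to read the four assertions off the generalized Dirichlet form picture set up in Lemmas~\ref{lem:technical_lemma} and \ref{lem:process_existence}. The structural fact doing all the work is the divergence-free identity \eqref{eqn:beta_divergence_free}: the drift $\beta$ of \eqref{eqn:beta_def} satisfies $\int_F \beta^\top\nabla u\,p = 0$ for all $u\in C_c^\infty(F)$, so $\widehat L^K$ is a divergence-free perturbation of the symmetric operator $\widetilde L^0$ and, in particular, $\mu$ remains an invariant measure for $(\widehat P^K_t)_{t\ge0}$.

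First I would establish recurrence. Test with the cut-offs $\chi_n = \varphi_n\psi_n \in C_c^\infty(F)$ from Assumption~\ref{ass:nondegeneracy}\ref{item:test_func} (so $0\le\chi_n\le1$ and $\chi_n\to1$). Since $\tilde c = \mathrm{diag}(\tilde c_X,c_Y)$ is block diagonal and $0\le\varphi_n,\psi_n\le1$, the estimate \eqref{eqn:recurrence} carries over verbatim with $\tilde c_X$ in place of $c_X$ — this is legitimate because $\tilde c_X\in\Ccal_K$ still satisfies $\int_D \tilde c_X(x,\cdot)p(x,\cdot) = A(x)$ — and gives $\widetilde\Ecal^0(\chi_n,\chi_n)\le \int_E\nabla\varphi_n^\top A\nabla\varphi_n + \int_D\nabla\psi_n^\top B\nabla\psi_n$, which tends to $0$ by Assumption~\ref{ass:nondegeneracy}\ref{item:test_func}. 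As $\chi_n\nabla\chi_n = \tfrac12\nabla(\chi_n^2)$ and $\chi_n^2\in C_c^\infty(F)$, \eqref{eqn:beta_divergence_free} yields $\int_F\chi_n\beta^\top\nabla\chi_n\,p = 0$, so the energy of $\chi_n$ for the generalized form equals $\widetilde\Ecal^0(\chi_n,\chi_n)$ and vanishes in the limit. The recurrence criterion for a generalized Dirichlet form arising as a divergence-free perturbation of a recurrent symmetric one (\cite{gim2018recurrence}) then gives that $\mathbb{M}$ is recurrent. Recurrence forces conservativeness, hence $\zeta=\infty$ $\widehat\P^K_z$-a.s.; together with the irreducibility of $\widetilde\Ecal^0$ — which holds by \cite[Example~4.6.1]{Fukushima2011Dirichlet} since $F$ is connected and $\tilde c$ is locally uniformly elliptic, exactly as for $\Ecal^0$ in Section~\ref{sec:assumptions} — and the (more than) strong Feller property from Lemma~\ref{lem:technical_lemma}\ref{item:semigroup}, this delivers strict irreducibility of $\mathbb{M}$ and the validity of these statements for every $z\in F$, in the sense of \cite{lee2020analyt}.

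It then remains to identify $\mathbb{M}$ as a weak solution of \eqref{eqn:worst_case_dynamics}. Using \eqref{eqn:beta_def} and the symmetry of $\tilde c$, on $C_c^\infty(F)$ one has $\widehat L^K u = \widetilde L^0 u + \beta^\top\nabla u = \tfrac12\Tr(\tilde c\,\nabla^2 u) + (\tilde c_X\nabla\hat\phi)^\top\nabla_x u + (c_Y\nabla_y\hat v)^\top\nabla_y u$, the $\tilde\ell$-terms cancelling. The construction of \cite{lee2020analyt}, applied exactly as in the proof of Lemma~\ref{lem:process_existence}, shows that under $\widehat\P^K_z$ the process $Z=(X,Y)$ solves the martingale problem for this operator; testing with localized coordinate functions and their products shows $X$ and $Y$ are continuous semimartingales with drifts $\tilde c_X(Z)\nabla\hat\phi(X)$ and $c_Y(Z)\nabla_y\hat v(Z)$ and with $d\langle X\rangle = \tilde c_X(Z)\,dt$, $d\langle Y\rangle = c_Y(Z)\,dt$, $d\langle X,Y\rangle = 0$ (the vanishing cross-variation being forced by the block structure of $\tilde c$). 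Since $\tilde c$ is pointwise positive definite, setting $W = \int_0^\cdot \tilde c^{-1/2}(Z_s)\,dN_s$, with $N$ the martingale part of $Z$, produces a $(d+m)$-dimensional Brownian motion $(W^X,W^Y)$ with $N = \int_0^\cdot \tilde c^{1/2}(Z_s)\,dW_s$ by Lévy's characterization; no integrability obstruction arises since $\tilde c^{-1/2}(Z)\,d\langle Z\rangle\,\tilde c^{-1/2}(Z) = I\,dt$. Reading off the two blocks gives exactly \eqref{eqn:worst_case_dynamics}.

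I expect the delicate point to be the recurrence step: one must be certain that the recurrence criterion for \emph{symmetric} Dirichlet forms transfers to the non-symmetric generalized form $\widehat\Ecal^K$, and this transfer rests entirely on the divergence-free identity \eqref{eqn:beta_divergence_free} — hence on the PDE \eqref{eqn:v_weak_pde} for $\hat v$ and on the energy bound \eqref{eqn:y_l2_bound} that makes $\beta$ a finite-energy perturbation. The other three assertions are comparatively routine outputs of the generalized Dirichlet form machinery of \cite{lee2020analyt,stannat1999dirichlet}.
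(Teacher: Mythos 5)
Your overall architecture is the same as the paper's: recurrence via the test functions $\chi_n = \varphi_n\psi_n$ and the Gim--Trutnau criterion, conservativity from recurrence, strict irreducibility from the strong Feller/absolute-continuity properties of \cite{lee2020analyt}, and the weak-solution statement via the martingale-problem route of \cite[Chapter~3]{lee2020analyt}. The one genuine gap is in the recurrence step.

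You correctly observe that $\chi_n\nabla\chi_n = \tfrac12\nabla(\chi_n^2)$ and hence, by \eqref{eqn:beta_divergence_free}, $\int_F\chi_n\,\beta^\top\nabla\chi_n\,p = 0$, so $\widehat\Ecal^K(\chi_n,\chi_n)=\widetilde\Ecal^0(\chi_n,\chi_n)$. But this identity is not what the recurrence criterion in \cite{gim2018recurrence} asks for. The criterion the paper invokes (\cite[Remark~15 and Corollary~8(b)]{gim2018recurrence}) is formulated for non-sectorial generalized Dirichlet forms and requires, beyond $\widetilde\Ecal^0(\chi_n,\chi_n)\to0$, the \emph{absolute-value} control
\[
\int_F \bigl|\beta^\top\nabla\chi_n\bigr|\,p \;\longrightarrow\; 0,
\]
which is strictly stronger than the vanishing of the signed integral $\int_F\chi_n\,\beta^\top\nabla\chi_n\,p$. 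There is no general theorem asserting that a divergence-free perturbation of a recurrent symmetric form is itself recurrent, and your appeal to such a principle is where the argument breaks. To close the gap you must run the Cauchy--Schwarz estimate \eqref{eqn:recurrence_bound} the paper uses: bound $\bigl(\int_F|\beta^\top\nabla\chi_n|\,p\bigr)^2$ by $\widetilde\Ecal^0(\chi_n,\chi_n)$ times $\int_F (\tilde\ell - (\nabla\hat\phi,\nabla_y\hat v))^\top\tilde c\,(\tilde\ell - (\nabla\hat\phi,\nabla_y\hat v))\,p$, and then verify that the latter factor is finite using the variational optimality of $\hat\phi$ (for the $\int_E(\tfrac12 A^{-1}\div A - \nabla\hat\phi)^\top A(\cdots)$ piece), Assumption~\ref{ass:nondegeneracy}\ref{item:technical_1} (for the $\tilde\ell_X$ and $\div A^\top A^{-1}\div A$ pieces), and Lemma~\ref{lem:hat_v}\ref{item:hat_v_bound}, i.e.\ \eqref{eqn:y_l2_bound}, for the $\int_F(\ell_Y-\nabla_y\hat v)^\top c_Y(\ell_Y-\nabla_y\hat v)\,p$ piece. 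You note at the very end that \eqref{eqn:y_l2_bound} ``makes $\beta$ a finite-energy perturbation'' but never actually deploy it — that is precisely the ingredient your argument is missing. The remainder of your proof (irreducibility, conservativity, identification as a weak solution via the martingale problem and L\'evy's characterization) is in line with the paper.
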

\begin{proof}
The strict irreducibility is due to \cite[Proposition~2.39]{lee2020analyt}. To prove recurrence we use the criteria developed in \cite{gim2018recurrence} together with Assumption~\ref{ass:nondegeneracy}\ref{item:test_func}. First note that the same calculation as in \eqref{eqn:recurrence} yields that $\lim_{n \to \infty} \widetilde \Ecal^0(\chi_n,\chi_n) = 0$ where $\chi_n(x,y) = \varphi_n(x)\psi_n(y)$ and $\varphi_n,\psi_n$ are given by Assumption~\ref{ass:nondegeneracy}\ref{item:test_func}.
 Next note by Cauchy--Schwarz that

\begin{equation} \label{eqn:recurrence_bound}
	\begin{split} 
	\left(\int_{F}  \left|\beta^\top \nabla \chi_n\right| \, p\right)^2 & \leq\left(\int_F \nabla \chi_n^\top \tilde c\,  \nabla \chi_n\, p\right) \left(\int_F \left(\tilde \ell - \begin{pmatrix}
	\nabla_x \hat \phi \\ \nabla_y \hat v	\end{pmatrix}\right)^\top \tilde c\left(\tilde \ell - \begin{pmatrix}
	\nabla_x \hat \phi \\ \nabla_y \hat v 	\end{pmatrix}\right)\,p \right) \\
	& \hspace{-2.5cm} = \widetilde \Ecal^0(\chi_n,\chi_n)\left(\int_E(\frac{1}{2}A^{-1}\div A - \nabla \hat \phi)^\top A(\frac{1}{2}A^{-1}\div A - \nabla \hat \phi) - \frac{1}{4}\int_E \div A^\top A^{-1} \div A \right.\\
	& \left.\hspace{3cm}+ \int_F \tilde \ell_X\, ^\top \tilde c_X\, \tilde \ell_X\, p + \int_F (\ell_Y - \nabla_y \hat v)^\top c_Y(\ell_Y - \nabla \hat v)\, p\right).
\end{split} 
\end{equation} 
The equality followed from the fact that $\tilde c$ is block diagonal and by integrating out $y$ in the $\int_F(\tilde \ell_X - \nabla \hat \phi)^\top \tilde c_X (\tilde \ell - \nabla \hat \phi)p$ term. Since $\hat \phi$ is a minimizer for \eqref{eqn:unconstrained_min_problem}, the first integral on the right hand side is finite. As in the proof of Proposition~\ref{prop:hat_phi}, Assumption~\ref{ass:nondegeneracy}\ref{item:technical_1} implies that the second integral is finite. This, together with the definition of $\tilde c$, also implies that the third integral is finite. Finally, finiteness of the fourth integral follows from \eqref{eqn:y_l2_bound}. Consequently, the term on the left hand side of \eqref{eqn:recurrence_bound} tends to zero as $n \to \infty$. In summary, we have that
\[\lim_{n\to \infty} \widetilde \Ecal^0(\chi_n,\chi_n) + \int_F |\beta^\top \nabla \chi_n|\, p = 0. \]
\cite[Remark~15]{gim2018recurrence} together with \cite[Corollary~8(b)]{gim2018recurrence} now yield recurrence of the semigroup $(\widehat P^K_t)_{t > 0}$. Since, by Lemma~\ref{lem:process_existence}, $(\widehat P^K_t)_{t > 0}$ is the transition semigroup for the process $\mathbb{M}$ it immediately follows that the process is strictly irreducible recurrent.

 Next, note that by \cite[Corollary~20]{gim2018recurrence} $(\widehat P^K_t)_{t \geq 0}$ is conservative since it is recurrent, which yields the nonexplosiveness of $\mathbb{M}$ (see \cite[Corollary~3.23]{lee2020analyt}). The fact that $\mathbb{M}$ is a weak solution to \eqref{eqn:worst_case_dynamics} now follows via standard arguments by first connecting the process to the martingale problem for $\widehat L^K$ using \eqref{eqn:generalized_ibp} and then using the well-known equivalence between martingale problems and weak solutions to SDEs. Indeed following the proofs of \cite[Chapter~3]{lee2020analyt} verbatim from Proposition~3.12 onwards, but in our setting of a general open domain $F$ rather than all of $\R^{d+m}$, we obtain the result of Theorem~3.22(i), which establishes that $\mathbb{M}$ is a weak solution to \eqref{eqn:worst_case_dynamics}.
\end{proof}

We now establish the ergodicity of $\mathbb{M}$.
\begin{lem} \label{lem:ergodic}
	$\mu$ is an ergodic measure for $\mathbb{M}$ and \eqref{eqn:ergodic_property} holds for every locally bounded $h \in L^1(F,\mu)$. 
\end{lem}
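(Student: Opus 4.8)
The plan is to leverage the general theory of symmetric Dirichlet forms, applied to the perturbed (non-symmetric) generalized Dirichlet form associated to $\widehat L^K$, together with the recurrence and strict irreducibility of $\mathbb{M}$ already obtained in Lemma~\ref{lem:process_properties}. First I would verify that $\mu$ is an invariant (indeed sub-invariant and hence, by recurrence, invariant) measure for the semigroup $(\widehat P^K_t)_{t \ge 0}$: this is essentially the content of the divergence-free identity \eqref{eqn:beta_divergence_free}, which says precisely that $\int_F \widehat L^K u\, p = 0$ for all $u \in C_c^\infty(F)$, i.e.\ the adjoint semigroup fixes the constants, so that $\int_F \widehat P^K_t f\, d\mu = \int_F f\, d\mu$ for $f \in \Bcal_b(F)$. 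Combined with conservativeness (established in Lemma~\ref{lem:process_properties} via \cite[Corollary~20]{gim2018recurrence}), $\mu$ is a genuine invariant probability measure.

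Next I would upgrade invariance plus strict irreducibility to ergodicity of $\mu$ for $\mathbb{M}$. The key point is that a recurrent, strictly irreducible conservative diffusion with invariant probability measure $\mu$ admits no nonconstant bounded invariant functions: if $\widehat P^K_t u = u$ for all $t$ with $u \in L^\infty(F,\mu)$, then strict irreducibility forces $u$ to be $\mu$-a.e.\ constant (this is the standard consequence of irreducibility of the form/semigroup, cf.\ the reasoning behind \cite[Theorem~1.6.5]{Fukushima2011Dirichlet} adapted to the generalized Dirichlet form setting of \cite{lee2020analyt}). Equivalently, the tail $\sigma$-algebra is $\mu$-trivial, which is exactly ergodicity of the stationary process started from $\mu$.

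With ergodicity in hand, the pointwise ergodic limit \eqref{eqn:ergodic_property} follows from Birkhoff's ergodic theorem applied to the stationary process: for $h \in L^1(F,\mu)$ one gets $\frac1T\int_0^T h(Z_t)\,dt \to \int_F h\, p$, $\widehat\P^K_\mu$-a.s., where $\widehat\P^K_\mu = \int_F \widehat\P^K_z\,\mu(dz)$. The remaining work is to pass from $\mu$-almost-every starting point $z$ to \emph{every} starting point $z \in F$; here I would invoke the strong Feller property of $(\widehat P^K_t)_{t>0}$ (Lemma~\ref{lem:technical_lemma}\ref{item:semigroup}, via \cite[Section~2.3]{lee2020analyt}) together with strict irreducibility: the strong Feller property gives that $z \mapsto \widehat\E^K_z[\,\cdot\,]$ of a bounded tail-measurable functional is continuous, it is constant $\mu$-a.e., hence constant everywhere on the (connected) support $F$; combined with the $0$--$1$ law this propagates the a.s.\ convergence to all $z$. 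For the ``locally bounded $h \in L^1(F,\mu)$'' formulation one truncates: write $h = (h \wedge N) \vee (-N) + (h - (h\wedge N)\vee(-N))$ and control the tail in $L^1(\mu)$, using that the occupation measure $\frac1T\int_0^T \delta_{Z_t}\,dt$ converges to $\mu$ so the contribution of the unbounded part is uniformly small; local boundedness ensures the truncated part is handled by the bounded case.

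The main obstacle I anticipate is the transfer from $\mu$-a.e.\ starting point to every starting point, which is where the full strength of the generalized Dirichlet form machinery of \cite{lee2020analyt} (strong Feller regularization, a genuine Hunt/diffusion realization with well-defined transition functions at every point, and strict irreducibility) is needed; the purely ``analytic'' ergodicity of the form only yields statements up to $\mu$-null sets, and bridging that gap rigorously — rather than the Birkhoff step, which is routine — is the delicate part. A secondary technical point is confirming that the invariance identity $\int_F \widehat P^K_t f\, d\mu = \int_F f\, d\mu$ genuinely holds (not merely sub-invariance) before recurrence is invoked, but this is clean given \eqref{eqn:beta_divergence_free} and conservativeness.
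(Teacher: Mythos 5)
Your approach differs from the paper's in two key places, and one of them has a genuine gap.

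\textbf{Where you diverge.} For the transfer from $\mu$-a.e.\ starting point to every starting point, you propose strong Feller regularity together with connectedness of $F$: writing $g(z)=\widehat\P^K_z(\Lambda)$ for a shift-invariant event $\Lambda$, the identity $g=\widehat P^K_t g$ makes $g$ continuous, hence a continuous $\mu$-a.e.\ constant function on the connected $F$, hence constant. The paper instead follows the classical Fukushima-type shift argument: it first shows (via Da Prato--Zabczyk applied to the stochastically continuous, invariant semigroup) that $\widehat P^K_t(z,\cdot)$ is \emph{equivalent} to $\mu$ for every $z$ and $t>0$, and then uses $\theta^{-1}_{1/n}\Lambda\cap\Gamma_n\subset\Lambda$ with $\Gamma_n=\{\int_0^{1/n}h(Z_t)\,dt<\infty\}$ to pull the a.s.\ statement back from time $1/n$. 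Your route works for bounded $h$, and is a perfectly reasonable alternative there; both buy the pointwise statement, but the paper's input (absolute continuity plus equivalence of the one-step kernel with $\mu$) is stronger than the strong Feller property and dispenses with the connectedness assumption. Likewise, your ergodicity step (irreducibility forces bounded invariants constant) is a valid substitute for the paper's Da Prato--Zabczyk mixing argument.

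\textbf{The gap.} The statement covers locally bounded $h\in L^1(F,\mu)$, not just bounded $h$, and here your argument breaks down in two linked ways. First, the event $\Lambda=\{\lim_T\frac1T\int_0^T h(Z_t)\,dt=\int h\,p\}$ is \emph{not} shift-invariant for unbounded $h$: one has $\theta_s^{-1}\Lambda\cap\{\int_0^s|h(Z_t)|\,dt<\infty\}\subset\Lambda$, but not $\theta_s^{-1}\Lambda=\Lambda$, so the identity $g=\widehat P^K_t g$ underlying your strong Feller continuity claim fails. This is precisely the role of $\Gamma_n$ in the paper. Second, your proposed fix by truncation is circular: to control $\frac1T\int_0^T|h-h_N|(Z_t)\,dt$ uniformly in the starting point you appeal to convergence of the occupation measure to $\mu$ in a form strong enough to integrate an $L^1(\mu)$ function, but that is exactly what you are trying to prove. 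The monotone-convergence argument only delivers the $\liminf$ bound, not the $\limsup$. You would need to incorporate the $\Gamma_n$ device (or an equivalent repair of shift-invariance) rather than truncate.

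\textbf{A smaller point.} In the invariance step, ``the adjoint semigroup fixes the constants'' requires conservativity of the \emph{co-}semigroup $(\widehat P^K_{*,t})$, whereas Lemma~\ref{lem:process_properties} gives conservativity of $(\widehat P^K_t)$ itself. The paper explicitly constructs the dual process $\mathbb{M}_*$ from $\beta_*=-\beta$ and reruns the recurrence argument to obtain $\widehat P^K_{*,t}1=1$. Your appeal to Lemma~\ref{lem:process_properties} does give a viable route --- sub-Markovianity of the co-semigroup yields sub-invariance, and conservativity of $(\widehat P^K_t)$ then forces equality --- but that argument needs to be spelled out since it is not the one the cited conservativity directly supplies. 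You also implicitly assume stationarity of $Z$ under $\widehat\P^K_\mu$ before invoking Birkhoff; the paper verifies $\mathrm{Law}(Z_t)=\mu$ explicitly via the martingale problem.
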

\begin{proof}
	To establish that $\mu$ is an invariant measure we must show that for every $A \subset \Bcal(F)$ we have $\widehat P^K_t \mu(A) := \int_F \widehat P^K_t1_A(x)\, d\mu(x) = \mu(A)$. To this end define $\beta_* := - \beta$ and note that $\beta_* \in L^q_{\mathrm{loc}}(F;\R^{d+m})$ for any $q > d+m$ and it also satisfies \eqref{eqn:beta_condition}.
	Consequently, we obtain an operator $(\widehat L^K_*, D(\widehat L^K_*))$ and process $\mathbb{M}_*$ as in Lemmas~\ref{lem:technical_lemma} and \ref{lem:process_existence} respectively corresponding to $\beta_*$.
	Additionally the conclusions of Lemma~\ref{lem:process_properties} hold for $\mathbb{M}_*$ which, in particular, yield conservativity of the corresponding semigroup $(\widehat P^K_{*,t})_{t > 0}$. Moreover, it holds that $\widehat P^K_{*,t}$ is the adjoint operator of $\widehat P^K_t$ on $L^2(F,\mu)$ (see \cite[Remark~1.7]{stannat1999dirichlet} or \cite[Definition~2.7]{lee2020analyt}). Consequently by conservativity and the adjointness property we have for any $A \in \Bcal(F)$ and $t > 0$ that
	\[\widehat P_t^K\mu(A) = \int_F \widehat P^K_t1_A(x)\, d\mu(x) = \int_F 1_A(x)\widehat P^K_{*.t}1(x)\, d\mu(x) = \mu(A),\]
	establishing invariance.
	
	Define for $z \in F$ and $t >0$ the measures $\widehat P_t^K(z,\cdot):= \widehat \E^K_z[Z_t \in \cdot]$. Note that $(\widehat P^K_t)_{t \geq 0}$ is a stochastically continuous semigroup. Indeed, for any $z \in F$ and $r > 0$ such that $B_r(z) \subset F$ we have by right continuity that $\lim_{t \to 0}\widehat P^K_t(z,B_r(z)) = \lim_{t \to 0} \widehat \E^K_z[1_{B_r(z)}(Z_t)] = 1$. Hence by \cite[Proposition~4.1.1 and Theorem~4.2.1]{da1996ergodic} we see that $\mu$ is strongly-mixing and hence ergodic. Moreover, $\mu$ is the unique invariant measure for $(\widehat P^K_t)_{t \geq 0}$ and the measures $\widehat P^K_t(z,\cdot)$ are equivalent to $\mu$ for every $t > 0$ and $z \in F$. 
	
	Next we establish that $Z$ is a stationary process under the measure $\widehat \P^K_\mu$ given by $\widehat \P^K_\mu(A) = \int_F \widehat \P^K_z(A)\, d\mu(z)$ for $A \in \Fcal$. Indeed note that for $f \in C^2_0(F)$ we have 
	\[\widehat \E^K_\mu[f(Z_t)] = \widehat \E^K_\mu[f(Z_0)] + \int_0^t \widehat \E^K_\mu[\widehat L^Kf(Z_s)]\, ds\]  and, again by conservativity and adjointness, that 
	\[\widehat \E^K_\mu[\widehat L^Kf(Z_s)] = \int_F \widehat P^K_s\widehat L^Kf(z)\, d\mu(z) = \int_F \widehat L^Kf(z)\, d\mu(z) = 0, \]
	where the last equality followed by \eqref{eqn:widehat_L_def} and \eqref{eqn:beta_condition}. By approximation we now obtain that $\widehat \E^K_\mu[f(Z_t)] = \widehat \E^K_\mu[f(Z_0)]$ for every $f \in C_b(F)$ so that, under $\widehat \P^K_\mu$, $\mathrm{Law}(Z_t) = \mu$ for every $t \geq 0$. As a consequence we obtain (see e.g.\ \cite[Corollary~25.9]{kallen2021found}) for any measurable nonnegative $h$ that
	\begin{equation} \label{eqn:ergodic_prop_mu}
	\lim_{T \to \infty} \frac{1}{T}\int_0^T h(Z_t)\, dt = \int_F hp; \quad \widehat \P^K_\mu\text{-a.s.}
	\end{equation}
	By definition of $\widehat \P^K_\mu$ we see that the convergence in \eqref{eqn:ergodic_prop_mu} also holds $\widehat \P^K_z$-a.s.\ for almost every $z \in F$. 
	
	To obtain the ergodic property for \emph{every} $z \in F$ we fix a locally bounded $h \in L^1(F,\mu)$ and argue as in \cite[Theorem~4.7.3(iv)]{Fukushima2011Dirichlet}. We reproduce the proof of that result for the reader's convenience. Let 
	\[\Lambda = \left\{\omega \in \Omega: \lim_{T \to \infty} \frac{1}{T}\int_0^T h(Z_t(\omega))\, dt = \int_F hp\right\}.\] We have already deduced that $\widehat \P^K_z(\Lambda) = 1$ for $z \in F\setminus N$, where $N$ is some Lebesgue null set. Next define $\Gamma_n = \{\omega \in \Omega: \int_0^{1/n} h(Z_t(\omega))\, dt < \infty \}$.
	For a fixed $z \in F$, it follows that $\lim_{n \to \infty} \widehat \P^K_z(\Gamma_n) = 1$ by the right continuity of $Z_t$ and the local boundedness assumption on $h$. Moreover, it is clear that $\theta^{-1}_{1/n}\Lambda \cap \Gamma_n \subset \Lambda$ where $\theta_t:\Omega \to \Omega$ for $t > 0$ is the shift operator: $\theta_t(\omega(\cdot)) = \omega(\cdot + t)$. 
	
	Since $\widehat \P^K_z(Z_{1/n} \in N) = 0$ by the equivalence of $\widehat P^K_{1/n}(z,\cdot)$ with $\mu$ we see that 
	\[\widehat \P^K_z(\Lambda) \geq \widehat \P^K_z(\theta^{-1}_{1/n}\Lambda \cap \Gamma_n) = \widehat \E^K_z[\widehat \P^K_{Z_{1/n}}(\Lambda); \Gamma_n, Z_{1/n} \in F\setminus N] = \widehat\P^K_z(\Gamma_n).\]
	Sending $n \to \infty$ now yields that $\widehat\P^K_z(\Lambda) = 1$, which completes the proof. 
\end{proof}

\begin{proof}[Proof of Theorem~\ref{thm:worst_case_measure_existence}]
The existence of the SDE follows from Lemma~\ref{lem:process_properties} and the ergodic property \eqref{eqn:ergodic_property} follows from Lemma~\ref{lem:ergodic}. The tightness of the laws of $\{X_t\}_{t > 0}$ under $\widehat \P^K_{(x,y)}$ for every $(x,y) \in F$ are clear from the ergodic property of $\mu$ (see e.g.\  \cite[Theorem~4.2.1(i)]{da1996ergodic}) and establishes that $\widehat \P^K_{(x,y)} \in \Pi_K$.
\end{proof}

Next we prove Corollary~\ref{cor:original}.
\begin{proof}[Proof of Corollary~\ref{cor:original}] The result will follow in the same way as Theorem~\ref{thm:main} if we can construct a worst-case measure $\widehat \P \in \Pi_0$ under which $\hat \theta$ is growth-optimal and $c_X$ is the instantaneous covariation matrix for $X$. Since $A^{-1} \div A$ is a gradient of a function we see from the variational problem \eqref{eqn:unconstrained_min_problem} that $\nabla \hat \phi = \frac{1}{2} A^{-1} \div A$. Hence \eqref{eqn:grad_int_assumption} is equivalent to \eqref{eqn:ideal_bound}.

 Next we obtain a $\hat v$ satisfying the three items in Lemma~\ref{lem:hat_v} with $c_X$ replacing $\tilde c_X$. In particular, as a consequence of \eqref{eqn:ideal_bound}, we have finiteness in \eqref{eqn:y_l2_bound}. We now construct the operator $(\widehat L, D(\widehat L))$ and corresponding semigroup $\widehat P$ as in Lemma~\ref{lem:technical_lemma} with $c_X$ replacing $\tilde c_X$. We obtain a process $\mathbb{M}$ as in Lemma~\ref{lem:process_existence} corresponding to $\widehat P$. The properties of Lemma~\ref{lem:process_properties} hold for this $\mathbb{M}$ as well -- in particular we are able to get recurrence without needing a $K$-modification courtesy of \eqref{eqn:ideal_bound}. The ergodic property then follows as in Lemma~\ref{lem:ergodic}, which yields a measure $\widehat \P_{(x,y)}\in \Pi_0$ for every $(x,y) \in F$, where the dynamics of $(X,Y)$ under $\widehat\P_{(x,y)}$ are given by \eqref{eqn:worst_case_dynamics} with $c_X$ replacing $\tilde c_X$. This completes the proof.
\end{proof}

\subsection{Proofs for Section~\ref{sec:main_extended}} \label{sec:proofs_extended}
Lastly, we turn towards establishing Theorem~\ref{thm:main_extended}. Akin to \eqref{eqn:K-mod} we start by defining an explicit $K$-modification $\tilde c$ of the input matrix $c$. To this end fix a compact set $K \subset F$  and choose $K_E \subset E$ and $K_D \subset D$ compact such that $K \subset K_E \times K_D \subset F$. Let $U$ and $V$ be open sets such that $K_E \subset V \Subset E$ and $K_D \subset U \Subset D$. Next we choose $\eta_E \in C_c^\infty(E)$ and $\eta_D \in C_c^\infty(D)$ which are nonnegative and satisfy $\eta_E = 1$ on $K_E$, $\eta_E = 0$ on $E\setminus V$, $\eta_D = 1$ on $K_D$ and $\eta_D = 0$ on $D \setminus U$. Finally we define a $K$-modification of $c$ via
\begin{equation} \label{eqn:K-mod-extended}
\tilde c(x,y) := \begin{bmatrix}
\eta_E(x)c_X(x,y) + \frac{1-\eta_E(x)}{p(x,y)}\frac{A(x)}{|D|} &  \eta_D(y)c_{XY}(x,y) \\
\eta_D(y)c_{XY}^\top (x,y) & c_Y(x,y).
\end{bmatrix}
\end{equation}
It is with this matrix that we will construct a solution $\widehat \P^K_c$ to \eqref{eqn:worst_case_extended}. The first order of business is to prove Lemma~\ref{lem:hat_vc}.

\begin{proof}[Proof of Lemma~\ref{lem:hat_vc}]
To establish the result we need to show that $\int_D g^x(y)\, dy = 0$ for every $x \in E$ where
\[g^x(y) = -\mathrm{div}_x(\tilde c_X(\tilde \xi_X - \nabla \hat \phi)p)(x,y).\]
Indeed, once this is established the remainder of the proof follows in exactly the same way as the proof of Lemma~\ref{lem:hat_v} with $g^x$ replacing $f^x$. The integrability conditions of Assumption~\ref{ass:nondegeneracy_joint} replace those of Assumption~\ref{ass:nondegeneracy} to ensure that the required estimates that held for $f^x$ in the proof of Lemma~\ref{lem:hat_v} hold for $g^x$ here.

Now, to evaluate the integral of $g^x$ we first compute $\tilde \xi$ in terms of $\tilde c$ and $\tilde \ell$,
\begin{align*}
	\tilde \xi & = \begin{bmatrix} \tilde c_X^{-1} & 0 \\
	0 & \tilde c_Y^{-1}
	\end{bmatrix} \begin{bmatrix} \tilde c_X & \tilde c_{XY} \\
	\tilde c_{XY}^\top & \tilde c_Y \end{bmatrix} \tilde \ell\\
&  = \begin{bmatrix} I_d & \tilde c_X^{-1}\tilde c_{XY} \\ \tilde c_Y^{-1}\tilde c_{XY}^\top   & I_m\end{bmatrix} \tilde \ell  \\
&  = \begin{bmatrix} \tilde \ell_X + \tilde c_X^{-1}\tilde c_{XY}\tilde \ell_Y  \\
	\tilde c_Y^{-1}\tilde c_{XY}^\top \tilde  \ell_X + \tilde \ell_Y
	\end{bmatrix}. 
\end{align*}
Consequently, we see that
 \begin{align}
	g^x & = - \mathrm{div}_x(\tilde c_X(\tilde \ell_X - \nabla \hat \phi)p) - \mathrm{div}_x(\tilde c_{XY}\tilde \ell_Y p) \nonumber \\ 
	& = - \mathrm{div}_x(\tilde c_X(\tilde \ell^0_X - \nabla \hat \phi)p) - \mathrm{div}_x(\tilde c_{XY}\tilde \ell_Y p) - \mathrm{div}_x(\tilde c_X(\tilde \ell_X - \tilde \ell^0_X)p), \label{eqn:gx}
\end{align} 
where $\tilde \ell^0 = \frac{1}{2}(\tilde c^0)^{-1}\mathrm{div}\tilde c^0 + \frac{1}{2}\nabla \log p$. Since $\tilde c^0$ is block diagonal we see that $\tilde \ell^0_X = \frac{1}{2}\tilde c_X^{-1}\mathrm{div}\tilde c_X + \frac{1}{2}\nabla_x \log p$. Hence the first term on the right hand side coincides with $f^x$ and we have 
\[-\int_D \mathrm{div}_x(\tilde c_X(\tilde \ell_X^0 - \nabla \hat \phi)p)\, dy = \int_D f^x = 0.\]
Thus we just need to establish that the remaining terms integrate to zero. To this end it will be useful to define $B = \tilde c p$ and $B^0 = \tilde c^0 p$ so that the remaining terms on the right hand side of \eqref{eqn:gx} are given by the $x$-divergence of
\begin{align*} 
-B_{XY}((B^{-1}\mathrm{div} B)_Y) -B_X((B^{-1}\mathrm{div}B)_X) + \mathrm{div}(B^0_X) & = -(BB^{-1}\mathrm{div} B)_X + \mathrm{div}(B_X) \\
& = \mathrm{div}(B_X) - (\mathrm{div B})_X,
\end{align*}
where we used the fact that $B^0_X = B_X$.
Now fixing $i = 1,\dots,d$ we see that 
\[\mathrm{div}(B_X)^i -(\mathrm{div} B)^i = \sum_{j=1}^d \partial_{x_j} B_X^{ij} - \sum_{j=1}^m \partial_{y_j} B_{XY}^{ij} - \sum_{j=1}^d \partial_{x_j} B_{X}^{ij} = -\mathrm{div}_y( B_{XY}^i). \]
Hence (by interchanging $x$ and $y$ derivatives) we have that 
\[- \mathrm{div}_x(\tilde c_{XY}\tilde \ell_Y p) - \mathrm{div}_x(\tilde c_X(\tilde \ell_X - \tilde \ell^0_X)p) = - \mathrm{div}_y(\mathrm{div}_x( \tilde c_{XY}^\top p)).\]
Since this term is a $y$-divergence and $\tilde c_{XY} = 0$ on $D \setminus U$ we see by the divergence theorem that
\[\int_D (\mathrm{div}_x(\tilde c_{XY}\tilde \ell_Y p)(x,y) + \mathrm{div}_x(\tilde c_X(\tilde \ell_X - \tilde \ell^0_X)p)(x,y))\, dy = \int_D \mathrm{div}_y(\div_x(\tilde c_{XY}^\top p))(x,y)\, dy = 0,\]
which establishes that $\int_D g^x = 0$ and completes the proof.
\end{proof}

With Lemma~\ref{lem:hat_vc} established, the construction of $\widehat \P^K_c$ follows the same road map as the proof of Theorem~\ref{thm:main_extended}. As before we define the symmetric Dirichlet form $\widetilde \Ecal^0$ via \eqref{eqn:Ecal0}. Here, however, due to the different structure of the drift we do not take $\beta$ given by \eqref{eqn:beta_def} to determine the perturbation of the Dirichlet form, but rather define
\[\beta_{\tilde c}(x,y) := \tilde c^0(x,y)\begin{pmatrix}
    \nabla \hat \phi(x) \\\ \nabla_y \hat v_c(x,y)
\end{pmatrix} - \tilde c(x,y) \tilde \ell(x,y) = \begin{pmatrix}
    \tilde c_X(x,y)(\nabla \hat \phi(x) - \tilde \xi_X(x,y)) \\ \tilde c_Y(x,y)(\nabla_y \hat v_c(x,y) - \tilde \xi_Y(x,y))
\end{pmatrix}. \]
Then Lemma~\ref{lem:technical_lemma} holds with $\beta_{\tilde c}$ in place of $\beta$. Indeed, the key integration by parts calculation akin to \eqref{eqn:beta_divergence_free} yields for any $u \in C_c^\infty(F)$ that 
\begin{equation*} 
	\begin{split}
		\int_F \beta_{\tilde c}^\top \nabla u\, p & = - \int_F \div (\beta_{\tilde c} p) u  \\
		& = \int_E \int_D \left(\div_x(\tilde c_X(\tilde \xi_X - \nabla \hat\phi)p) + \div_y (c_Y(\tilde \xi_Y - \nabla_y \hat v_c)p) \right)u(x,y)\,dy\,dx \\
		&= \int_E \int_D \left(\div_x(\tilde c_X(\tilde \xi_X - \nabla \hat\phi)p)u(x,y) - (\tilde \xi_Y - \nabla_y \hat v_c)^\top c_Y\nabla_yu\, p(x,y)\right)dy\,dx  = 0,
	\end{split}
\end{equation*}
where the final equality follows since $\hat v_c$ is weak solution to \eqref{eqn:hat_vc_weak}. The construction of the corresponding diffusion as in Lemma~\ref{lem:process_existence}, the verification of its properties as in Lemma~\ref{lem:process_properties} and the ergodic property as in Lemma~\ref{lem:ergodic} then follow in exactly the same way as in the construction of $\widehat \P^K$. This establishes that the law $\widehat \P^K_c$ of \eqref{eqn:worst_case_extended} is in $\Pi_K^c$. In conjunction with the discussion of Section~\ref{sec:main_extended}, this establishes Theorem~\ref{thm:main_extended}.

\bibliographystyle{plain}
\bibliography{References2}

\end{document}